\numberwithin{equation}{section}
\newtheorem{theorem}{Theorem}[section]
\newtheorem{corollary}[theorem]{Corollary}
\newtheorem{lemma}[theorem]{Lemma}
\newtheorem{proposition}[theorem]{Proposition}
\theoremstyle{definition}
\newtheorem{definition}[theorem]{Definition}
\newtheorem{remark}[theorem]{Remark}
\newtheorem{assumption}[theorem]{Assumption}
\newtheorem{model}[theorem]{Model}
\newcommand{\ind}{1\hspace{-2.1mm}{1}}
\newcommand{\RR}{\mathbb{R}}
\newcommand{\PP}{\mathbb{P}}
\newcommand{\EE}{\mathbb{E}}
\newcommand{\Cc}{\mathcal{C}}
\newcommand{\Ii}{\mathcal{I}}
\newcommand{\Tt}{\mathcal{T}}
\newcommand{\D}{\mathrm{d}}
\newcommand{\xx}{\mathrm{x}}
\newcommand{\yy}{\mathrm{y}}
\newcommand{\zz}{\mathrm{z}}
\newcommand{\E}{\mathrm{e}}
\newcommand{\Ff}{\mathscr{F}}
\newcommand{\eps}{\varepsilon}
\newcommand{\CCT}{\Cc[T,T+\Delta]}
\newcommand{\VgT}{V^{g,T}}
\newcommand{\HgT}{\mathscr{H}^{g,T}}
\newcommand{\IgT}{\Ii^{g,T}}
\newcommand{\be}{\begin{equation}}
\newcommand{\ee}{\end{equation}}
\DeclareMathOperator*{\argmin}{\arg\!\min}
\begin{document}

\title{Asymptotics for volatility derivatives in multi-factor rough volatility models}

\author{Chloe Lacombe}
\address{Department of Mathematics, Imperial College London}
\email{chloe.lacombe14@imperial.ac.uk}

\author{Aitor Muguruza}
\address{Department of Mathematics, Imperial College London and Synergis}
\email{aitor.muguruza-gonzalez15@imperial.ac.uk}

\author{Henry Stone}
\address{Department of Mathematics, Imperial College London}
\email{henry.stone15@imperial.ac.uk}
\thanks{The authors are grateful to Antoine Jacquier, Mikko Pakkanen and Ryan McCrickerd for stimulating discussions.  AM and HS thank the EPSRC CDT in
Financial Computing and Analytics for financial support}
\date{\today}
\keywords{Rough volatility, VIX, large deviations, realised variance, small-time asymptotics, Gaussian measure, reproducing kernel Hilbert space.}
\subjclass[2010]{Primary 60F10, 60G22; Secondary 91G20, 60G15, 91G60}
\maketitle
\begin{abstract}
We study the small-time implied volatility smile for Realised Variance options,
and investigate the effect of correlation in multi-factor models on the linearity of the smile.
We also develop an approximation scheme for the Realised Variance density, allowing fast and accurate pricing of Volatility Swaps.
Additionally, we establish small-noise asymptotic behaviour of a general class of VIX options in the large strike regime.
\end{abstract}

\section{Introduction}
Following the works by Al\`os, Le\'on and Vives \cite{AlosLeon}, Gatheral, Jaisson, and Rosenbaum \cite{GJR18} and Bayer, Friz and Gatheral~\cite{BFG16}, rough volatility is becoming a new breed in financial modelling by generalising Bergomi's `second generation' stochastic volatility models to a non-Markovian setting. The most basic form of (lognormal) rough volatility model is the so-called rough Bergomi model introduced in~\cite{BFG16}.
Gassiat \cite{Gas18} recently proved that such a model (under certain correlation regimes) generates true martingales for the spot process. The lack of Markovianity imposes numerous fundamental theoretical questions and practical challenges in order to make rough volatility usable in an industrial environment.
On the theoretical side, Jacquier, Pakkanen, and Stone \cite{JPS18} prove a large deviations principle for a rescaled version of the log stock price process.
In this same direction, Bayer, Friz, Gulisashvili, Horvath and Stemper \cite{BFGHS},
Forde and Zhang \cite{FZ17},
Horvath, Jacquier and Lacombe \cite{HJL18} and most recently Friz, Gassiat and Pigato \cite{FGP18} (to name a few) prove large deviations principles for a wide range of rough volatility models. On the practical side, competitive simulation methods are developed in Bennedsen, Lunde and Pakkanen \cite{BLP15}, Horvath, Jacquier and Muguruza \cite{HJM17} and McCrickerd and Pakkanen \cite{MP18}. Moreover, recent developments by Stone \cite{Sto18} and Horvath, Muguruza and Tomas \cite{HMT19} allow the use of neural networks for calibration; their calibration schemes are considerably faster and more accurate than existing methods for rough volatility models.

Crucially, the lack of a pricing PDE imposes a fundamental constraint on the comprehension and interpretation of rough volatility models driven, by Volterra-like Gaussian processes. The only current exception in the rough volatility literature is the rough Heston model, developed by El Euch and Rosenbaum \cite{ER18,ER19}, which allows a better understanding through the fractional PDE derived in \cite{ER19}. Nevertheless, in this work our attention is turned to the class of models for which such a pricing PDE is unknown, and hence further theoretical results are required.

 Perhaps, options on volatility itself are the most natural object to first analyse within the class of rough volatility models. In this direction,
Jacquier, Martini, and Muguruza \cite{JMM18} provide algorithms for pricing VIX options and futures. Horvath, Jacquier and Tankov \cite{HJT} further study VIX smiles in the presence of stochastic volatility of volatility combined with rough volatility.
Nevertheless, the precise effect of model parameters (with particular interest in the Hurst parameter effect) on implied volatility smiles for VIX (or volatility derivatives in general) has not been studied until very recently in Al\`os, Garc\'ia-Lorite and Muguruza \cite{AGM18}.

The main focus of the paper is to derive the small-time behaviour of the realised variance process of the rough Bergomi model, as well as related but more complicated multi-factor rough volatility models, together with the small-time behaviour of options on realised variance.
These results, which are interesting from a theoretical perspective, have practical applicability to the quantitative finance industry as they allow practitioners to better understand the Realised Variance smile, as well as
the effect of correlation on the smile's linearity (or possibly convexity).
To the best of our knowledge, this is the first paper to study the small-time behaviour of options on realised variance.
An additional major contribution of the paper is the numerical scheme used to compute the implied volatility smiles, using an accurate approximation of the rate function from a large deviations principle. In general rate functions are highly non-trivial to compute; our method is simple, intuitive, and accurate.
The numerical methods are publicly available on \href{https://github.com/amuguruza/LDP-VolOptions}{GitHub: LDP-VolOptions}.

Volatility options are becoming increasingly popular in the financial industry. For instance, VIX options' liquidity has consistently increased since its creation by the Chicago Board of Exchange (CBOE). One of the main popularity drivers is that volatility tends to be negatively correlated with the underlying dynamics, making it desirable for portfolio diversification.
Due to the appealing nature of volatility options, their modelling has attracted the attention of many academics such as Carr, Geman, Madan and Yor \cite{CGMY05}, Carr and Lee \cite {CL09} to name a few.

For a log stock price process $X$ defined as
$ X_t  =  - \frac{1}{2} \int_0^t v_s \D s + \int_0^t \sqrt{ v_s } \D B_s,  X_0=0,
$
where $B$ is standard Brownian motion, we denote the quadratic variation of $X$ at time $t$ by $\langle X \rangle_t$.
Then, the core object to analyse in this setting is the realised variance option with payoff
\begin{equation}\label{eq:RVOption}
\left(\frac{1}{T}\int_0^T \D \langle X \rangle_s -K\right)^+,
\end{equation}
which in turn defines the risk neutral density of the realised variance.  
In this work, we analyse the short time behaviour of the implied volatility given by \eqref{eq:RVOption} for a number of (rough) stochastic volatility models by means of large deviation techniques.  
We specifically focus on the construction of correlated factors and their effect on the distribution of the realised variance.
We find our results consistent with that of Al\`os, Garc\'ia-Lorite and Muguruza \cite{AGM18}, which also help us characterise in close-form the implied volatility around the money. Moreover, we also obtain some asymptotic results for VIX options.

While implied volatilities for options on equities are typically convex functions of log-moneyness, giving them their ``smile'' moniker, implied volatility smiles for options on realised variance tend to be linear.
Options on integrated variance are OTC products, and so their implied volatility smiles are not publicly available. VIX smiles are, however, and provide a good proxy for integrated variance smiles; see Figure \ref{VIX plots} below for evidence of their linearity.
The data also indicates both a power-law term structure ATM and its skew.

\begin{figure}[h!]
\hspace*{-20mm}
\includegraphics[scale=0.45]{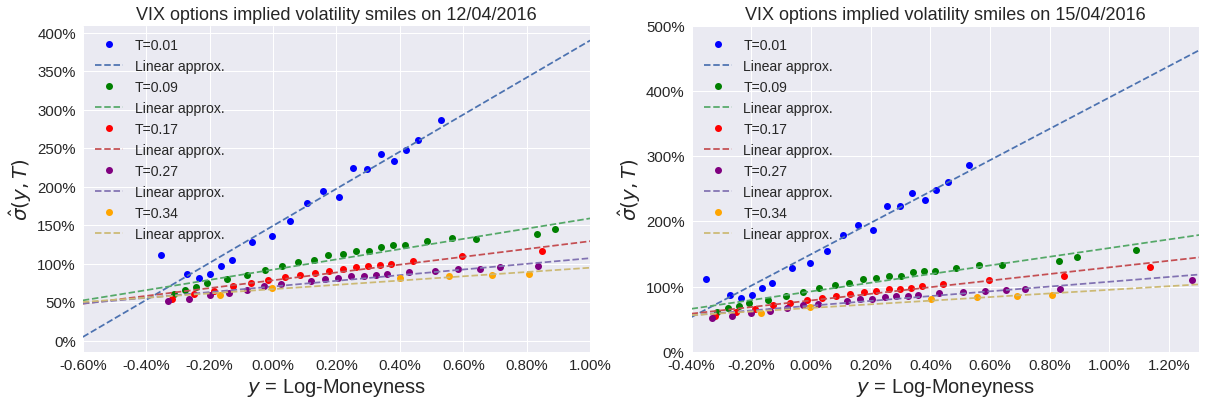}
\caption{ Implied volatility smiles for Call options on VIX for small maturities, close to the money.
Data provided by OptionMetrics.
}
\label{VIX plots}
\end{figure}

In spite of most of the literature agreeing on the fact that more than a single factor is needed to model volatility (see Bergomi's \cite{Ber16} two-factor model, Avellaneda and Papanicolaou \cite{AP18} or Horvath, Jacquier and Tankov \cite{HJT} for instance), there is no in-depth analysis on how to construct these (correlated) factors, nor the effect of correlation on the price of volatility derivatives and their corresponding implied volatility smiles. Our aim is to understand multi-factor models and analyse the effect of factors in implied volatility smiles.
This paper, to the best of our knowledge is the first to address such questions, which are of great interest to practitioners in the quantitative finance industry; it is also the first to provide a rigorous mathematical analysis of the small-time behaviour of options on integrated variance in rough volatility models.

The structure of the paper is as follows. Section \ref{sec:models} introduces the models, the rough Bergomi model and two closely related processes,
whose small-time realised variance behaviour we study; the main results are given in Section \ref{section: options on int variance}.
Section \ref{sec: numerics} is dedicated to numerical examples of the results attained in Section \ref{section: options on int variance}.
In Section \ref{section: VIX options} we introduce a general variance process, which includes the rough Bergomi model for a specific choice of kernel, and briefly investigate the small-noise behaviour of VIX options in this general setting.
Motivated by the numerical examples in Section \ref{sec: numerics}, we propose a simple and very feasible approximation for the density of the realised variance for the mixed rough Bergomi model (see \eqref{eq:mixedvarianceProcess}) in Appendix \ref{app:linear approximation}.
The proofs of the main results are given in Appendix \ref{app: proof of main results}; the details of the numerics are given in Appendix \ref{appendix: numerics}.

\textbf{Notations:}
Let $\RR_+ := [0,+\infty)$ and $\RR^*_+ := (0,+\infty)$.
For some index set~$\Tt \subseteq \RR_+$, the notation $L^2(\Tt)$ denotes the space of real-valued square integrable functions on~$\Tt$,
and $ \Cc(\Tt, \RR^d)$ the space of $\RR^d$-valued continuous functions on~$\Tt$.
 $\mathcal{E}$ denotes the Wick stochastic exponential.

\section{A showcase of rough volatility models}\label{sec:models}
In this section we introduce the models that will be considered in the forthcoming computations.
We shall always work on a given filtered probability space $(\Omega,(\Ff_t)_{t\geq 0},\mathbb{P})$. For notational convenience, we introduce
\begin{equation}\label{eq:SDEZ}
Z_t := \int_0^t K_\alpha(s,t)\D W_s,
\qquad\text{for any }t \in\Tt,
\end{equation}
where $\alpha \in \left(-\frac{1}{2},0\right)$, $W$ a standard Brownian motion,
and where the kernel
$K_{\alpha}:\RR_+\times\RR_+ \to \RR_+$ reads
\begin{equation}\label{eq:K}
K_{\alpha}(s,t) := \eta \sqrt{2\alpha + 1}(t-s)^{\alpha},
\qquad \text{for all } 0\leq s<t,
\end{equation}
for some strictly positive constant~$\eta$.
Note that, for any $t\geq 0$, the map $s\mapsto K_\alpha(s,t)$ belongs to~$L^2(\Tt)$,
so that the stochastic integral~\eqref{eq:SDEZ} is well defined. We also define an analogous multi-dimensional version of \eqref{eq:SDEZ} by

\begin{equation}\label{eq:nDimensionalSDEZ} \mathcal{Z}_t:=\left(
\int_0^t K_\alpha(s,t)\D W^1_s, ..., \int_0^t K_\alpha(s,t)\D W^m_s
\right)
:=
\left(
\mathcal{Z}^1_t, ..., \mathcal{Z}_t^m
\right), \qquad\text{for any }t \in\Tt,
\end{equation}
where $W^1,...,W^m$ are independent Brownian motions.

\begin{model}[Rough Bergomi]
The rough Bergomi model, where $X$ is the log stock price process and $v$ is the instantaneous variance process, is then defined (see \cite{BFG16}) as
\begin{equation}\label{rough Bergomi}
\begin{array}{rll}
X_t & = \displaystyle - \frac{1}{2} \int_0^t v_s \D s + \int_0^t \sqrt{ v_s } \D B_s,
 \quad &  X_0 = 0 , \\
v_t &= \displaystyle v_0
\exp \left( Z_t -\frac{\eta^2}{2}t^{2 \alpha +1}  \right),
\quad & v_0 > 0,
\end{array}
\end{equation}
where the Brownian motion~$B$ is defined as $B := \rho W + \sqrt{1-\rho^2}W^\perp$ for $\rho \in [-1,1]$
and some standard Brownian motion~$W^\perp$ independent of $W$.
\end{model}

\begin{remark}\label{remark: H and alpha}
The process $\log v$ has a modification whose sample paths are almost surely locally $\gamma$-H\"older continuous,
for all $\gamma \in \left(0, \alpha + \frac{1}{2} \right)$ \cite[Proposition 2.2]{JPS18}.
For rough volatility models involving a fractional Brownian motion, the sample path regularity of the log volatility process is referred to in terms of the Hurst parameter $H$; recall that the fractional Brownian motion has sample paths that are $\gamma$-H\"older continuous
for any $\gamma \in (0,H)$~\cite[Theorem 1.6.1]{BHOZ08}.
By identification, therefore, we have that $\alpha= H - 1/2$.
\end{remark}

\begin{model}[Mixed rough Bergomi]
The mixed rough Bergomi model is given in terms of log stock price process $X$ and instantaneous variance process $v^{(\gamma,\nu)}$ as
 \begin{equation}\label{eq:mixedvarianceProcess}
\begin{array}{rll}
X_t & = \displaystyle - \frac{1}{2} \int_0^t v_s^{(\gamma,\nu)} \D s + \int_0^t \sqrt{ v_s^{(\gamma,\nu)} } \D B_s,
 \quad &  X_0 = 0 , \\v_t^{(\gamma,\nu)}
 &= v_0 \sum_{i=1}^n \gamma_i \exp\left( \frac{\nu_i}{\eta}Z_t - \frac{\nu_i^2}{2}t^{2\alpha +1}\right) ,\quad & v_0>0
 \end{array}
\end{equation}
where  $\gamma:=(\gamma_1,...,\gamma_n)\in[0,1]^n$ such that $\sum_{i=1}^n \gamma_i =1$ and $\nu:=(\nu_1,...,\nu_n)\in\mathbb{R}^n$, such that $ 0<\nu_1<...<\nu_n$.
\end{model}
The above modification of the rough Bergomi model, inspired by Bergomi \cite{Bergomi3}, allows a bigger slope (hence bigger skew) on the implied volatility of variance/volatility options to be created, whilst maintaining a tractable instantaneous variance form. This will be made precise in Section \ref{section: numerics mixed case}.
\begin{model}[Mixed multi-factor rough Bergomi]
The mixed rough Bergomi model is given in terms of log stock price process $X$ and instantaneous variance process $v^{(\gamma,\nu,\Sigma)}$ as
\begin{equation}\label{eq:mixedvarianceMultiFactorProcess}
\begin{array}{rll}
X_t & = \displaystyle - \frac{1}{2} \int_0^t v_s^{(\gamma,\nu,\Sigma)} \D s + \int_0^t \sqrt{ v_s^{(\gamma,\nu,\Sigma)} } \D B_s,
 \quad &  X_0 = 0 , \\v_t^{(\gamma,\nu,\Sigma)}&= v_0 \sum_{i=1}^{n} \gamma_i \mathcal{E}\left(\frac{\nu^i}{\eta}\cdot\mathrm{L}_i \mathcal{Z}_t\right) ,\quad & v_0>0,
 \end{array}
\end{equation}
where $\gamma:=(\gamma_1,...,\gamma_n)\in[0,1]^n$ such that $\sum_{i=1}^n \gamma_i =1$. The vector $\nu^i=(\nu^i_1,...,\nu^i_m)\in\mathbb{R}^m$
satisfies $0<\nu^i_1<...<\nu^i_m$ for all $i\in\{1,...,n\}$.
In addition, $\mathrm{L}_i\in\mathbb{R}^{m\times  m}$ is a lower triangular matrix such that $\mathrm{L}_i \mathrm{L}^T_i =: \Sigma_i$ is a positive definite matrix for all $i\in\{1,...,n\}$, denoting the covariance matrix.
\end{model}

For all results involving models \eqref{rough Bergomi}, \eqref{eq:mixedvarianceProcess}, and \eqref{eq:mixedvarianceMultiFactorProcess} we fix $\Tt=[0,1]$; minor adjustments to the proofs yield analogous results for more general $\Tt$.
We additionally define $\beta:=2\alpha+1\in(0,1)$  for notational convenience.

\begin{remark}
In models \eqref{rough Bergomi}-\eqref{eq:mixedvarianceMultiFactorProcess} we have considered a flat or constant initial forward variance curve $v_0>0$. However, our framework can be easily extended to functional forms $v_0(\cdot): \Tt \mapsto \RR_+$ via the Contraction Principle, see Appendix \ref{appendix: contraction princ}, as long as the mapping is continuous.  
\end{remark}
\begin{remark}
The reader may already have realised that the mixed multi-factor rough Bergomi defined in \eqref{eq:mixedvarianceMultiFactorProcess} is indeed general enough to cover both \eqref{rough Bergomi} and \eqref{eq:mixedvarianceProcess}.
However, we provide our theoretical results in an orderly fashion starting from  \eqref{rough Bergomi} and finishing with \eqref{eq:mixedvarianceMultiFactorProcess}, which we find the most natural way to increase the complexity of the model.
\end{remark}

In place of $K_\alpha$ in \eqref{eq:SDEZ}, one may also consider more general kernels of the form
$$G(t,s):=K_{\alpha}(t,s) L(t-s)$$
where $L\in\mathcal{C}(0,\infty)$ is a measurable function such that the stochastic integral is well defined, $L(\cdot) $ is decreasing and $L(\cdot)$ continuous at $0$. We note that under such conditions, $L(\cdot)$ is also slowly varying at zero, i.e. $\lim_{x\to 0} \frac{L(tx)}{L(x)}=1$ for any $t>0$. 
Such kernels are naturally related to the class of Truncated Brownian Semistationary ($\mathcal{TBSS}$) processes introduced by Barndorff-Nielsen and Schmiegel \cite{turbulence}. Examples include the Gamma and Power-law kernels:
\begin{eqnarray*}
L_{\text{Gamma}}(t-s)&=\exp(-\kappa(t-s)), &\kappa>0\\
L_{\text{Power}}(t-s)&=(1+t-s)^{\zeta-\alpha}, &\zeta<-1.
\end{eqnarray*}
The following result gives the exponential equivalence between the sequences of the rescaled stochastic integrals of
$K_\alpha$ and $G$, thus it is completely justified to only consider the case $K_{\alpha}$, without any loss of generality.

\begin{proposition}\label{pp:exp_equi L}
The sequences of processes
$\left(\eps^{\beta / 2 }\int_0^\cdot K_\alpha (\cdot,s) L(\eps(\cdot-s))\D W_s\right)_{\eps>0}$ and $\left(\eps^{\beta / 2}Z_\cdot\right)_{\eps>0}$ are exponentially equivalent.
\end{proposition}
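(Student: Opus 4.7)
The natural large deviations speed for the sequence $(\eps^{\beta/2} Z_\cdot)_{\eps > 0}$ in $\Cc([0,1],\RR)$ is $\eps^\beta$, so exponential equivalence amounts to showing that the difference
$$D^\eps_t \;:=\; \eps^{\beta/2}\int_0^t K_\alpha(s,t)\bigl[L(\eps(t-s)) - L(0^+)\bigr]\, \D W_s$$
satisfies, for every $\delta > 0$,
$$\limsup_{\eps \downarrow 0}\,\eps^\beta\, \log \PP\Bigl(\sup_{t \in [0,1]}|D^\eps_t| > \delta\Bigr) \;=\; -\infty.$$
We may take $L(0^+) = 1$ without loss of generality, absorbing any multiplicative constant into $\eta$; by continuity at $0$ this limit exists. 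Since $D^\eps$ is a centered, continuous Gaussian process on the compact interval $[0,1]$, the Borell--TIS inequality gives
$$\PP\Bigl(\sup_{t\in[0,1]}|D^\eps_t| > \delta\Bigr) \;\leq\; 2\exp\!\left(-\frac{(\delta - m_\eps)_+^2}{2\,\sigma_\eps^2}\right),$$
with $\sigma_\eps^2 := \sup_{t \in [0,1]} \Var(D^\eps_t)$ and $m_\eps := \EE\sup_{t \in [0,1]} |D^\eps_t|$.

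The key quantitative ingredients are bounds on $\sigma_\eps^2$ and $m_\eps$ in terms of the modulus
$$\omega_L(\eps) \;:=\; \sup_{v \in (0,\eps]} |L(v) - 1|,$$
which tends to $0$ as $\eps \downarrow 0$ by continuity of $L$ at $0$. For the variance, It\^o isometry together with the change of variables $u = t - s$ and then $v = \eps u$ yields
$$\Var(D^\eps_t) \;=\; \eta^2 \beta \int_0^{\eps t} v^{2\alpha}\bigl(L(v) - 1\bigr)^2 \D v \;\leq\; \eta^2\, \omega_L(\eps)^2\, \eps^\beta,$$
using that $L$ is decreasing with $L(0^+) = 1$, so $|L - 1|$ is monotone on a neighbourhood of $0$. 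Note the delicate cancellation here: near the diagonal $s \uparrow t$ the bracket $L(\eps(t-s))-1$ vanishes precisely where the singular factor $(t-s)^\alpha$ blows up, and the rescaling $v = \eps u$ converts the prefactor $\eps^\beta$ into the area of integration $[0,\eps t]$, which is then responsible for the smallness.

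For $m_\eps$, I would establish the incremental bound
$$\Var(D^\eps_t - D^\eps_s) \;\leq\; C\, \omega_L(\eps)^2\, |t - s|^\beta, \qquad 0 \leq s < t \leq 1,$$
with $C$ independent of $\eps$, by splitting the Gaussian increment into the piece arising from the difference $K_\alpha(u,t)-K_\alpha(u,s)$ on $u\in[0,s]$ and the piece from the ``new'' integration on $u\in(s,t)$; each piece is handled by the standard singular-kernel estimates for $(t - u)^\alpha - (s - u)^\alpha$ together with the monotonicity of $1-L$ near zero. A Dudley entropy bound (equivalently, Gaussian hypercontractivity coupled with Kolmogorov's continuity criterion applied to $D^\eps/(\omega_L(\eps)\,\eps^{\beta/2})$, whose canonical metric is bounded by $|t-s|^{\beta/2}$ independently of $\eps$) then yields $m_\eps \leq C'\, \omega_L(\eps)\, \eps^{\beta/2}$ with $C'$ independent of $\eps$.

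Plugging everything into Borell--TIS and taking $\eps$ small enough that $m_\eps < \delta/2$,
$$\eps^\beta\, \log \PP\Bigl(\sup_{t \in [0,1]}|D^\eps_t| > \delta\Bigr) \;\leq\; \eps^\beta \log 2 \;-\; \frac{\delta^2}{8\eta^2\,\omega_L(\eps)^2},$$
whose right-hand side diverges to $-\infty$ since $\omega_L(\eps) \to 0$, concluding the proof. The main obstacle is the uniform-in-$\eps$ incremental variance bound above: the $\omega_L(\eps)$ factor must survive the splitting, which requires tracking how the $L$-modulus interacts with each of the two singular contributions from $K_\alpha$; once this is in hand the rest is an application of standard Gaussian isoperimetry.
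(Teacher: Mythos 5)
Your proof is correct in its main ideas but takes a genuinely different route from the paper's, and is in several respects more rigorous. The paper's proof asserts an \emph{equality} between the tail of $\sup_{t\in[0,1]}|D^\eps_t|$ and the tail of a single scaled Gaussian $|\mathcal{N}(0,1)|\cdot\eps^{\beta/2}\|V_\eps\|_\infty$; this is not justified (and is false for a genuine Gaussian process with nonconstant variance). The paper also invokes Potter bounds in a form that amounts to $1-L(\eps x)\le K_\xi(\eps x)^\xi$, which does not follow from slow variation of $L$ at $0$: slow variation constrains ratios $L(x)/L(y)$, not the rate at which $1-L(x)$ tends to $0$ (e.g.\ $L(x)=1-1/\log(1/x)$ is slowly varying and decreasing with $L(0^+)=1$ but has no polynomial rate). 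Your route replaces both of these with tools that work under the stated hypotheses alone: Borell--TIS is the correct way to control the sup, and replacing the Potter bound by the modulus $\omega_L(\eps)=\sup_{v\le\eps}|L(v)-1|\to0$ uses only continuity of $L$ at $0$. Both proofs hinge on the same core observation — that after the substitution $v=\eps(t-s)$ the prefactor $\eps^\beta$ is absorbed into the integration domain, so the variance is $O(\omega_L(\eps)^2\eps^\beta)$ — but your framing makes the cancellation explicit.

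Two minor issues in your write-up, neither of which undermines the approach. First, your displayed incremental bound $\Var(D^\eps_t-D^\eps_s)\le C\,\omega_L(\eps)^2|t-s|^\beta$ drops the factor $\eps^\beta$ that the calculation (mirroring the one you did for the pointwise variance) actually produces; the sharper form $\le C\,\omega_L(\eps)^2\eps^\beta|t-s|^\beta$ is what delivers $m_\eps\le C'\omega_L(\eps)\eps^{\beta/2}$, though either version suffices since you only need $m_\eps\to0$ to get $m_\eps<\delta/2$ eventually. Second, you leave the ``main obstacle'' — the two-piece splitting of the increment and the control of $(t-u)^\alpha(1-L(\eps(t-u)))-(s-u)^\alpha(1-L(\eps(s-u)))$ — as a sketch, which is honest but would need to be carried out; the product $r\mapsto r^\alpha(1-L(\eps r))$ is not monotone in $r$ for $\alpha<0$, so this step requires a genuine estimate. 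Finally, you correctly flag that the statement implicitly requires $L(0^+)=1$ (or absorbing the constant into $\eta$); the paper does not state this but uses it throughout.
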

\begin{proof}
As $L$ is a slowly varying function, the so-called Potter bounds \cite[Theorem 1.5.6, page 25]{Bingham} hold on the interval $(0,1]$: indeed, for all $\xi >0$, there exist constants $0< \underline{C}^\xi \le \overline{C}^\xi $ such that
$$
\underline{C}^\xi (\eps x)^\xi < L(\eps x) < \overline{C}^\xi (\eps x)^\xi, \quad \text{for all } (\eps x)\in (0,1].
$$
In particular, for $\eps >0$ such that $(\eps x)\in[0,1]$,
$L(\eps x) -1 \le K_\xi (\eps x)^\xi$ where $K_\xi < \infty$ as $\xi >0$.
Thus, for all $\delta >0$,
$$
\begin{aligned}\displaystyle
\PP\left(\left\| \eps^{\beta / 2} \int_0^\cdot K_\alpha (\cdot, s)\left[L(\eps(\cdot -s))-1 \right] \D W_s \right\|_{\infty} \!\!\!\!  > \delta \right)
&= \PP \left( |\mathcal{N}(0,1)| > \frac{\delta}{\eps^{\beta / 2} \left\| V_\eps \right\|_{\infty}} \right) \\
&= \sqrt{\frac{2}{\pi}} \frac{\eps^{\beta / 2} \left\| V_\eps \right\|_{\infty}}{\delta} \exp \left( \frac{-\delta^2}{2\eps^{\beta} \left\| V_\eps \right\|^2_{\infty}}\right)(1+\mathcal{O}(\eps^{\beta})),
\end{aligned}
$$
where the final equality follows by using the asymptotic expansion of the Gaussian density near infinity~\cite[Formula (26.2.12)]{AS72}, and $V^2_\eps (t) := \int_0^t K^2_\alpha (t,s) \left[ L(\eps (t-s)) - 1\right]^2 \D s$,
Then,
$$
V^2_\eps (t)
\le K^2_\xi \eps^{2\xi} \eta^2 (2\alpha +1) \int_0^t (t-s)^{2(\alpha + \xi)} \D s
=  \left(K_\xi \eta\right)^2 \frac{2\alpha +1}{2(\alpha + \xi)+1} \eps^{2\xi} t^{2(\alpha + \xi) +1},
$$
so that
$\lim_{\eps \downarrow 0} \left\| V^2_\eps \right\|_\infty =0$, as well as $\lim_{\eps \downarrow 0} \left\| V_\eps \right\|_\infty =0$.
Therefore
\begin{align*}
\eps^\beta \log \PP \left( \left\| \eps^{\beta / 2} \int_0^\cdot K_\alpha(\cdot, s) \left[ L(\eps(\cdot -s)) -1 \right] \D W_s \right\|_\infty > \delta \right)
& \le
\frac{\eps^\beta}{2} \log \left(\frac{2}{\pi}\right)
+ \eps^\beta \log \left( \frac{\eps^{\beta / 2} \left\| V_\eps \right\|_\infty}{\delta}\right) \\
&- \frac{\delta^2}{2\left\|V_\eps \right\|^2_\infty}
+ \eps^\beta \log \left(1+\mathcal{O})\eps^\beta \right).
\end{align*}
As $\eps$ tends to zero, the first  and second terms in the above inequality tend to zero (recall that $0<\beta<1$),
and the third term tends to $-\infty$.
Hence for all $\delta >0$,
$$
\limsup_{\eps \downarrow 0} \eps^\beta \log \PP \left( \left\| \eps^{\beta / 2} \int_0^\cdot K_\alpha(\cdot, s) \left[ L(\eps(\cdot -s)) -1 \right] \D W_s \right\|_\infty > \delta \right) =- \infty,
$$
and thus the two processes are exponentially equivalent \cite[Definition 4.2.10]{DZ10}; see Appendix \ref{appendix: contraction princ}.
\end{proof}

\begin{corollary} The sequences of processes
$(\eps^{\beta/2} \int_0^t \E^{-\kappa_i \eps(t-s)} K_\alpha(s,t)\D W^i)_{\eps >0}$ and~$(\eps^{\beta/2} \int_0^t K_\alpha(s,t)\D W^i)_{\eps>0}$ are exponentially equivalent for $i=1,...,m$, where each $\kappa_i>0$.
\end{corollary}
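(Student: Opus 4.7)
The plan is to recognise that this corollary is a direct specialisation of Proposition \ref{pp:exp_equi L} to the particular choice $L(x) := \E^{-\kappa_i x}$, which is precisely the Gamma kernel $L_{\mathrm{Gamma}}$ listed just before the proposition. So all the work is done by the proposition, and the proof reduces to verifying that this $L$ satisfies its hypotheses.

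First I would check the structural assumptions on $L$: the map $L$ is continuous on $[0,\infty)$ and in particular continuous at $0$ with $L(0)=1$; it is decreasing because $\kappa_i>0$; and since $|L|\le 1$ on $\RR_+$, the integrand $K_\alpha(s,t)L(\eps(t-s))$ remains in $L^2((0,t))$ for each $t$, so the stochastic integral is well defined. The slow variation at zero is immediate:
\begin{equation*}
\lim_{x\downarrow 0}\frac{L(tx)}{L(x)} \;=\; \lim_{x\downarrow 0}\E^{-\kappa_i x(t-1)} \;=\; 1, \qquad \text{for every } t>0,
\end{equation*}
so the Potter-bound argument used inside the proof of Proposition \ref{pp:exp_equi L} applies verbatim.

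Second, once these hypotheses are verified for a single Brownian motion $W$, I would apply Proposition \ref{pp:exp_equi L} separately to each pair $(W^i,\kappa_i)$ for $i=1,\dots,m$. Since the independence of the $W^i$ plays no role in the statement (exponential equivalence is a marginal statement about each sequence individually), this immediately yields the claimed exponential equivalence for every index $i$.

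I do not anticipate any genuine obstacle: the only cosmetic point to be careful about is that the proposition is stated with the integrand $K_\alpha(\cdot,s)L(\eps(\cdot-s))$ while the corollary writes $\E^{-\kappa_i\eps(t-s)}K_\alpha(s,t)$, but by the definition \eqref{eq:K} of $K_\alpha$ (which depends only on $t-s$ on $\{s<t\}$) these are the same object. Thus the proof is essentially a one-line invocation of the preceding proposition.
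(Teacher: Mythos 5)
Your proof is correct, and it takes a genuinely different (and cleaner) route from the paper. The paper does \emph{not} invoke Proposition~\ref{pp:exp_equi L} as a black box; instead it redoes the exponential-equivalence estimate from scratch for the specific choice $L(x)=\E^{-\kappa_j x}$: it expands
$V^2_\eps(t)=\int_0^t\bigl[\E^{-\kappa_j\eps(t-s)}-1\bigr]^2K^2_\alpha(s,t)\,\D s$
by squaring out the bracket, bounds each term by elementary means to obtain $0<\eps^\beta V^2_\eps\le 2\eta^2(\eps t)^\beta$, and then feeds this into the same Gaussian tail expansion as in the proposition. What you do instead is observe that $L_{\mathrm{Gamma}}$ is continuous on $[0,\infty)$, decreasing (since $\kappa_i>0$), continuous at $0$ with $L(0)=1$, and hence slowly varying at $0$, so Proposition~\ref{pp:exp_equi L} applies directly, once for each $i$. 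Your approach is shorter and makes the logical dependence on Proposition~\ref{pp:exp_equi L} explicit. The one thing the paper's direct computation buys in exchange for the extra length is that it is entirely self-contained and does not inherit the Potter-bound step of Proposition~\ref{pp:exp_equi L}'s proof, which is stated rather loosely there (the asserted two-sided inequality $\underline{C}^\xi(\eps x)^\xi<L(\eps x)<\overline{C}^\xi(\eps x)^\xi$ cannot hold as written for a slowly varying $L$ with $L(0)=1$; what is actually being used is the one-sided estimate $1-L(x)\lesssim x^\xi$ near $0$, which for $L_{\mathrm{Gamma}}$ is immediate from $1-\E^{-\kappa x}\le\kappa x$). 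Since you only need the proposition for this particular $L$, and for that $L$ the required bound is trivially true, your argument is sound.
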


\begin{proof}
The proof is similar to the proof of Proposition~\ref{pp:exp_equi L}. The variance in the asymptotic expansion of the Gaussian density near infinity~\cite[Formula (26.2.12)]{AS72} is defined as
$$
V^2_\eps(t) := \int_0^t \left[ \E^{\kappa_j \eps (t -s)} -1 \right]^2 K^2_\alpha (s,t) \D s =
\eta^2 \beta
\left[ \frac{t^\beta}{\beta} -2\int_0^t \E^{-\kappa_j \eps (t-s)} (t-s)^{2\alpha} \D s
+ \int_0^t \E^{-2\kappa_j \eps (t-s)} (t-s)^{2\alpha} \D s\right],
$$
such that
$$
0
<\eps^\beta V^2_\eps
\le \eta^2 \beta \eps^\beta
\left[ \frac{t^\beta}{\beta}
+ \int_0^t  \E^{-2\kappa_j \eps (t-s)} (t-s)^{2\alpha} \D s \right]
\le 2 \eta^2 (\eps t)^\beta ,
$$
and therefore $\lim_{\eps \downarrow 0} V^2_\eps = 0$ and $\lim_{\eps \downarrow 0} \eps^\beta \left\|V_\eps \right\|^2_\infty = 0$. Then, for all $\delta >0$,
$$
\limsup_{\eps \downarrow 0}
\eps^\beta \log \mathbb{P} \left( \left\|
\eps^{\beta/2} \int_0^{\cdot} K_{\alpha}(s,\cdot) \left[ \exp (-\kappa_j \eps (\cdot-s))-1\right]\D W^j_s\right\|_{\infty} > \delta \right)= -\infty,
$$
ans thus the two processes are exponentially equivalent \cite[Definition 4.2.10]{DZ10}.
\end{proof}

\section{Small-time results for options on integrated variance}\label{section: options on int variance}
We start our theoretical analysis by considering options on realised variance, which we also refer to as integrated variance and RV interchangeably.
We recall that volatility is not directly observable, nor a tradeable asset. Options on realised variance, however, exist and are traded as OTC products.
Below are two examples of the payoff structure of such products:
\begin{equation}\label{integrated vol payoffs}
(i) (RV(v)(T)-K)^+, \quad (ii) (\sqrt{RV(v)(T)}-K)^+, \quad \textrm{where } T,K \ge 0.
\end{equation}
where we define the following $\Cc(\Tt)$ operator
\begin{equation}\label{IV operator}
RV(f)(\cdot):f \mapsto \frac{1}{\cdot} \int_0^\cdot f(s) \D s,
\quad
RV(f)(0):=f(0),
\end{equation}

and $v$ represents the instantaneous variance in a given stochastic volatility model.
Note that $RV(v)(0)~=~v_0$.
\begin{remark}As shown by Neuberger \cite{Neu94}, we may rewrite the variance swap in terms of the log contract as
\begin{equation}\label{eq:log-contract}
\mathbb{E}[RV(v)(T)]=\mathbb{E}\left[\frac{1}{T}\int_0^T v_s \D s\right]=\mathbb{E}\left[-2\frac{X_T}{T}\right]
\end{equation}
where $\mathbb{E}[\cdot]$ is taken under the risk-neutral measure and $S=\exp(X)$ is a risk-neutral martingale (assuming interest rates and dividends to be null). Therefore, the risk neutral pricing of $RV(v)(T)$ or options on it is fully justified by \eqref{eq:log-contract}.
\end{remark}

\subsection{Small-time results for the rough Bergomi model}

\begin{proposition}\label{RKHS for v}
The set $\mathscr{H}^{K_{\alpha}}:=\{\int_0^{\cdot} K_{\alpha}(s,\cdot)f(s) \D s : f\in L^2(\Tt) \}  $ defines the reproducing kernel Hilbert space for $Z$ with inner product
$\langle \int_0^{\cdot} K_{\alpha}(s,\cdot)f_1(s)\D s, \int_0^{\cdot} K_{\alpha}(s,\cdot)f_2(s)\D s  \rangle_{\mathscr{H}^{K_{\alpha}}}  
=
\langle f_1, f_2 \rangle_{L^2(\Tt)}
$.
\end{proposition}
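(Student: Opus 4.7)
The plan is to verify directly that $\mathscr{H}^{K_\alpha}$, equipped with the inner product inherited from $\LTwo(\Tt)$ through the Volterra map
\[
T:\LTwo(\Tt)\to\Cc(\Tt,\RR),\qquad Tf(t):=\int_0^t K_\alpha(u,t)f(u)\D u,
\]
satisfies the two characterising properties of an RKHS for the centred Gaussian process $Z$: (i) every section $R_Z(\cdot,t)$ of the covariance kernel belongs to $\mathscr{H}^{K_\alpha}$; and (ii) the reproducing identity $\langle h,R_Z(\cdot,t)\rangle_{\mathscr{H}^{K_\alpha}}=h(t)$ holds for every $h\in\mathscr{H}^{K_\alpha}$ and every $t\in\Tt$.

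First I would compute the covariance kernel of $Z$ by It\^o isometry, giving $R_Z(s,t):=\EE[Z_sZ_t]=\int_0^{s\wedge t}K_\alpha(u,s)K_\alpha(u,t)\D u$. Writing this as $R_Z(s,t)=\int_0^s K_\alpha(u,s)f_t(u)\D u$ with $f_t(u):=K_\alpha(u,t)\ind_{[0,t]}(u)$, and noting that $f_t\in\LTwo(\Tt)$ because $\int_0^t(t-u)^{2\alpha}\D u<\infty$ for $\alpha\in(-1/2,0)$, yields $R_Z(\cdot,t)=Tf_t\in\mathscr{H}^{K_\alpha}$, settling (i). For (ii), taking an arbitrary $h=Tf\in\mathscr{H}^{K_\alpha}$ and using the definition of the inner product,
\[
\langle h,R_Z(\cdot,t)\rangle_{\mathscr{H}^{K_\alpha}}
=\langle Tf,Tf_t\rangle_{\mathscr{H}^{K_\alpha}}
=\langle f,f_t\rangle_{\LTwo(\Tt)}
=\int_0^t K_\alpha(u,t)f(u)\D u
=h(t).
\]

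The one genuinely non-trivial point, and what I expect to be the main obstacle, is to ensure that the proposed inner product is well defined, i.e.\ does not depend on the representative $f$ used to express $h=Tf$. This is equivalent to injectivity of $T$ on $\LTwo(\Tt)$: if $\int_0^t K_\alpha(u,t)f(u)\D u=0$ for all $t\in\Tt$, then $f=0$ in $\LTwo(\Tt)$. Up to the multiplicative constant $\eta\sqrt{2\alpha+1}$, the operator $T$ coincides with the Riemann--Liouville fractional integration operator $I^{\alpha+1/2}$, whose injectivity on $\LTwo$ is classical (Samko--Kilbas--Marichev). Once injectivity is in hand, $T$ becomes an isometric bijection from $\LTwo(\Tt)$ onto $\mathscr{H}^{K_\alpha}$ by definition of the inner product, so $\mathscr{H}^{K_\alpha}$ inherits completeness from $\LTwo(\Tt)$, and combined with properties (i)--(ii) this identifies $\mathscr{H}^{K_\alpha}$ as the RKHS of $Z$. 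All remaining steps are routine RKHS bookkeeping.
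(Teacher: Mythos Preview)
Your argument is correct. One small slip: the Riemann--Liouville order corresponding to the kernel $(t-u)^{\alpha}$ is $\alpha+1$, not $\alpha+1/2$ (you may be conflating $\alpha$ with the Hurst parameter $H=\alpha+1/2$); since $\alpha+1\in(1/2,1)$ this does not affect the injectivity claim.

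The paper does not give an in-house proof but simply cites \cite[Theorem~3.1]{JPS18}. Judging from the analogous argument the authors spell out for Proposition~\ref{rkhs of vix type process} (Appendix~\ref{5.2}), the approach they have in mind is the abstract Wiener space route: first show that the Volterra map is injective (there via the Titchmarsh convolution theorem rather than fractional-calculus identities), so that the inner product is well defined and $\mathscr{H}^{K_\alpha}$ is a separable Hilbert space; then show that $\mathscr{H}^{K_\alpha}$ embeds densely into $\Cc(\Tt)$; and finally check that every element of $\Cc(\Tt)^*$ is Gaussian under the law of $Z$ with variance equal to its $\mathscr{H}^{K_\alpha}$-norm, which identifies $\mathscr{H}^{K_\alpha}$ as the Cameron--Martin space. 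Your route is the classical RKHS verification (covariance sections lie in the space and satisfy the reproducing identity), which is more elementary and dispenses with the dual-space machinery; the abstract Wiener space approach, on the other hand, plugs directly into the formulation of the generalised Schilder theorem used downstream. Both arrive at the same object because, for a centred Gaussian measure on a separable Banach space, the Cameron--Martin space and the RKHS of the covariance coincide; you might add one sentence invoking this identification so that Theorem~\ref{VIX ldp on C}/Schilder applies without further comment.
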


\begin{proof}
See \cite[Theorem 3.1]{JPS18}.
\end{proof}

Before stating Theorem \ref{LDP for vol process}, we define the following function
$\Lambda^{Z}: \Cc(\Tt) \to\RR_+$ as
$
\Lambda^{Z}(\xx):= \frac{1}{2}\|\xx\|_{\mathscr{H}^{K_\alpha}}^2
$,
and if $\xx \notin \mathscr{H}^{K_{\alpha}}$ then $\Lambda^{Z}(\xx) = + \infty $.

\begin{theorem}\label{LDP for vol process}

The variance process $(v_\eps)_{\eps >0}$ satisfies a large deviations principle on $\Cc(\Tt)$ as $\eps$ tends to zero, with speed $\eps^{-\beta}$ and rate function
$\Lambda^v(x):=
\Lambda^Z\left( \log\left(\frac{x}{v_0} \right)  \right)$, where $\Lambda^v(v_0)=0$ and $x\in\mathcal{C}(\mathcal{T})$.
\end{theorem}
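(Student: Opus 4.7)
The plan is to transfer the Gaussian LDP for a rescaled Volterra process to the exponential variance process through the contraction principle, after controlling a deterministic Wick drift via exponential equivalence.

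First, since $Z$ is a centred Gaussian random element of the separable Banach space $\Cc(\Tt)$, the general Gaussian LDP on Banach spaces, combined with Proposition \ref{RKHS for v} which pins down the reproducing kernel Hilbert space $\mathscr{H}^{K_\alpha}$, yields that $(\eps^{\beta/2}Z)_{\eps>0}$ satisfies an LDP in $\Cc(\Tt)$ at speed $\eps^{-\beta}$ with good rate function $\Lambda^Z$. Next, using the self-similarity of the Volterra kernel $K_\alpha$, so that $(Z_{\eps t})_{t\in\Tt}\equalDistrib (\eps^{\beta/2}Z_t)_{t\in\Tt}$ as processes, the rescaled variance $v_\eps(\cdot):=v_{\eps\,\cdot}$ admits the law-representation
\[
\log\frac{v_\eps(t)}{v_0}\;=\;\eps^{\beta/2}Z_t-\tfrac{\eta^2}{2}(\eps t)^\beta,\qquad t\in\Tt,
\]
in which the drift is deterministic and of uniform order $\eps^\beta$ on $\Tt=[0,1]$. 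Consequently $\log(v_\eps/v_0)$ and $\eps^{\beta/2}Z$ are exponentially equivalent at speed $\eps^{-\beta}$: for any $\delta>0$ the event $\{\|\log(v_\eps/v_0)-\eps^{\beta/2}Z\|_\infty>\delta\}$ is deterministically empty once $\eps$ is small enough, so trivially the two sequences share the same LDP.

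Then I would invoke the contraction principle through the map $\phi:\Cc(\Tt)\to\Cc(\Tt)$, $\phi(y):=v_0\exp(y)$, which is continuous for the uniform topology (locally Lipschitz on bounded sets), injective onto the strictly positive paths based at $v_0$, and has continuous inverse $\phi^{-1}(x)=\log(x/v_0)$ on its image. Applying contraction to the LDP for $\eps^{\beta/2}Z$ produces an LDP for $\phi(\eps^{\beta/2}Z)$ with rate function
\[
x\;\longmapsto\;\inf\bigl\{\Lambda^Z(y):\phi(y)=x\bigr\}\;=\;\Lambda^Z\bigl(\log(x/v_0)\bigr),
\]
and $+\infty$ outside the image of $\phi$, that is, whenever $x$ fails to be strictly positive. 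Combining with the exponential equivalence from the previous step yields the stated LDP for $(v_\eps)_{\eps>0}$, and the identification $\Lambda^v(v_0)=0$ reflects the fact that the constant path $v_0$ corresponds to $y\equiv 0\in\mathscr{H}^{K_\alpha}$.

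The main obstacle, if any, is verifying that the pointwise exponential map $\phi$ is continuous on the whole of $\Cc(\Tt)$: this is immediate because $\exp$ is locally Lipschitz and continuous paths on the compact interval $\Tt=[0,1]$ are bounded, so $\phi$ is continuous even though it is not uniformly continuous. Every other step amounts to bookkeeping, namely converting the small-time rescaling into an $\eps^{\beta/2}$-prefactor of $Z$ via self-similarity of $K_\alpha$, and noting that the Wick drift contributes only a deterministic $\eps^\beta$-small perturbation which cannot affect an LDP at speed $\eps^{-\beta}$.
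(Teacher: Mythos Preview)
Your proposal is correct and follows essentially the same approach as the paper's own proof: Schilder's theorem for $(\eps^{\beta/2}Z)_{\eps>0}$ via the RKHS of Proposition~\ref{RKHS for v}, exponential equivalence to absorb the deterministic Wick drift $\tfrac{\eta^2}{2}(\eps\cdot)^\beta$, and then the contraction principle through the continuous map $y\mapsto v_0\exp(y)$. The only cosmetic difference is that the paper names the intermediate process $\widetilde{Z}^\eps:=Z^\eps-\tfrac{\eta^2}{2}(\eps\cdot)^\beta$ explicitly before applying contraction, whereas you phrase the exponential equivalence directly in terms of $\log(v_\eps/v_0)$; these are the same object.
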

\begin{proof}
To ease the flow of the paper the proof is postponed to Appendix \ref{3.3}.
\end{proof}

\begin{corollary}\label{integrated vol LDP}
The integrated variance process $\left(RV(v)(t)\right)_{t \in \Tt }$ satisfies a large deviations principle on $\RR^*_+$ as $t$ tends to zero, with speed $t^{-\beta}$ and rate function
$\hat{\Lambda}^v$ defined as $\hat{\Lambda}^v(\yy) := \inf \left\{ \Lambda^v(\xx) : \yy = RV(\xx)(1) \right\}$, where $\hat{\Lambda}^v(v_0)=0$.
\end{corollary}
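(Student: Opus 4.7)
The plan is to combine Theorem \ref{LDP for vol process} with the Contraction Principle, after exploiting the self-similarity of $Z$ to convert the small-$t$ problem for $RV(v)(t)$ into a small-$\eps$ problem already covered by Theorem \ref{LDP for vol process}.

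The key observation is a scaling identity. Using the kernel scaling $K_\alpha(tr,tu) = t^\alpha K_\alpha(r,u)$ together with Brownian scaling $\widetilde{W}_r := t^{-1/2}W_{tr}$, a direct change of variable $s=tr$ in the stochastic integral defining $Z_{tu}$ yields, at the level of processes indexed by $u\in[0,1]$,
\begin{equation*}
(Z_{tu})_{u\in[0,1]} \equalDistrib (t^{\beta/2}\, Z_u)_{u\in[0,1]},
\end{equation*}
since $\alpha+\tfrac{1}{2}=\tfrac{\beta}{2}$. Exponentiating and noting that the drift correction scales consistently as $\tfrac{\eta^2}{2}(tu)^\beta = \tfrac{\eta^2}{2}t^\beta u^\beta$, the process $u\mapsto v_{tu}$ has the same law on $\Cc([0,1])$ as the rescaled variance process $v_\eps$ of Theorem \ref{LDP for vol process} with $\eps = t$. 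The further change of variable $s = tu$ in $\tfrac{1}{t}\int_0^t v_s\,\D s$ then yields the distributional identity
\begin{equation*}
RV(v)(t) \equalDistrib RV(v_\eps)(1)\quad \text{with }\eps = t.
\end{equation*}

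Next I would apply the Contraction Principle \cite[Theorem 4.2.1]{DZ10} with the bounded linear (hence sup-norm continuous) map $\Phi:\Cc(\Tt)\to\RR$ defined by $\Phi(\xx):=RV(\xx)(1)=\int_0^1 \xx(s)\,\D s$. Since Theorem \ref{LDP for vol process} provides an LDP for $(v_\eps)_{\eps>0}$ on $\Cc(\Tt)$ with speed $\eps^{-\beta}$ and good rate function $\Lambda^v$, the image sequence $(\Phi(v_\eps))_{\eps>0}$, and therefore also $(RV(v)(t))_{t>0}$ via the distributional identity above, satisfies an LDP with speed $t^{-\beta}$ and rate function
\begin{equation*}
\hat{\Lambda}^v(\yy) = \inf\bigl\{\Lambda^v(\xx):\xx\in\Cc(\Tt),\ \Phi(\xx)=\yy\bigr\}.
\end{equation*}
To confirm the LDP lives on $\RR^*_+$, I would observe that $\Lambda^v(\xx)<+\infty$ forces $\xx=v_0\exp(h)$ for some $h\in\mathscr{H}^{K_\alpha}\subset\Cc(\Tt)$, so $\xx$ is strictly positive on $\Tt$, hence $\Phi(\xx)>0$ and $\hat{\Lambda}^v\equiv+\infty$ on $(-\infty,0]$. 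The normalization $\hat{\Lambda}^v(v_0)=0$ follows from $\Lambda^v(v_0)=0$ at the constant path $\xx\equiv v_0$, for which $\Phi(\xx)=v_0$.

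The main obstacle is the self-similarity step: one must carefully combine the kernel and Brownian scalings to extract the exponent $\alpha+\tfrac{1}{2}=\tfrac{\beta}{2}$, and verify that the deterministic drift correction rescales consistently with the one built into $v_\eps$. Once the distributional identity $RV(v)(t)\equalDistrib RV(v_\eps)(1)$ is in hand, the Contraction Principle delivers the result mechanically.
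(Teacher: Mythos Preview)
Your proposal is correct and takes essentially the same approach as the paper: apply the Contraction Principle to the LDP of Theorem~\ref{LDP for vol process} via the continuous averaging functional, then use the self-similarity of $Z$ to identify $RV(v)(t)$ with $RV(v_\eps)(1)$ in law for $\eps=t$. Your treatment is a bit more explicit on the scaling step and adds a justification for why the rate function is $+\infty$ off $\RR^*_+$, but otherwise matches the paper's argument.
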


\begin{proof}
As proved in Theorem~\ref{LDP for vol process}, the process $(v_\eps)_{\eps>0}$ satisfies a pathwise large deviations principle on $\Cc(\Tt)$ as $\eps$ tends to zero.
For small perturbations $\delta^v \in \Cc (\Tt)$, we have
$$
\left\|RV(v+\delta^v)(t)-RV(\delta^v)(t)\right\|_\infty
\le \sup_{t\in \Tt} \frac{1}{t} \left| \int_0^t \delta^v(s) \D s \right|
\le M,
$$
where $M= \sup_{t\in \Tt} |\delta^v(t)|$, which is finite as $\delta^v \in \Cc (\Tt)$.
Clearly $M$ tends to zero as $ \delta^v$ tend to zero,
and hence the operator $RV$ is continuous with respect to the sup norm on $\Cc (\Tt)$.
Therefore we can apply the Contraction Principle (see Appendix \ref{appendix: contraction princ}), and consequently the integrated variance process $RV(v_\eps)$ satisfies a large deviations principle on $\Cc(\Tt)$ as $\eps$ tends to zero.
Clearly $RV(v_\eps)(t)=RV(v)(\eps t)$, for all $t\in\Tt$, and so setting $t=1$ and mapping $\eps$ to $t$ then yields the result.
By definition, $\hat{\Lambda}^v(\yy) := \inf \left\{ \Lambda^v(\xx) : \yy = RV(\xx)(1) \right\}$. If we choose $\xx\equiv v_0$ then clearly $v_0=RV(\xx)(1)$, and $\Lambda^v(\xx)=0$. Since $\Lambda^v$ is a norm, it is a non-negative function and therefore  $\hat{\Lambda}^v(v_0)=0$. This concludes the proof.
\end{proof}

\begin{remark}\label{remark: integrated variance remark}
Corollary \ref{integrated vol LDP} can be applied to a large number of existing results on large deviations for (rough) variance processes to get a large deviations result for the integrated (rough) variance process; for example Forde and Zhang \cite{FZ17} and Horvath, Jacquier, and Lacombe \cite{HJL18}.
\end{remark}

\begin{corollary}
The rate function $\hat{\Lambda}^v$ is continuous.
\end{corollary}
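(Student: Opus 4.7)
The plan is to split continuity into lower and upper semicontinuity. Lower semicontinuity is immediate: as $\hat{\Lambda}^v$ arises from Corollary~\ref{integrated vol LDP} via the Contraction Principle applied to the continuous map $RV(\cdot)(1):\Cc(\Tt)\to\RR^*_+$ and the good rate function $\Lambda^v$, it is itself a (good) rate function and hence lower semicontinuous on $\RR^*_+$. The work therefore reduces to showing $\limsup_{n\to\infty} \hat{\Lambda}^v(\yy_n) \le \hat{\Lambda}^v(\yy)$ whenever $\yy_n \to \yy$ in $\RR^*_+$.

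For upper semicontinuity I would use a perturbation argument built around one fixed, explicit element of $\mathscr{H}^{K_\alpha}$. Take $\phi(t) := \int_0^t K_\alpha(s,t)\D s = \frac{\eta\sqrt{2\alpha+1}}{\alpha+1}\, t^{\alpha+1}$, which by Proposition~\ref{RKHS for v} lies in $\mathscr{H}^{K_\alpha}$ with $\|\phi\|_{\mathscr{H}^{K_\alpha}}=1$ (it corresponds to $f\equiv 1\in L^2(\Tt)$) and is strictly positive on $(0,1]$. Assuming $\hat{\Lambda}^v(\yy)<\infty$ (else nothing to prove) and fixing $\eps>0$, select a near-optimizer $g\in\mathscr{H}^{K_\alpha}$ with $v_0\int_0^1\E^{g(s)}\D s = \yy$ and $\tfrac{1}{2}\|g\|^2_{\mathscr{H}^{K_\alpha}}\le \hat{\Lambda}^v(\yy)+\eps$. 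Since $g$ is continuous (hence bounded) on $\Tt$ and $\phi>0$ on $(0,1]$, the map $F(\lambda):= v_0\int_0^1\E^{g(s)+\lambda\phi(s)}\D s$ is continuous and strictly increasing in $\lambda\in\RR$, with $F(-\infty)=0$ and $F(+\infty)=+\infty$, so $F:\RR\to\RR^*_+$ is a homeomorphism. Since $F(0)=\yy$, the sequence $\lambda_n:=F^{-1}(\yy_n)$ converges to $0$.

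Letting $g_n := g+\lambda_n\phi\in\mathscr{H}^{K_\alpha}$, we have $v_0\int_0^1\E^{g_n(s)}\D s = \yy_n$, hence
\[
\hat{\Lambda}^v(\yy_n) \le \tfrac{1}{2}\|g_n\|^2_{\mathscr{H}^{K_\alpha}}
= \tfrac{1}{2}\|g\|^2 + \lambda_n\langle g,\phi\rangle_{\mathscr{H}^{K_\alpha}} + \tfrac{\lambda_n^2}{2}
\;\xrightarrow[n\to\infty]{}\;\tfrac{1}{2}\|g\|^2\le \hat{\Lambda}^v(\yy)+\eps,
\]
and sending $\eps\downarrow 0$ delivers the desired upper semicontinuity. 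Applied with $g\equiv 0$, the very same family $\lambda\mapsto\lambda\phi$ also shows $\hat{\Lambda}^v(\yy)<\infty$ for every $\yy\in\RR^*_+$, so continuity is meaningful (and holds) throughout. The only delicate point I anticipate is purely technical: checking that $\phi$ is genuinely in $\mathscr{H}^{K_\alpha}$ with the stated norm, and justifying the monotonicity, continuity and boundary limits of $F$; both reduce to routine dominated convergence arguments using the boundedness of $g$ on $[0,1]$ and the positivity of $\phi$.
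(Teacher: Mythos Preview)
Your argument is correct. Both you and the paper split continuity into lower and upper semicontinuity, with lower semicontinuity immediate from the rate-function property. For upper semicontinuity the paper simply defers to \cite[Corollary 4.6]{FZ17} (with the somewhat imprecise remark that ``$\Lambda^v$ is continuous''), whereas you supply a fully self-contained perturbation argument: fixing a near-optimizer $g\in\mathscr{H}^{K_\alpha}$ and sliding along the one-parameter family $g+\lambda\phi$ with $\phi=\Ii^{K_\alpha}1$ to hit nearby values $\yy_n$, then using continuity of $\lambda\mapsto\|g+\lambda\phi\|_{\mathscr{H}^{K_\alpha}}^2$. This is precisely the style of argument that underlies the Forde--Zhang result, so the approaches are essentially the same in spirit; yours has the advantage of being explicit and of simultaneously establishing finiteness of $\hat{\Lambda}^v$ on all of $\RR^*_+$. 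The technical checks you flag (that $\phi\in\mathscr{H}^{K_\alpha}$ with unit norm, and the monotonicity and limits of $F$) are routine and correct as stated, since $\mathscr{H}^{K_\alpha}\subset\Cc(\Tt)$ ensures $g$ is bounded and $\phi(s)=\frac{\eta\sqrt{2\alpha+1}}{\alpha+1}s^{\alpha+1}>0$ on $(0,1]$.
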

\begin{proof}
Indeed, as a rate function, $\hat{\Lambda}^v$ is lower semi-continuous. Moreover, as $\Lambda^v$ is continuous, one can use similar arguments to~\cite[Corollary 4.6]{FZ17}, and deduce that $\hat{\Lambda}^v$ is upper semi-continuous, and hence is continuous.
\end{proof}

Before stating results on the small-time behaviour of options on integrated variance, we state that the integrated variance process $ RV(v)$ satisfies a large deviations principle on $\RR$ as t tends to zero, with speed $t^{-\beta}$ and rate function $\hat{\Lambda}^v(\E^\cdot)$.
Then, the small-time behaviour of such options can be obtained as an application of Corollary~\ref{integrated vol LDP}.

\begin{corollary}\label{cor:ImpliedVolrBergomi}
For log moneyness $k:=\log \frac{K}{RV(v)(0)} \neq 0$, the following equality holds true for Call options on integrated variance
\begin{equation}
\label{eq:OTM_IV}
\lim_{t \downarrow 0} t^\beta \log \mathbb{E} \left[ \left( RV(v)(t)-\E^k\right)^+\right] = - \mathrm{I}(k),
\end{equation}
where $\mathrm{I}$ is defined as as $\mathrm{I}(x) := \inf_{y>x} \hat{\Lambda}^v(\E^y)$ for $x >0$,
$\mathrm{I}(x) := \inf_{y<x} \hat{\Lambda}^v(\E^y)$ for $x <0$.

Similarly, for log moneyness $k:=\log \frac{K}{\sqrt{RV(v)(0)}} \neq 0$,
\begin{equation}
\label{eq:OTM_sqIV}
\lim_{t \downarrow 0} t^\beta \log \mathbb{E} \left[ \left( \sqrt{RV(v)(t)}-\E^k\right)^+\right] = -\bar{\mathrm{I}}(k),
\end{equation}
where $\bar{\mathrm{I}}$ is defined analogously as  $\bar{\mathrm{I}}(x) := \inf_{y>x} \hat{\Lambda}^v(\E^{2y})$ for $x >0$ and $\bar{\mathrm{I}}(x) := \inf_{y<x} \hat{\Lambda}^v(\E^{2y})$ for $x<0$.
\end{corollary}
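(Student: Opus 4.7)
The plan is to deduce both limits directly from Corollary \ref{integrated vol LDP} via the standard LDP-to-option-price machinery (in the style of Forde--Zhang). Setting $Y_t := \log RV(v)(t)$, the map $y \mapsto \log y$ is continuous on $\RR^*_+$, so the contraction principle (Appendix \ref{appendix: contraction princ}) promotes the integrated-variance LDP of Corollary \ref{integrated vol LDP} to an LDP for $(Y_t)_{t>0}$ on $\RR$, with speed $t^{-\beta}$ and rate function $y \mapsto \hat{\Lambda}^v(\E^y)$ --- which is precisely the LDP stated in the paragraph preceding the corollary. Composing once more with $y \mapsto \tfrac{1}{2}\log y$ handles $\log\sqrt{RV(v)(t)}$ and reduces \eqref{eq:OTM_sqIV} to an application of the argument for \eqref{eq:OTM_IV} to $\sqrt{RV(v)(t)}$; accordingly I focus on \eqref{eq:OTM_IV}.

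For $k > 0$ (the genuinely OTM call), the lower bound follows from the elementary sandwich
$$
\mathbb{E}[(RV(v)(t)-\E^k)^+] \;\ge\; (\E^{k+\delta}-\E^k)\,\mathbb{P}(Y_t \ge k+\delta)
$$
valid for any $\delta > 0$, the LDP lower bound on the open set $(k+\delta, \infty)$, and finally $\delta \downarrow 0$ together with continuity of $\hat{\Lambda}^v$.

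The upper bound is the delicate step. I would start from $(RV(v)(t) - \E^k)^+ \le RV(v)(t)\,\ind_{\{Y_t > k\}}$ and apply H\"older's inequality with conjugate exponents $p, q > 1$:
$$
\mathbb{E}[(RV(v)(t)-\E^k)^+] \;\le\; \mathbb{E}[RV(v)(t)^p]^{1/p}\;\mathbb{P}(Y_t > k)^{1/q}.
$$
The key technical input is a uniform moment bound $\sup_{t \in (0,1]} \mathbb{E}[RV(v)(t)^p] < \infty$ for every $p \ge 1$. This follows from Jensen's inequality $RV(v)(t)^p \le \tfrac{1}{t}\int_0^t v_s^p\,\D s$ combined with the closed-form moment identity $\mathbb{E}[v_s^p] = v_0^p \exp\bigl((p^2-p)\eta^2 s^{\beta}/2\bigr)$ for the Gaussian-exponential variance of the rough Bergomi model, which is bounded on $s \in (0,1]$. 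Consequently $t^\beta \log \mathbb{E}[RV(v)(t)^p]^{1/p} \to 0$, and the LDP upper bound on the closed set $[k, \infty)$ yields $\limsup_{t \downarrow 0} t^\beta \log \mathbb{E}[(RV(v)(t)-\E^k)^+] \le -\tfrac{1}{q}\inf_{y \ge k} \hat{\Lambda}^v(\E^y)$; sending $q \downarrow 1$ and using continuity of $\hat{\Lambda}^v$ to identify $\inf_{y \ge k}$ with $\inf_{y > k} = \mathrm{I}(k)$ closes the matching upper bound.

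For $k < 0$ the OTM payoff is naturally the put $(\E^k - RV(v)(t))^+$, which is bounded above by $\E^k$; the argument is then strictly easier because no moment bound is required, and the symmetric LDP applied to $(-\infty, k-\delta)$ and $(-\infty, k]$ yields the analogous limit. The chief obstacle throughout is the uniform moment bound on $RV(v)(t)^p$; once that is secured, the remainder reduces to a routine assembly of LDP upper/lower bounds, continuity of the rate function, and the contraction principle.
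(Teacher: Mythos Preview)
Your proposal is correct and follows essentially the same route as the paper's proof: an LDP for $\log RV(v)(t)$ via the contraction principle, a sandwich lower bound $\mathbb{E}[(RV(v)(t)-\E^k)^+]\ge \eps\,\mathbb{P}(RV(v)(t)\ge \E^k+\eps)$, and an upper bound via H\"older's inequality combined with a moment estimate for $RV(v)(t)^p$ obtained from Jensen and the explicit lognormal moments of $v_s$. Your uniform moment bound is in fact slightly cleaner than the paper's (which carries a harmless extra $t^{-(q-1)}$ factor), and your explicit handling of $k<0$ through the bounded put payoff is a useful clarification that the paper leaves implicit.
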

\begin{proof}
The proof is postponed to Appendix \ref{3.7}.
\end{proof}

As with Call options on stock price processes, we can define and study the implied volatility of such options.
In the case of \eqref{integrated vol payoffs}(i) we define the implied volatility $\hat{\sigma}(T,k)$ to be the solution to
\begin{equation}\label{int vol implied vol}
\EE[(RV(v)(T)-e^k)^+ ]= C_{BS}( RV(v)(0), k, T, \hat{\sigma}(T,k) ),
\end{equation}
where $C_{BS}$ denotes the Call price in the Black-Scholes model.
Using Corollary \ref{cor:ImpliedVolrBergomi}, we deduce the small-time behaviour of the implied volatility $\hat{\sigma}$, as defined in \eqref{int vol implied vol}.

\begin{corollary}\label{cor:asymptoticImpliedVol}
The small-time asymptotic behaviour of the implied volatility is given by the following limit, for a log moneyness $k \neq 0$:
\begin{equation*}
\lim_{t \downarrow 0} t^{1-\beta} \hat{\sigma}^2(t,k)=:\hat{\sigma}^2(k) = \frac{k^2}{2\mathrm{I}(k)}.
\end{equation*}

\end{corollary}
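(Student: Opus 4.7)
The plan is to combine Corollary~\ref{cor:ImpliedVolrBergomi} with the classical small-total-variance asymptotic of the Black-Scholes Call price and then invert. Recall the standard fact: for fixed $k\neq 0$, as the total variance $u:=\sigma^2 t\downarrow 0$,
\begin{equation*}
u\,\log C_{BS}(v_0,k,t,\sigma)\;\longrightarrow\;-\frac{k^2}{2},
\end{equation*}
which follows by applying the Gaussian-tail expansion~\cite[Formula (26.2.12)]{AS72} to $N(d_1),N(d_2)$ in the Black-Scholes formula. Taking logarithms in the defining identity~\eqref{int vol implied vol}, multiplying by $t^\beta$, and applying Corollary~\ref{cor:ImpliedVolrBergomi} to the left-hand side would yield
\begin{equation*}
-\mathrm{I}(k)\;=\;\lim_{t\downarrow 0}t^\beta\log C_{BS}\bigl(v_0,k,t,\hat\sigma(t,k)\bigr).
\end{equation*}

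The crucial preparatory step is to verify that the total variance $u(t):=\hat\sigma^2(t,k)\,t$ itself tends to zero as $t\downarrow 0$, so that the Black-Scholes asymptotic above may legitimately be applied with $\sigma=\hat\sigma(t,k)$. This will follow because $\mathrm{I}(k)>0$ for $k\neq 0$: indeed, $\hat\Lambda^v$ vanishes uniquely at $v_0$ by Corollary~\ref{integrated vol LDP}, hence the call price $\EE[(RV(v)(t)-\E^k)^+]$ decays exponentially. Since $C_{BS}(v_0,k,t,\sigma)$ is strictly increasing in $u=\sigma^2 t$ with $C_{BS}\to(v_0-\E^k)^+$ as $u\downarrow 0$, matching a vanishing call price forces $u(t)\downarrow 0$ in the OTM case; the ITM case can be handled analogously via put-call parity, working with the corresponding OTM put.

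Granted $u(t)\downarrow 0$, substituting $\sigma=\hat\sigma(t,k)$ into the Black-Scholes asymptotic gives
\begin{equation*}
t^\beta\log C_{BS}\bigl(v_0,k,t,\hat\sigma(t,k)\bigr)
\;=\;-\frac{k^2}{2\,\hat\sigma^2(t,k)\,t^{1-\beta}}\bigl(1+o(1)\bigr),
\end{equation*}
and equating with $-\mathrm{I}(k)$ and rearranging delivers the claimed limit $\lim_{t\downarrow 0}t^{1-\beta}\hat\sigma^2(t,k)=k^2/(2\mathrm{I}(k))$. The only genuinely delicate point is the a priori control $u(t)\downarrow 0$; the remainder is a textbook Black-Scholes inversion coupled with the LDP from Corollary~\ref{cor:ImpliedVolrBergomi}.
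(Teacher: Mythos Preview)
Your approach is correct and takes a genuinely different route from the paper. The paper works at the level of tail probabilities: from the LDP it records $\lim_{t\downarrow 0}t^\beta\log\PP(RV(v)(t)\ge \E^k)=-\mathrm{I}(k)$, recalls the corresponding Black--Scholes tail asymptotic $\lim_{t\downarrow 0}\xi^2 t\log\PP(\,\cdot\,)=-k^2/2$, and then invokes the transfer result of Gao--Lee~\cite[Corollary~7.1]{GL14} to pass from these two statements to the implied-volatility limit, treating the Black--Scholes inversion as a black box. You instead start from the call-price asymptotic in Corollary~\ref{cor:ImpliedVolrBergomi}, match it against the small-total-variance expansion of $C_{BS}$, and carry out the inversion by hand; the a~priori step $u(t)=\hat\sigma^2(t,k)\,t\to 0$ that you isolate is precisely the work that the Gao--Lee machinery absorbs. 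Your argument is more self-contained and elementary, while the paper's is shorter because it outsources the monotonicity/inversion issues to~\cite{GL14}. One small caveat: the asymptotic $u\log C_{BS}\to -k^2/2$ you quote holds as written only for OTM calls ($k>0$); you correctly flag that the ITM case requires put--call parity, but to close that half you should state the analogous OTM put-price limit (which follows from the LDP lower-tail estimate and is implicit in the two-sided definition of $\mathrm{I}$ in Corollary~\ref{cor:ImpliedVolrBergomi}).
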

\begin{proof}
The log integrated variance process $\log RV(v)$ satisfies a large deviations principle with speed $t^{-\beta}$ and rate function $\hat{\Lambda}^v(\E^\cdot)$, which is continuous.
Therefore, it follows that
$$ \lim_{t\downarrow 0} t^\beta \log \PP(RV(v)(t) \ge \E^k) = -\mathrm{I}(k).
$$
In the Black Scholes model, i.e. a geometric Brownian motion with $S_0=RV(v)(0)$ with constant volatility $\xi$, we have the following small-time implied volatility behaviour:
$$ \lim_{t \downarrow 0}  \xi^2 t \log\PP(RV(v)(t) \ge \E^k) = -\frac{k^2}{2}.
$$
We then apply Corollary \ref{cor:ImpliedVolrBergomi} and \cite[Corollary 7.1]{GL14}, identifying $\xi\equiv\hat{\sigma}(k,t)$, to conclude.
\end{proof}

\begin{remark}
Notice that the level of implied volatility in Corollary \ref{cor:asymptoticImpliedVol} has a power law behaviour as a function of time to maturity. This power law is of order $\sqrt{t^{\beta-1}}$, which is consistent with the at-the-money RV implied volatility results by Al\`os, Garc\'ia-Lorite and Muguruza \cite{AGM18}, where the order is found to be $t^{H-1/2}$ using Malliavin Calculus techniques. Recall that $\beta =2\alpha +1$, and $\alpha=H-1/2$ by Remark \ref{remark: H and alpha}.
\end{remark}

\subsection{Small-time results for the mixed rough Bergomi model}
Minor adjustments to Theorem \ref{LDP for vol process} give the following small-time result for the mixed variance process $v^{(\gamma,\nu)}$ introduced in Model~\eqref{eq:mixedvarianceProcess}; the proof is given in Appendix \ref{3.10}.

\begin{theorem}\label{truncated var ldp theorem}
The mixed variance process $( v_\eps^{(\gamma,\nu)})_{\eps>0}$ satisfies a large deviations principle on $\Cc(\Tt)$ with speed $\eps^{-\beta}$ and rate function
$$\Lambda^{(\gamma,\nu)}(\xx):= \inf\{\Lambda^Z(\frac{\eta}{\nu_1} \yy): \xx(\cdot) =v_0 \sum_{i=1}^n \gamma_i e^{\frac{\nu_i}{\nu_1}\yy(\cdot)} \},  $$
satisfying $\Lambda^{(\gamma,\nu)}(v_0)~=~0$.
\end{theorem}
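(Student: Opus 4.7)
The plan is to imitate the proof of Theorem \ref{LDP for vol process}, reducing the LDP for $v^{(\gamma,\nu)}_\eps$ to a known LDP for the rescaled Gaussian driver $Z$ via the Contraction Principle, and then absorbing a small vanishing drift by exponential equivalence. I would begin from the small-time rescaling $v^{(\gamma,\nu)}_{\eps t}$. Using the $H$-self-similarity of $Z$ with $H=\beta/2$, the process $(\eps^{\beta/2} Z_\cdot)_{\eps>0}$ has the same law as $(Z_{\eps \cdot})_{\eps>0}$, and (as in the proof of Theorem \ref{LDP for vol process}) this rescaled Gaussian process satisfies a pathwise LDP on $\Cc(\Tt)$ with speed $\eps^{-\beta}$ and good rate function $\Lambda^Z$.

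Next, I would define the map $F:\Cc(\Tt)\to\Cc(\Tt)$ by
\begin{equation*}
F(\yy)(t) \;:=\; v_0 \sum_{i=1}^{n} \gamma_i \exp\!\left(\frac{\nu_i}{\eta}\,\yy(t)\right),
\end{equation*}
and verify that $F$ is continuous with respect to the uniform norm on any bounded subset of $\Cc(\Tt)$ (the exponential is locally Lipschitz and the sum is finite; since $\Lambda^Z$ is a good rate function the relevant level sets are compact, so local continuity suffices for the Contraction Principle argument). Applying the Contraction Principle to $F(\eps^{\beta/2}Z)$ then yields a pathwise LDP with speed $\eps^{-\beta}$ and rate function
\begin{equation*}
\widetilde{\Lambda}(\xx) \;=\; \inf\!\left\{\Lambda^Z(\yy) \,:\, \xx(\cdot) = v_0\sum_{i=1}^{n}\gamma_i\, e^{\frac{\nu_i}{\eta}\yy(\cdot)}\right\}.
\end{equation*}

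The one technical point is that the true process $v^{(\gamma,\nu)}_{\eps t}$ contains the extra deterministic drift $-\tfrac{\nu_i^2}{2}(\eps t)^\beta$ inside each exponential, which is not captured by $F(\eps^{\beta/2}Z)$. I would handle this by showing exponential equivalence: since $(\eps t)^\beta \to 0$ uniformly in $t\in\Tt$ and $v\mapsto v e^{c}$ is continuous, the processes $v^{(\gamma,\nu)}_{\eps\cdot}$ and $F(\eps^{\beta/2}Z_\cdot)$ differ by a perturbation that is uniformly $O(\eps^\beta)$ in the argument of the exponentials, from which one deduces
\begin{equation*}
\limsup_{\eps\downarrow 0}\eps^\beta \log\PP\!\left(\|v^{(\gamma,\nu)}_{\eps\cdot}-F(\eps^{\beta/2}Z)\|_\infty>\delta\right) = -\infty
\end{equation*}
for all $\delta>0$ in analogy with the computation performed in Proposition \ref{pp:exp_equi L}. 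Exponential equivalence preserves the LDP \cite[Theorem 4.2.13]{DZ10} with the same rate function.

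Finally, I would perform the change of variables $\yy \mapsto \tfrac{\nu_1}{\eta}\yy$ in the infimum above to rewrite $\widetilde{\Lambda}$ in the form stated in the theorem:
\begin{equation*}
\Lambda^{(\gamma,\nu)}(\xx) \;=\; \inf\!\left\{\Lambda^Z\!\left(\tfrac{\eta}{\nu_1}\yy\right) \,:\, \xx(\cdot) = v_0\sum_{i=1}^{n}\gamma_i\, e^{\frac{\nu_i}{\nu_1}\yy(\cdot)}\right\}.
\end{equation*}
The base-point identity $\Lambda^{(\gamma,\nu)}(v_0)=0$ follows immediately by choosing $\yy\equiv 0$, which satisfies $v_0 = v_0\sum_i \gamma_i$ and gives $\Lambda^Z(0)=0$. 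The main obstacle I anticipate is not any single step but the care needed in the exponential-equivalence estimate: because each summand contains a distinct $\nu_i$, I need uniform control over all $n$ terms simultaneously, which requires bounding $\|\eps^{\beta/2}Z\|_\infty$ in a large deviations sense on the event complementary to a compact level set of $\Lambda^Z$—exactly the place where the goodness of $\Lambda^Z$ and the uniform smallness of $(\eps t)^\beta$ combine.
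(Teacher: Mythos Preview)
Your approach is correct and leads to the same rate function, but it differs from the paper's in the \emph{order} of the two key operations, and that ordering makes a real difference in how much work is needed.

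The paper first lifts $Z^\eps$ to the vector $\bigl(\tfrac{\nu_1}{\eta}Z^\eps,\ldots,\tfrac{\nu_n}{\eta}Z^\eps\bigr)$ by a linear continuous map, applies exponential equivalence \emph{at this linear stage} to absorb the deterministic drifts $-\tfrac{\nu_i^2}{2}(\eps\cdot)^\beta$, and only then pushes forward through the continuous map $g^\gamma(\xx_1,\ldots,\xx_n)=v_0\sum_i\gamma_i e^{\xx_i}$. Because the drift is subtracted before exponentiation, the exponential-equivalence step is trivial: the two processes differ by a deterministic function that tends to zero uniformly, so the probability in question is identically zero for small $\eps$ --- no tail estimate is needed at all. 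You instead apply the contraction $F$ first and then argue exponential equivalence of $v^{(\gamma,\nu)}_{\eps\cdot}$ and $F(\eps^{\beta/2}Z)$. This forces you to control $\|e^{\frac{\nu_i}{\eta}\eps^{\beta/2}Z}\|_\infty$ on large-deviation scales, exactly the obstacle you flag at the end; it can be done via the goodness of $\Lambda^Z$ (split on $\{\|\eps^{\beta/2}Z\|_\infty\le M\}$ and let $M\to\infty$), but it is extra work the paper's ordering avoids entirely.

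Two minor remarks: (i) your map $F$ is in fact globally continuous on $\Cc(\Tt)$ with the sup norm (uniform convergence of $\yy_n\to\yy$ implies uniform boundedness, hence uniform convergence of $e^{c\,\yy_n}$), so there is no need to invoke compact level sets for the Contraction Principle; (ii) the paper's detour through $\Cc(\Tt,\RR^n)$ is not essential --- one can equally well do exponential equivalence on $Z^\eps$ versus $Z^\eps$-plus-drift (as in Theorem \ref{LDP for vol process}) and then apply your single map $F$ directly, which is arguably the cleanest route.
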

By Remark \ref{remark: integrated variance remark}, we immediately get the following result for the small-time behaviour of the integrated mixed variance process $RV(v^{(\gamma,\nu)})$.
\begin{corollary}\label{integrated trunc var LDP}
The integrated mixed variance process $(RV\left(v^{(\gamma,\nu)}\right)(t))_{t\in \Tt}$ satisfies a large deviations principle on $\RR^*_+$ as $t$ tends to zero, with speed $t^{-\beta}$ and rate function
 $\tilde{\Lambda}^{(\gamma,\nu)}(\yy) := \inf \left\{ \Lambda^{(\gamma,\nu)}(\xx) : \yy = RV(\xx)(1) \right\}$, where $\tilde{\Lambda}^{(\gamma,\nu)}(v_0)=0$.
\end{corollary}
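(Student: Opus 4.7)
The plan is to apply the Contraction Principle directly to the path-space LDP supplied by Theorem \ref{truncated var ldp theorem}, following exactly the template of the proof of Corollary \ref{integrated vol LDP}. Only the rate function changes ($\Lambda^{(\gamma,\nu)}$ in place of $\Lambda^v$); continuity of the $RV$ operator is model-independent, so the whole argument transfers with essentially no modification.

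First, I would invoke Theorem \ref{truncated var ldp theorem}: the family $(v_\eps^{(\gamma,\nu)})_{\eps>0}$ satisfies a pathwise LDP on $\Cc(\Tt)$ with speed $\eps^{-\beta}$ and rate function $\Lambda^{(\gamma,\nu)}$. Next, I would verify that $RV$ from \eqref{IV operator}, viewed as a map $\Cc(\Tt)\to\Cc(\Tt)$, is continuous for the sup norm, using the same estimate as in Corollary \ref{integrated vol LDP}:
\[
\|RV(f)-RV(g)\|_\infty \;\le\; \sup_{t\in\Tt}\frac{1}{t}\int_0^t |f(s)-g(s)|\,\D s \;\le\; \|f-g\|_\infty.
\]
Applying the Contraction Principle (Appendix \ref{appendix: contraction princ}) then yields a pathwise LDP for $(RV(v_\eps^{(\gamma,\nu)}))_{\eps>0}$ on $\Cc(\Tt)$ with the transferred rate function. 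Composing with the continuous evaluation $\yy\mapsto \yy(1)$ and applying the Contraction Principle once more produces an LDP on $\RR$ for $(RV(v_\eps^{(\gamma,\nu)})(1))_{\eps>0}$ with rate function precisely $\tilde{\Lambda}^{(\gamma,\nu)}(\yy)=\inf\{\Lambda^{(\gamma,\nu)}(\xx):\yy=RV(\xx)(1)\}$; positivity of $v^{(\gamma,\nu)}$ confines the effective support to $\RR^*_+$.

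Finally, I would use the time-rescaling identity $RV(v_\eps^{(\gamma,\nu)})(1)=RV(v^{(\gamma,\nu)})(\eps)$, inherited from the definition of $v_\eps^{(\gamma,\nu)}$ as a time-rescaling of $v^{(\gamma,\nu)}$ at rate $\eps$, exactly as exploited in Corollary \ref{integrated vol LDP}. Renaming $\eps$ to $t$ converts the $\eps\downarrow 0$ LDP into the claimed $t\downarrow 0$ LDP on $\RR^*_+$. The normalisation $\tilde{\Lambda}^{(\gamma,\nu)}(v_0)=0$ then follows by taking the constant path $\xx\equiv v_0$, for which $RV(\xx)(1)=v_0$ and $\Lambda^{(\gamma,\nu)}(v_0)=0$ by Theorem \ref{truncated var ldp theorem}; non-negativity of the rate function forces equality. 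No substantive obstacle is expected here: this is exactly why the authors introduce the corollary under the heading of Remark \ref{remark: integrated variance remark}. The only minor point worth checking is that $\tilde{\Lambda}^{(\gamma,\nu)}$ inherits the good rate function property from $\Lambda^{(\gamma,\nu)}$, which is automatic since continuous images of compact sets are compact.
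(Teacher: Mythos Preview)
Your proposal is correct and follows essentially the same approach as the paper: the authors simply invoke Remark \ref{remark: integrated variance remark}, i.e.\ rerun the proof of Corollary \ref{integrated vol LDP} with $\Lambda^{(\gamma,\nu)}$ in place of $\Lambda^v$, using continuity of $RV$ and the Contraction Principle together with the time-rescaling $RV(v_\eps^{(\gamma,\nu)})(t)=RV(v^{(\gamma,\nu)})(\eps t)$. The only cosmetic difference is that you split the contraction into two steps (first $RV$, then evaluation at $1$), whereas the paper evaluates at $t=1$ directly after the pathwise contraction; both are equivalent.
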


To get the small-time implied volatility result, analogous to Corollary \ref{cor:asymptoticImpliedVol}, we need the following Lemma, which is used in place of \eqref{qth moment of RV}. The remainder of the proof then follows identically.

\begin{lemma}\label{lemma: qth moment for mixed variance process}
For all $t\in\Tt$ and $q>1$ we have

$$\EE
\left[
\left(
RV\left(v^{(\gamma,\nu)}\right)(t)
\right)^q
\right] \leq \frac{v^q_0 n^q}{t^{q-1}} \exp \left(\frac{q^2(\nu^*)^2}{2\eta^2}\left(q^2-q\right)t^{2\alpha+1}\right),  $$
where $\nu^*=\max \{\nu_1,...,\nu_n\}$.
\end{lemma}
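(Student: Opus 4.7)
The plan is to push the $q$-th power inside the time integral, bound the mixed variance pointwise by a sum of Wick-exponentials of the Gaussian process $Z$, and then compute the resulting explicit Gaussian moments.

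First I would apply H\"older's inequality (equivalently, Jensen applied to $x\mapsto x^q$ against the normalised measure $\D s/t$ on $[0,t]$) and then Fubini--Tonelli to obtain
$$
\EE\Bigl[\bigl(RV(v^{(\gamma,\nu)})(t)\bigr)^q\Bigr] \le \frac{1}{t}\int_0^t \EE\bigl[(v_s^{(\gamma,\nu)})^q\bigr]\,\D s,
$$
which reduces the problem to a uniform estimate of $\EE[(v_s^{(\gamma,\nu)})^q]$ over $s\in[0,t]$.

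Second, I would bound $(v_s^{(\gamma,\nu)})^q$ pointwise. Applying the elementary convexity inequality $(\sum_{i=1}^n a_i)^q\le n^{q-1}\sum_{i=1}^n a_i^q$ (which is Jensen for $x\mapsto x^q$ on the uniform probability measure on $\{1,\dots,n\}$) together with $\gamma_i\in[0,1]$ gives
$$
(v_s^{(\gamma,\nu)})^q \;\le\; v_0^q\, n^{q-1}\sum_{i=1}^n \exp\!\Bigl(\tfrac{q\nu_i}{\eta}Z_s - \tfrac{q\nu_i^2}{2}s^{\beta}\Bigr).
$$
By the It\^o isometry applied to \eqref{eq:SDEZ}--\eqref{eq:K}, $Z_s$ is centred Gaussian with variance $\eta^2 s^{\beta}$, so the standard Gaussian moment generating function yields
$\EE[\exp(\tfrac{q\nu_i}{\eta}Z_s - \tfrac{q\nu_i^2}{2}s^{\beta})] = \exp(\tfrac{(q^2-q)\nu_i^2}{2}s^{\beta})$, which is bounded above by $\exp(\tfrac{(q^2-q)(\nu^*)^2}{2}s^{\beta})$.

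Summing the $n$ identical upper bounds produces an extra factor of $n$, so $\EE[(v_s^{(\gamma,\nu)})^q]\le v_0^q n^q \exp(\tfrac{(q^2-q)(\nu^*)^2}{2}s^{\beta})$. Since the exponent is increasing in $s$, the right-hand side is majorised by its value at $s=t$, and the remaining time integration is immediate. The whole argument is an explicit Gaussian computation and presents no genuine obstacle; the only care required is bookkeeping the multiplicative constants, the $n^q$ in the statement arising from combining the $n^{q-1}$ produced by the convexity step with the extra $n$ from summing the $n$ identical Gaussian bounds, while the factor $t^{-(q-1)}\ge 1$ on $\Tt=[0,1]$ merely leaves slack in the final inequality.
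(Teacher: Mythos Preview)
Your proof is correct and follows essentially the same approach as the paper: both combine the discrete convexity inequality $(\sum_{i=1}^n x_i)^q\le n^{q-1}\sum_{i=1}^n x_i^q$ with Jensen's inequality on the time average, then evaluate the lognormal moments explicitly and bound by $\nu^*$. The only cosmetic difference is the order---you push the $q$-th power inside the time integral before applying the discrete convexity step, whereas the paper swaps the two---and your exponent $\tfrac{(q^2-q)(\nu^*)^2}{2}$ is in fact the correct one (the extra $q^2/\eta^2$ in the stated bound and the stray $1/\eta^2$ in the paper's own display are typos).
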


\begin{proof}
First we note that by H\"older's inequality $(\sum_{i=1}^n x_i)^q\leq n^{q-1} \sum_{i=1}^n (x_i)^q$, for $x_i>0$. Since, $\gamma_i\leq 1$ for $i=1,...,n$, we obtain
$$\left(RV\left(v^{(\gamma,\nu)}\right)(t)\right)^q
\le \frac{v_0^q}{t^q} n^{q-1} \sum_{i=1}^n\int_0^t
\mathbb{E}\left[ \exp\left(\frac{ q \nu_i}{\eta}Z_s-\frac{q\nu_i^2}{2 \eta^2 }s^{2\alpha+1}\right)
\right]
\D s\le  \frac{v_0^q}{t^{q-1}} n^{q-1} \sum_{i=1}^n
 \exp\left( \frac{\nu_i^2}{2 \eta^2 }\left(q^2-q \right)t^{2\alpha+1}\right).
$$
Choosing $\nu^*=\max \{\nu_1,...,\nu_n\}$ the result directly follows.

\end{proof}

\begin{corollary}\label{cor: small time option for mixed var process}
For log moneyness $k:=\log \frac{K}{RV(v^{(\gamma,\nu)})(0)} \neq 0$, the following equality holds true for Call options on integrated variance in the mixed rough Bergomi model:
\begin{equation}
\label{eq:OTM_IV mixed variance}
\lim_{t \downarrow 0} t^\beta \log \mathbb{E} \left[ \left( RV(v^{(\gamma,\nu)})(t)-\E^k\right)^+\right] = - \mathrm{I}(k),
\end{equation}
where $\mathrm{I}$ is defined as
$\mathrm{I}(x) := \inf_{y>x} \tilde{\Lambda}^{(\gamma,\nu)}(\E^y)$ for $x >0$,
$\mathrm{I}(x) := \inf_{y<x} \tilde{\Lambda}^{(\gamma,\nu)}(\E^y)$ for $x <0$.

Similarly, for log moneyness $k:=\log \frac{K}{\sqrt{RV(v^{(\gamma,\nu)})(0)}} \neq 0$,
\begin{equation}
\label{eq:OTM_sqIV mixed variance}
\lim_{t \downarrow 0} t^\beta \log \mathbb{E} \left[ \left( \sqrt{RV(v^{(\gamma,\nu)})(t)}-\E^k\right)^+\right] = -\bar{\mathrm{I}}(k),
\end{equation}
where $\bar{\mathrm{I}}$ is defined analogously as
$\bar{\mathrm{I}}(x) := \inf_{y>x} \tilde{\Lambda}^{(\gamma,\nu)}(\E^{2y})$ for $x >0$
and $\bar{\mathrm{I}}(x) := \inf_{y<x} \tilde{\Lambda}^{(\gamma,\nu)}(\E^{2y})$ for $x<0$.
\end{corollary}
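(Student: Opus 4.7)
The plan is to mirror the proof of Corollary \ref{cor:ImpliedVolrBergomi} in Appendix \ref{3.7}, replacing $\hat{\Lambda}^v$ by $\tilde{\Lambda}^{(\gamma,\nu)}$ throughout and substituting the corresponding moment bound from Lemma \ref{lemma: qth moment for mixed variance process} for the one used in the rough Bergomi case. The starting point is the LDP for $X_t := RV(v^{(\gamma,\nu)})(t)$ on $\RR^*_+$ given by Corollary \ref{integrated trunc var LDP}, with speed $t^{-\beta}$ and rate function $\tilde{\Lambda}^{(\gamma,\nu)}$. I would first verify continuity of $\tilde{\Lambda}^{(\gamma,\nu)}$ exactly as for $\hat{\Lambda}^v$ (lower semi-continuity is automatic, upper semi-continuity follows from the explicit minimisation against the continuous function $\Lambda^{(\gamma,\nu)}$), which upgrades the LDP upper and lower bounds to a genuine limit on half-lines of the form $\{y > k\}$ and $\{y < k\}$.

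For the OTM case $k > 0$, the lower bound follows from the elementary inequality
\[
(X_t - \E^k)^+ \ge (\E^{k+\delta} - \E^k)\, \mathbf{1}_{\{X_t \ge \E^{k+\delta}\}},
\]
taking $t^\beta\log(\cdot)$, applying the LDP lower bound to the indicator probability, and sending $\delta \downarrow 0$ by continuity of $\tilde{\Lambda}^{(\gamma,\nu)}$, which delivers $\liminf_{t \downarrow 0} t^\beta \log \EE[(X_t - \E^k)^+] \ge -\mathrm{I}(k)$. For the matching upper bound I would use Hölder's inequality with conjugate exponents $p,q > 1$:
\[
\EE[(X_t - \E^k)^+] \le \EE[X_t\, \mathbf{1}_{\{X_t \ge \E^k\}}] \le \EE[X_t^q]^{1/q}\, \PP(X_t \ge \E^k)^{1/p}.
\]
Feeding Lemma \ref{lemma: qth moment for mixed variance process} into the moment factor gives a bound of the form $\tfrac{v_0 n}{t^{(q-1)/q}} \exp(C(q) t^\beta)$; since $\beta \in (0,1)$, $t^\beta \log(1/t) \to 0$ and the exponential factor is bounded, so $t^\beta \log \EE[X_t^q]^{1/q} \to 0$ for each fixed $q$. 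Applying the LDP upper bound to the probability term and then letting $q \uparrow \infty$ (so $1/p \to 1$) yields $\limsup_{t \downarrow 0} t^\beta \log \EE[(X_t - \E^k)^+] \le -\mathrm{I}(k)$.

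The case $k < 0$ is handled symmetrically, applying the LDP to sets of the form $\{X_t \le \E^k\}$ for the corresponding OTM payoff and invoking Lemma \ref{lemma: qth moment for mixed variance process} in the same Hölder bound; the argument is structurally identical, with $\inf_{y < k}$ replacing $\inf_{y > k}$. Finally, \eqref{eq:OTM_sqIV mixed variance} follows from \eqref{eq:OTM_IV mixed variance} by the change of variable $\{\sqrt{X_t} \ge \E^k\} = \{X_t \ge \E^{2k}\}$ together with the moment control of Lemma \ref{lemma: qth moment for mixed variance process} applied at exponent $q/2$, giving the analogous statement with $\bar{\mathrm{I}}$.

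The main obstacle I anticipate is the interchange of limits in the upper bound: both the polynomial prefactor $t^{-(q-1)/q}$ and the constant $C(q)$ growing like $q(q^2-q)(\nu^*)^2/(2\eta^2)$ depend on $q$, so one must send $t \downarrow 0$ first with $q$ fixed — using only that $\beta \in (0,1)$ to kill the $t^\beta \log(1/t)$ contribution and the $t^{2\beta}$ contribution — and only afterwards let $q \uparrow \infty$ so that $1/p$ approaches $1$ and the LDP rate $\mathrm{I}(k)$ is recovered cleanly. Beyond this, the only step demanding verification beyond the rough Bergomi case is the continuity of the rate function $\tilde{\Lambda}^{(\gamma,\nu)}$, which should follow from an essentially identical argument.
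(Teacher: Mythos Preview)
Your proposal is correct and follows essentially the same approach as the paper, whose proof simply states that the result follows from Lemma \ref{lemma: qth moment for mixed variance process} together with the proof of Corollary \ref{cor:ImpliedVolrBergomi}. Your slight variation---bounding $(X_t-\E^k)^+\le X_t\,\ind_{\{X_t\ge\E^k\}}$ before applying H\"older, rather than applying H\"older directly to $(X_t-\E^k)\ind_{\{X_t\ge\E^k\}}$ and then invoking the convexity estimate $|a-b|^q\le 2^{q-1}(a^q+b^q)$ as the paper does---is a harmless simplification.
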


\begin{proof}
Follows directly from Lemma \ref{lemma: qth moment for mixed variance process} and proof of Corollary \ref{cor:ImpliedVolrBergomi}.
\end{proof}

The small-time implied volatility behaviour for the mixed rough Bergomi model is then given by Corollary \ref{cor:asymptoticImpliedVol}, where the function $\mathrm{I}$ is defined in terms of the rate function $\tilde{\Lambda}^{(\gamma,\nu)}$, as in Corollary \ref{cor: small time option for mixed var process}, in this case.

\subsection{Small-time results for the multi-factor rough Bergomi model}
The small-time behaviour of the multi-factor rough Bergomi model~\eqref{eq:mixedvarianceMultiFactorProcess} can then be obtained; see Theorem \ref{Th:LDP_mixedOU} below; note that
$\Lambda^m$  is the rate function associated to the reproducing kernel Hilbert space of the measure induced by $(W_1,\cdots,W_m)$ on $\mathcal{C}(\Tt,\RR^m)$.
The proof is given in Appendix \ref{3.14}.

\begin{theorem}
\label{Th:LDP_mixedOU}
The variance process in the multi-factor rough Bergomi model $\left(v^{(\gamma,\nu,\Sigma)}_t\right)_{t\in \Tt}$ satisfies a large deviations principle on $\RR^*_+$ with speed $t^{-\beta}$ and rate function
$$\Lambda^{(\gamma,\nu,\Sigma)} (\yy)
 = \inf \left\{ \Lambda^m(\mathrm{x}) : \mathrm{x} \in \mathcal{H}_m, \mathrm{y} = v_0 \sum_{i=1}^n \gamma_i
 \exp \left( \frac{\nu^i}{\eta} \cdot \mathrm{L}_i \mathrm{x}(1)\right) \right\}, $$
satisfying $\Lambda^{(\gamma,\nu,\Sigma)} (v_0) = 0$.
\end{theorem}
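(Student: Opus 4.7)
The plan is to follow the two-step blueprint of Theorems~\ref{LDP for vol process} and~\ref{truncated var ldp theorem}: first establish a pathwise small-time LDP for the rescaled $m$-dimensional driver $\mathcal{Z}$ of~\eqref{eq:nDimensionalSDEZ}, then push it through a continuous map via the Contraction Principle (Appendix~\ref{appendix: contraction princ}). The components $\mathcal{Z}^1,\dots,\mathcal{Z}^m$ are independent Volterra integrals against independent Brownian motions; each rescaled coordinate $(\eps^{\beta/2}\mathcal{Z}^j_{\eps t})_{\eps>0,\,t\in\Tt}$ inherits a pathwise LDP on $\Cc(\Tt)$ with speed $\eps^{-\beta}$ and rate $\Lambda^Z$ from the proof of Theorem~\ref{LDP for vol process}, and independence upgrades the marginals to a joint LDP on $\Cc(\Tt,\RR^m)$ with rate function $\Lambda^m(\mathrm{x})=\sum_{j=1}^m\Lambda^Z(\mathrm{x}^j)$ on the product RKHS $\mathcal{H}_m$.

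Next I would dispose of the Wick/It\^o correction by an exponential equivalence argument. Writing
$$
\mathcal{E}\bigl(\tfrac{\nu^i}{\eta}\cdot\mathrm{L}_i\mathcal{Z}_{\eps t}\bigr)
=\exp\Bigl(\tfrac{\nu^i}{\eta}\cdot\mathrm{L}_i\mathcal{Z}_{\eps t}-\tfrac{c_i}{2}(\eps t)^{\beta}\Bigr),
$$
where $c_i$ is a finite constant depending only on $\nu^i$, $\mathrm{L}_i$ and $\eta$, the deterministic term is of order $\eps^\beta$ and therefore exponentially negligible at speed $\eps^{-\beta}$. The argument of Proposition~\ref{pp:exp_equi L} then shows that $(v^{(\gamma,\nu,\Sigma)}_{\eps t})_{\eps>0,\,t\in\Tt}$ is exponentially equivalent to $v_0\sum_{i=1}^n\gamma_i\exp\bigl(\tfrac{\nu^i}{\eta}\cdot\mathrm{L}_i\mathcal{Z}_{\eps t}\bigr)$, which depends on the Gaussian driver alone.

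Finally, the map $F:\Cc(\Tt,\RR^m)\to\RR^*_+$ defined by $F(\mathrm{x}):=v_0\sum_{i=1}^n\gamma_i\exp\bigl(\tfrac{\nu^i}{\eta}\cdot\mathrm{L}_i\mathrm{x}(1)\bigr)$ is a composition of point evaluation at $t=1$, linear maps, inner products, exponentials and a convex combination, hence continuous. Combining the LDP of the first step with the exponential equivalence of the second, the Contraction Principle yields the stated rate function $\Lambda^{(\gamma,\nu,\Sigma)}$, and $\Lambda^{(\gamma,\nu,\Sigma)}(v_0)=0$ is attained at $\mathrm{x}\equiv 0$ since $\sum_i\gamma_i=1$ and $\Lambda^m(0)=0$. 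The main obstacle I anticipate is the exponential equivalence of step two: one must uniformly control, over $t\in\Tt$ and $i\in\{1,\dots,n\}$, the effect of removing the corrections $\tfrac{c_i}{2}(\eps t)^\beta$ from within the sum, and propagate this to exponential equivalence of the variance process itself. Boundedness of $\|\mathrm{L}_i\|$, $\|\nu^i\|$ and of the weights $\gamma_i\in[0,1]$ keeps this routine, but it is the point at which the multi-factor correlation structure enters non-trivially and demands care.
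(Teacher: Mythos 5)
Your blueprint is the right one and aligns with the paper's proof: rescale, invoke a Schilder-type LDP for the Gaussian driver, dispose of the Wick correction by exponential equivalence, and push through a continuous map via the Contraction Principle. Two points deserve care, one a parametrisation mismatch and one a genuine gap in the order of operations.

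First, the base LDP. You obtain the joint LDP for the rescaled $\mathcal{Z}$ from the scalar Volterra LDP of Theorem~\ref{LDP for vol process} together with independence of the coordinates, producing a rate $\sum_{j=1}^m \Lambda^Z(\xx^j)$ on the product of Volterra RKHSs. This is a valid and arguably more elementary route than the paper's, which starts from the $m$-dimensional Schilder theorem for $(W^1,\dots,W^m)$ with rate $\Lambda^m(\xx)=\tfrac12\|\xx\|_2^2$ on the Cameron--Martin product $\mathcal{H}_m$ and then transfers through the Volterra map using an extension of \cite[Theorem 3.6]{HJL18}. The two parametrisations are isometric via $\xx^j=\int_0^\cdot K_\alpha(s,\cdot)f^j(s)\,\D s \leftrightarrow f^j\in L^2(\Tt)$, so they encode the same infimum, but your $\Lambda^m$ and $\mathcal{H}_m$ are \emph{not} the objects named in the theorem statement. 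You should make the identification explicit (each object pulls back to $\tfrac12\|f\|^2_{L^2(\Tt,\RR^m)}$) so that your output literally matches the stated $\Lambda^{(\gamma,\nu,\Sigma)}$ rather than an isometric reformulation of it.

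Second, the exponential-equivalence step is where your argument is weaker than the paper's. You assert that $(v^{(\gamma,\nu,\Sigma)}_{\eps t})_{\eps>0}$ is exponentially equivalent to $v_0\sum_i\gamma_i\exp\bigl(\tfrac{\nu^i}{\eta}\cdot\mathrm{L}_i\mathcal{Z}_{\eps t}\bigr)$ "by the argument of Proposition~\ref{pp:exp_equi L}." But that proposition treats an \emph{additive} perturbation of a stochastic integral by a deterministic kernel defect; here the perturbation is \emph{multiplicative}, since
$$
v^{(\gamma,\nu,\Sigma)}_{\eps t}-v_0\sum_i\gamma_i\exp\Bigl(\tfrac{\nu^i}{\eta}\cdot\mathrm{L}_i\mathcal{Z}_{\eps t}\Bigr)
= v_0\sum_i\gamma_i\exp\Bigl(\tfrac{\nu^i}{\eta}\cdot\mathrm{L}_i\mathcal{Z}_{\eps t}\Bigr)\Bigl(\E^{-\frac{c_i}{2}(\eps t)^\beta}-1\Bigr),
$$
whose sup-norm is bounded by a deterministic $\mathcal{O}(\eps^\beta)$ factor \emph{times the random quantity} $\sup_t\sum_i\gamma_i\exp(\tfrac{\nu^i}{\eta}\cdot\mathrm{L}_i\mathcal{Z}_{\eps t})$. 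To make your claim rigorous you would need a tail bound on that random factor at scale $\delta/\eps^\beta$, which is not what Proposition~\ref{pp:exp_equi L} provides and is not as routine as you suggest. The paper avoids this entirely by performing the exponential equivalence \emph{before} exponentiating: it shows that the linear drivers $\tfrac{\nu^i}{\eta}\cdot\mathrm{L}_i\mathcal{Z}^\eps$ and $\tfrac{\nu^i}{\eta}\cdot\mathrm{L}_i\mathcal{Z}^\eps-\tfrac12\nu^i\Sigma_i\nu^i(\eps\cdot)^\beta$ differ by a deterministic, uniformly vanishing function, so exponential equivalence is immediate (the relevant probability is eventually zero); it then applies the Contraction Principle once to the continuous map $g^\gamma(\xx)=v_0\sum_i\gamma_i\exp(\tfrac{\nu^i}{\eta}\cdot\xx_i)$. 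You should reorder your steps accordingly: shift the drivers, then contract, rather than contract and then shift. Also note that your parenthetical remark that "the multi-factor correlation structure enters non-trivially" here is a red herring: the Wick corrections $\tfrac12\nu^i\Sigma_i\nu^i(\eps t)^\beta$ are deterministic scalars, and correlation plays no role in the exponential-equivalence step; it enters only through the map $g^\gamma$ and the matrices $\mathrm{L}_i$, which the Contraction Principle handles uniformly.
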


As with the mixed variance process, Remark \ref{remark: integrated variance remark} gives us the following
small-time result for $RV(v^{(\gamma,\nu,\Sigma)})$ straight off the bat.
\begin{corollary}\label{integrated trunc var LDP}
The integrated variance process $(RV\left(v^{(\gamma,\nu,\Sigma)}\right)(t))_{t\in\Tt}$ in the multi-factor Bergomi model satisfies a large deviations principle on $\RR^*_+$ as $t$ tends to zero, with speed $t^{-\beta}$ and rate function
 $$\tilde{\Lambda}^{(\gamma,\nu,\Sigma)}(\yy) := \inf \left\{ \Lambda^{(\gamma,\nu,\Sigma)}(\xx) : \yy = RV(\xx)(1) \right\},$$ where $\tilde{\Lambda}^{(\gamma,\nu,\Sigma)}(v_0)=0$.
\end{corollary}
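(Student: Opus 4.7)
The plan is to mimic the proof of Corollary~\ref{integrated vol LDP} essentially verbatim, substituting the multi-factor ingredients for their rough Bergomi counterparts. The only real input needed is the pathwise companion of Theorem~\ref{Th:LDP_mixedOU}, namely a large deviations principle for the process $(v^{(\gamma,\nu,\Sigma)}_\eps)_{\eps>0}$ on $\Cc(\Tt)$ with speed $\eps^{-\beta}$ and a path-level rate function (of which the stated time-$1$ statement is an immediate specialisation). This upgrade is what the construction in Appendix~\ref{3.14} actually delivers, since the argument produces the LDP for the underlying Gaussian driver $\mathcal{Z}_\eps$ on $\Cc(\Tt,\RR^m)$ and then pushes it through the fixed continuous map $\xx \mapsto v_0 \sum_i \gamma_i \exp\bigl(\tfrac{\nu^i}{\eta}\cdot \mathrm{L}_i \xx\bigr)$.

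With that pathwise LDP in hand, I would next verify that the averaging operator $RV: \Cc(\Tt) \to \Cc(\Tt)$ from \eqref{IV operator} is continuous in the sup norm. Exactly as in Corollary~\ref{integrated vol LDP},
\begin{equation*}
\|RV(f) - RV(g)\|_\infty \;\leq\; \sup_{t\in\Tt} \frac{1}{t} \int_0^t |f(s)-g(s)| \D s \;\leq\; \|f-g\|_\infty,
\end{equation*}
so the Contraction Principle (see Appendix~\ref{appendix: contraction princ}) transfers the pathwise LDP through $RV$ and yields a pathwise LDP for $(RV(v^{(\gamma,\nu,\Sigma)}_\eps))_{\eps>0}$ on $\Cc(\Tt)$ with the same speed $\eps^{-\beta}$ and rate function $\mathrm{z} \mapsto \inf\{\Lambda^{(\gamma,\nu,\Sigma)}(\xx) : \mathrm{z} = RV(\xx)\}$.

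To convert this pathwise statement into the small-time LDP advertised in the corollary, I would invoke the elementary scaling identity $RV(v_\eps)(t) = RV(v)(\eps t)$ (valid for any sample path), evaluate at $t=1$, and relabel $\eps \mapsto t$; this produces an LDP on $\RR^*_+$ as $t\downarrow 0$ with speed $t^{-\beta}$ and rate function $\tilde{\Lambda}^{(\gamma,\nu,\Sigma)}(\yy) = \inf\{\Lambda^{(\gamma,\nu,\Sigma)}(\xx) : \yy = RV(\xx)(1)\}$. The vanishing $\tilde{\Lambda}^{(\gamma,\nu,\Sigma)}(v_0) = 0$ then follows by taking the constant path $\xx \equiv v_0$, which satisfies $RV(\xx)(1)=v_0$ and, by the variational formula in Theorem~\ref{Th:LDP_mixedOU}, has $\Lambda^{(\gamma,\nu,\Sigma)}(\xx)=0$ (realised by the null element $\mathrm{x} \equiv 0 \in \mathcal{H}_m$).

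The only delicate point is isolating the pathwise upgrade of Theorem~\ref{Th:LDP_mixedOU}, but this is really bookkeeping rather than a new difficulty: the Gaussian pathwise LDP for $\mathcal{Z}_\eps$ and the continuity of the exponential map are already in place, so once the statement is formulated on $\Cc(\Tt)$, the remainder is a line-for-line translation of the proof of Corollary~\ref{integrated vol LDP}, which is precisely the content of Remark~\ref{remark: integrated variance remark}.
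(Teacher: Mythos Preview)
Your proposal is correct and follows essentially the same route as the paper, which simply invokes Remark~\ref{remark: integrated variance remark} to say that the argument of Corollary~\ref{integrated vol LDP} carries over verbatim. You are in fact slightly more careful than the paper in flagging that Theorem~\ref{Th:LDP_mixedOU} is stated on $\RR^*_+$ whereas the pathwise version on $\Cc(\Tt)$ is what is needed here---and you correctly observe that the proof in Appendix~\ref{3.14} already delivers this pathwise upgrade.
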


We now establish the small-time behaviour for Call options on realised variance in Corollary \ref{cor: small time option for multifactor rBerg process}, by adapting the proof of Corollary \ref{cor:ImpliedVolrBergomi} as in the previous subsection. To do so we use Lemma \ref{lemma: qth moment for multifactor rBergomi} in place of \eqref{qth moment of RV}.
Then we attain the small-time implied volatility behaviour for the multi-factor rough Bergomi model in Corollary \ref{cor:asymptoticImpliedVol}, where the function $\mathrm{I}$ is given by Corollary \ref{cor: small time option for multifactor rBerg process}.

\begin{lemma}\label{lemma: qth moment for multifactor rBergomi}
For all $t\in\Tt$ and $q>1$ we have

$$\EE
\left[
\left(
RV\left(v^{(\gamma,\nu,\Sigma)}\right)(t)
\right)^q
\right]\leq \frac{v^q_0 n^q}{t^{q-1}} \exp \left(\frac{(\nu^*)^2}{2 \eta^2 }\left(q^2-q \right)t^{2\alpha+1}\right),  $$
where $\nu^*=\max \{\nu_1,...,\nu_n\}$
\end{lemma}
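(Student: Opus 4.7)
The plan is to imitate the proof of Lemma~\ref{lemma: qth moment for mixed variance process}, with the one-dimensional exponential martingales replaced by multi-dimensional Wick stochastic exponentials. Two convexity-type inequalities reduce the problem to bounding $\EE[\mathcal{E}(\frac{\nu^i}{\eta}\cdot\mathrm{L}_i\mathcal{Z}_s)^q]$ for each $i\in\{1,\ldots,n\}$, which can then be evaluated in closed form because the Wick exponential of a centered Gaussian is a mean-one lognormal.

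First, applying the power-mean inequality $(\sum_{i=1}^n a_i)^q \leq n^{q-1}\sum_{i=1}^n a_i^q$ (valid for $q\geq 1$, $a_i\geq 0$) together with $\gamma_i\in[0,1]$, one obtains
\[
\bigl(v_s^{(\gamma,\nu,\Sigma)}\bigr)^q \leq v_0^q\, n^{q-1}\sum_{i=1}^n \mathcal{E}\!\left(\frac{\nu^i}{\eta}\cdot\mathrm{L}_i\mathcal{Z}_s\right)^{q},
\]
and then Jensen's inequality in the time variable yields
\[
\bigl(RV(v^{(\gamma,\nu,\Sigma)})(t)\bigr)^q = \Bigl(\frac{1}{t}\int_0^t v_s^{(\gamma,\nu,\Sigma)}\,\D s\Bigr)^{q} \leq \frac{1}{t}\int_0^t \bigl(v_s^{(\gamma,\nu,\Sigma)}\bigr)^q\,\D s.
\]
Taking expectation and interchanging $\EE$ with the finite sum and integral (justified by non-negativity) reduces the task to bounding $\EE[\mathcal{E}(\frac{\nu^i}{\eta}\cdot\mathrm{L}_i\mathcal{Z}_s)^q]$ uniformly in $s \in [0,t]$.

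Each scalar $G_i(s):=\frac{\nu^i}{\eta}\cdot\mathrm{L}_i\mathcal{Z}_s$ is a centered Gaussian, as a linear combination of the independent centered Gaussian integrals $\mathcal{Z}^j_s$. Using the identity $\int_0^s K_\alpha(u,s)^2\,\D u = \eta^2 s^{2\alpha+1}$ together with the mutual independence of $W^1,\ldots,W^m$, a direct covariance computation gives
\[
\mathrm{Var}(G_i(s)) = \frac{1}{\eta^2}(\nu^i)^{\top}\mathrm{L}_i\mathrm{L}_i^{\top}\nu^i\cdot\eta^2 s^{2\alpha+1} = (\nu^i)^{\top}\Sigma_i\,\nu^i\cdot s^{2\alpha+1}.
\]
Since $\mathcal{E}(G_i(s))=\exp\bigl(G_i(s)-\tfrac{1}{2}\mathrm{Var}(G_i(s))\bigr)$ is mean-one lognormal, its $q$-th moment is $\exp\bigl(\tfrac{q(q-1)}{2}\mathrm{Var}(G_i(s))\bigr)$. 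Combining the $v_0^q n^{q-1}$ prefactor, the sum of $n$ identical worst-case terms, and monotonicity of the resulting exponential in $s$, one arrives at the claimed bound.

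The main obstacle lies in the final step: controlling the quadratic form $(\nu^i)^{\top}\Sigma_i\nu^i$ by the scalar $(\nu^*)^2$. In the single-factor setting of Lemma~\ref{lemma: qth moment for mixed variance process} no matrix is involved and the variance simply reads $\nu_i^2 s^{2\alpha+1}$; here one must tacitly use that $\Sigma_i$ is suitably normalised (for example, as a correlation matrix with unit diagonal, whence $(\nu^i)^{\top}\Sigma_i\nu^i\leq\|\nu^i\|^2\leq m(\nu^*)^2$), or otherwise interpret $\nu^*$ so as to absorb the spectral norm of $\Sigma_i$ and the dimension $m$. Modulo this reinterpretation, the argument then follows the one-dimensional case word for word.
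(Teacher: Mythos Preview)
Your approach is essentially the same as the paper's: apply the power-mean (H\"older) inequality $(\sum_{i=1}^n x_i)^q \le n^{q-1}\sum_{i=1}^n x_i^q$, discard the factors $\gamma_i\le 1$, pass to the time average via Jensen, and then use that the $q$-th moment of a mean-one Gaussian Wick exponential equals $\exp\bigl(\tfrac{q(q-1)}{2}\mathrm{Var}\bigr)$. The paper's proof is terser and simply writes the resulting exponent as $\frac{\nu_i^2}{2\eta^2}(q^2-q)t^{2\alpha+1}$ before taking the maximum, without displaying the intermediate variance computation you give.

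You have, in fact, been more careful than the paper on the last step. Your observation that $\mathrm{Var}(G_i(s))=(\nu^i)^{\top}\Sigma_i\,\nu^i\, s^{2\alpha+1}$ is correct, and the paper's own proof writes the scalar symbol $\nu_i^2$ in its place without comment, even though each $\nu^i$ is a vector in $\RR^m$. So the gap you flag---how to bound $(\nu^i)^{\top}\Sigma_i\,\nu^i$ by $(\nu^*)^2/\eta^2$ for the stated definition $\nu^*=\max\{\nu_1,\ldots,\nu_n\}$---is not resolved in the paper either; it is tacitly absorbed into an (unstated) normalisation of $\Sigma_i$ or a reinterpretation of $\nu^*$, exactly as you suggest. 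Modulo this ambiguity, your argument matches the paper's line for line.
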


\begin{proof}
First we note that by H\"older's inequality $(\sum_{i=1}^n x_i)^q\leq n^{q-1} \sum_{i=1}^n (x_i)^q$, for $x_i>0$. Since, $\gamma_i\leq 1$ for $i=1,...,n$, we obtain
$$\left(RV\left(v^{(\gamma,\nu,\Sigma)}\right)(t)\right)^q
\le \frac{v_0^q}{t^q} n^{q-1} \sum_{i=1}^n\int_0^t
\mathbb{E}\left[ \mathcal{E}\left(\frac{\nu^i}{\eta}\cdot\mathrm{L}_i \mathcal{Z}_s\right)^q
\right]
\D s\le  \frac{v_0^q}{t^{q-1}} n^{q-1} \sum_{i=1}^n
 \exp\left( \frac{\nu_i^2}{2 \eta^2 }\left(q^2-q \right)t^{2\alpha+1}\right).
$$
Choosing $\nu^*=\max \{\nu_1,...,\nu_n\}$ the result directly follows.
\end{proof}

\begin{corollary}\label{cor: small time option for multifactor rBerg process}
For log moneyness $k:=\log \frac{K}{RV(v^{(\gamma,\nu,\Sigma)})(0)} \neq 0$, the following equality holds true for Call options on integrated variance in the multi-factor rough Bergomi model:
\begin{equation}
\label{eq:OTM_IV multifactor }
\lim_{t \downarrow 0} t^\beta \log \mathbb{E} \left[ \left( RV(v^{(\gamma,\nu,\Sigma)})(t)-\E^k\right)^+\right] = - \mathrm{I}(k),
\end{equation}
where $\mathrm{I}$ is defined as $\mathrm{I}(x) := \inf_{y>x} \tilde{\Lambda}^{(\gamma,\nu,\Sigma)}(\E^y)$ for $x >0$,
$\mathrm{I}(x) := \inf_{y<x} \tilde{\Lambda}^{(\gamma,\nu,\Sigma)}(\E^y)$ for $x <0$.

Similarly, for log moneyness $k:=\log \frac{K}{\sqrt{RV(v^{(\gamma,\nu,\Sigma)})(0)}} \neq 0$,
\begin{equation}
\label{eq:OTM_sqIV multifactor}
\lim_{t \downarrow 0} t^\beta \log \mathbb{E} \left[ \left( \sqrt{RV(v^{(\gamma,\nu,\Sigma)})(t)}-\E^k\right)^+\right] = -\bar{\mathrm{I}}(k),
\end{equation}
where $\bar{\mathrm{I}}$ is defined analogously as
$\bar{\mathrm{I}}(x) := \inf_{y>x} \tilde{\Lambda}^{(\gamma,\nu,\Sigma)}(\E^{2y})$ for $x >0$
and $\bar{\mathrm{I}}(x) := \inf_{y<x} \tilde{\Lambda}^{(\gamma,\nu,\Sigma)}(\E^{2y})$ for $x<0$.
\end{corollary}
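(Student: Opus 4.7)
The plan is to mirror the proof of Corollary \ref{cor:ImpliedVolrBergomi} given in Appendix \ref{3.7}, substituting the polynomial moment bound of Lemma \ref{lemma: qth moment for multifactor rBergomi} for its rough-Bergomi counterpart. The two main ingredients are already in hand: the pathwise LDP for $RV(v^{(\gamma,\nu,\Sigma)})$ on $\RR^*_+$ with speed $t^{-\beta}$ and rate function $\tilde{\Lambda}^{(\gamma,\nu,\Sigma)}$ from Corollary \ref{integrated trunc var LDP}, together with the $p$-th moment control from Lemma \ref{lemma: qth moment for multifactor rBergomi}.

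For \eqref{eq:OTM_IV multifactor} with $k>0$, I would first prove the lower bound. For any $\delta>0$ the inequality $(x-\E^k)^+ \ge (\E^{k+\delta}-\E^k)\ind_{\{x\ge \E^{k+\delta}\}}$ yields
\begin{equation*}
\EE\left[\bigl(RV(v^{(\gamma,\nu,\Sigma)})(t)-\E^k\bigr)^+\right] \ge (\E^{k+\delta}-\E^k)\,\PP\left(RV(v^{(\gamma,\nu,\Sigma)})(t)\ge \E^{k+\delta}\right),
\end{equation*}
so that taking $\liminf_{t\downarrow 0}$ of $t^\beta$ times the logarithm, applying the LDP lower bound from Corollary \ref{integrated trunc var LDP}, and then sending $\delta\downarrow 0$ using continuity of $\tilde{\Lambda}^{(\gamma,\nu,\Sigma)}\circ\exp$ delivers $-\mathrm{I}(k)$. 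For the matching upper bound, H\"older's inequality with conjugate exponents $p,q>1$ gives
\begin{equation*}
\EE\left[\bigl(RV(v^{(\gamma,\nu,\Sigma)})(t)-\E^k\bigr)^+\right] \le \EE\left[RV(v^{(\gamma,\nu,\Sigma)})(t)^p\right]^{1/p}\,\PP\left(RV(v^{(\gamma,\nu,\Sigma)})(t)\ge \E^k\right)^{1/q}.
\end{equation*}
Lemma \ref{lemma: qth moment for multifactor rBergomi} then shows that $t^\beta \log \EE[RV(v^{(\gamma,\nu,\Sigma)})(t)^p]$ is of order $t^\beta \log(1/t)+t^{2\beta}$ and hence vanishes as $t\downarrow 0$; combined with the LDP upper bound on the probability and the limit $q\downarrow 1$, this recovers $-\mathrm{I}(k)$ in the limsup.

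The OTM Put case $k<0$ is handled by the same mechanism but is strictly easier, since $(\E^k-x)^+\le \E^k$ is bounded and no moment bound is needed; only the LDP upper and lower bounds on $\PP(RV(v^{(\gamma,\nu,\Sigma)})(t)\le \E^k)$ are required. The square-root statement \eqref{eq:OTM_sqIV multifactor} then follows by the change of variable $\{\sqrt{RV(v^{(\gamma,\nu,\Sigma)})}\ge \E^k\}=\{RV(v^{(\gamma,\nu,\Sigma)})\ge \E^{2k}\}$ together with the elementary bound $\sqrt{x}\le 1+x$, which reduces the analysis to the previous case with $k$ replaced by $2k$ in the rate function; this gives the definition of $\bar{\mathrm{I}}$.

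The principal obstacle is the upper bound in the Call case: one must verify that the $p$-th moment growth of $RV(v^{(\gamma,\nu,\Sigma)})(t)$ does not exceed the LDP scale $t^{-\beta}$. This is precisely what Lemma \ref{lemma: qth moment for multifactor rBergomi} is designed for: the $t^{-(p-1)}$ prefactor becomes $\mathrm{O}(t^\beta \log(1/t))$ after multiplication by $t^\beta$, and the exponential term is of order $t^{2\alpha+1+\beta}=t^{2\beta}$, both vanishing in the limit. A subsidiary point is the continuity of the contracted rate $\tilde{\Lambda}^{(\gamma,\nu,\Sigma)}\circ\exp$ required to pass $\delta\downarrow 0$ in the lower bound; this can be established exactly as for $\hat{\Lambda}^v$ in the continuity corollary following Corollary \ref{integrated vol LDP}, combining lower semicontinuity of the rate with upper semicontinuity inherited from continuity of $\Lambda^{(\gamma,\nu,\Sigma)}$.
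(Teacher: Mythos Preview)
Your proposal is correct and follows essentially the same route as the paper: the paper's proof simply states that the result follows from Lemma \ref{lemma: qth moment for multifactor rBergomi} together with the proof of Corollary \ref{cor:ImpliedVolrBergomi}, and your write-up is precisely a fleshed-out version of that argument, with the H\"older upper bound controlled by the moment estimate and the lower bound via the usual $\delta$-shift. The only cosmetic difference is that the paper bounds $\EE[(RV-\E^k)^q]$ via $(a+b)^q\le 2^{q-1}(a^q+b^q)$ whereas you use $(x-\E^k)^+\le x$ before applying H\"older; both lead to the same conclusion.
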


\begin{proof}
Follows directly from Lemma \ref{lemma: qth moment for multifactor rBergomi} and the proof of Corollary \ref{cor:ImpliedVolrBergomi}.
\end{proof}

\section{Numerical results}\label{sec: numerics}
In this section we present numerical results for each of the three models given in Section \ref{sec:models}. We also analyse the effect of each parameters in the implied volatility smile.
Numerical experiments and codes are provided on \href{https://github.com/amuguruza/LDP-VolOptions}{GitHub: LDP-VolOptions }.
\subsection{RV smiles for rough Bergomi}
We begin with numerical results for the rough Bergomi model \eqref{rough Bergomi} using Corollary \ref{cor:asymptoticImpliedVol}. For the detailed numerical method we refer the reader to Appendix \ref{appendix: numerics}.
In Figure \ref{fig:rateFunction}, we represent the rate function given in Corollary \ref{integrated vol LDP}, which
 is the fundamental object to compute numerically. In particular, we notice that $\hat{\Lambda}^v$ is convex; a rigorous mathematical proof of this statement is left for future research.

\begin{figure}[h]
\hspace*{-30mm}
\includegraphics[scale=0.43]{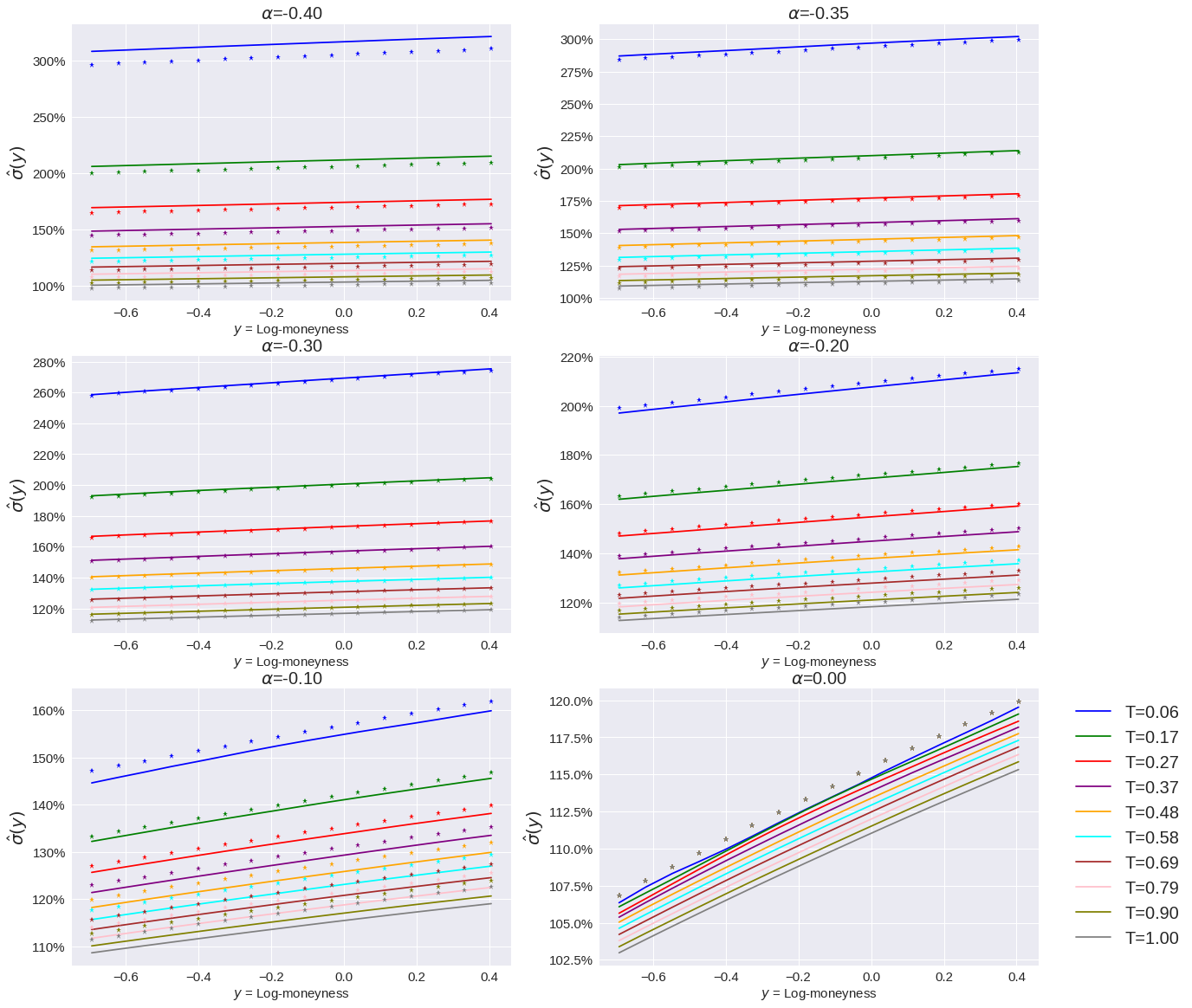}
\caption{Comparison of Monte Carlo computed  implied volatilities (straight lines) and LDP based implied volatilities (stars), in the rough Bergomi model, for different values of $\alpha$ and maturities $T$. We set $\eta=2$ and $v_0=0.04$; for Monte Carlo we use $200,000$ simulations and $\Delta t =\frac{1}{1008}$.}
\label{fig:MC vs. LDP}
\end{figure}

\begin{figure}[h]
\centering
\includegraphics[scale=0.38]{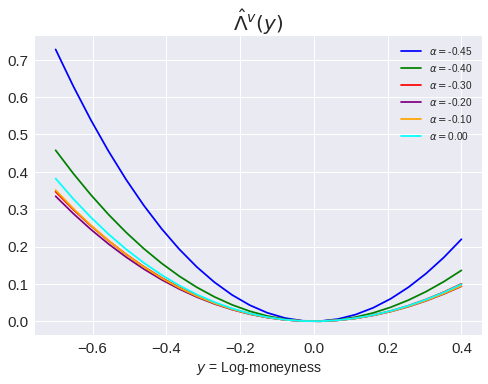}
\caption{Rate function $\hat{\Lambda}^v$ for different values of $\alpha$.}
\label{fig:rateFunction}
\end{figure}

More interestingly, in Figure \ref{fig:MC vs. LDP} we provide a comparison of Corollary \ref{cor:asymptoticImpliedVol} with respect to a benchmark generated by Monte Carlo simulations, and see all smiles to follow a linear trend.
In particular, we notice that Corollary \ref{cor:asymptoticImpliedVol} provides a surprisingly accurate estimate, even for relatively large maturities.
As a further numerical check, in Figure \ref{fig:AGM vs. LDP} we compare our results with the close-form at-the-money asymptotics given by Al\`os, Garc\'ia-Lorite and Muguruza \cite{AGM18} and once again find the correct convergence, suggesting a consistent numerical framework.

\begin{figure}[h!]
\hspace*{-1.0cm}
\includegraphics[scale=0.43]{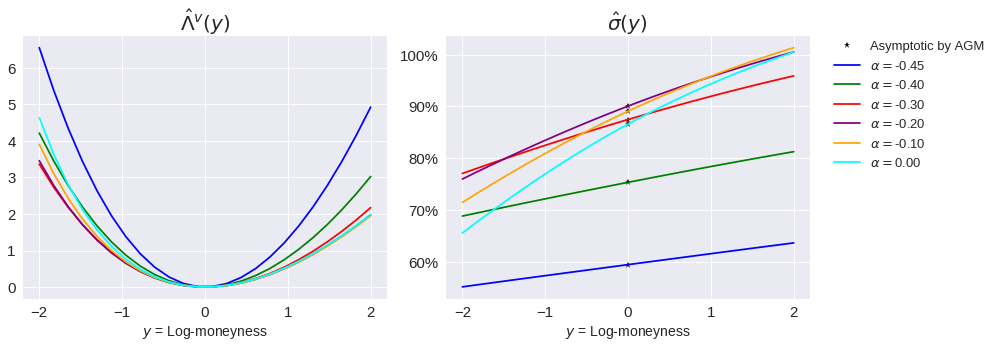}
\caption{Comparison of Al\`os, Garc\'ia-Lorite and Muguruza \cite{AGM18} at-the-money  implied volatility asymptotics  and LDP based implied volatilities for different values of $\alpha$, in the rough Bergomi model, with $\eta=1.5$  and $v_0=0.04$.}
\label{fig:AGM vs. LDP}
\end{figure}

\subsection{RV smiles for mixed rough Bergomi}\label{section: numerics mixed case}
We now consider the mixed rough Bergomi model \eqref{eq:mixedvarianceProcess} in a simplified form given by
$v_t=v_0 \left(\gamma_1 \mathcal{E}(\nu_1 Z_t)+\gamma_2 \mathcal{E}(\nu_2 Z_t)\right)$.
In Figure \ref{fig:roughBergomiSmilesFixedVolOfVol}, we observe that a constraint of the type $\gamma_1\nu_1+\gamma_2\nu_2=2$ in  the mixed variance process \eqref{eq:mixedvarianceProcess} allows us to fix the at-the-money implied volatility at a given level, whilst generating different slopes for different values of $(\nu_1,\nu_2,\gamma_1,\gamma_2)$; as in Figure \ref{fig:MC vs. LDP}, we see that the smiles generated follow a linear trend.
This is again consistent with the results found in \cite{AGM18}.

\begin{figure}[h!]
\centering
\includegraphics[scale=0.43]{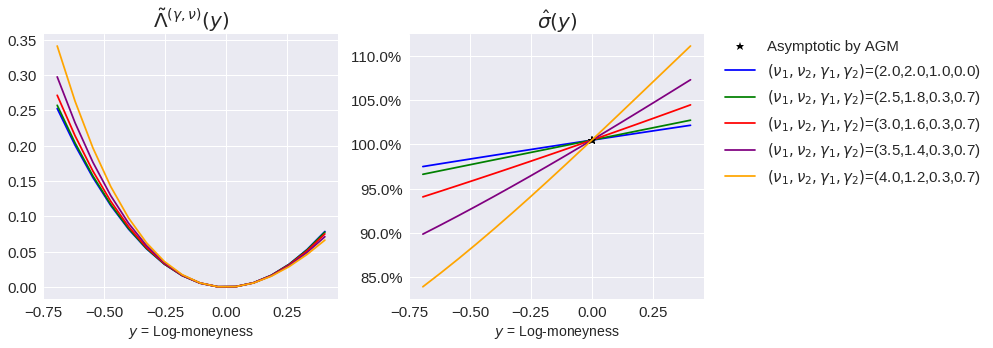}
\caption{Comparison of LDP based implied volatilities for different values of $(\nu_1,\nu_2,\gamma_1,\gamma_2)$ in the mixed rough Bergomi process \eqref{eq:mixedvarianceProcess} such that $\gamma_1\nu_1~+~\gamma_2\nu_2~=~2$, with $\alpha=-0.4$.}
\label{fig:roughBergomiSmilesFixedVolOfVol}
\end{figure}
\clearpage

\begin{remark}
At this point it is important to note that the mixed rough Bergomi model \ref{eq:mixedvarianceProcess} allows both the at-the-money implied volatility and its skew to be controlled through $(\gamma,\nu)$, whilst in the rough Bergomi model \eqref{rough Bergomi} there is not enough freedom to arbitrarily fit both quantities.
\end{remark}
\subsection{Linearity of smiles and approximation of the RV density }
Remarkably, we observe a linear pattern in Figures \ref{fig:MC vs. LDP}-\ref{fig:roughBergomiSmilesFixedVolOfVol} for around the money strikes in the (mixed) rough Bergomi model. By assuming this linear implied volatility smile in log-moneyness space, in Appendix \ref{app:linear approximation} we are able  to provide an approximation scheme for the realised variance density $\psi_{RV}(x,T),\; x,T\geq 0$ (see Proposition \ref{prop:density}) . This in turn, allows us to compute the price of a volatility swap by using,
$$\mathcal{P}_{VolSwap(T)}=\int_0^{\infty} \sqrt{x}\psi_{RV}(x,T) \D x,$$
where $\mathcal{P}_{VolSwap(T)}$ is the price of a Volatility Swap with maturity $T$.  Figure \ref{fig:VolSwap} presents the results of such approximation technique that turns out to be surprisingly accurate. We believe that this approximation could be useful to practitioners and more details are provided in Appendix \ref{app:linear approximation}.

\subsection{RV smiles for mixed multi-factor rough Bergomi}\label{section: numerics correlated case}
We conclude our analysis by introducing the correlation effect in the implied volatility smiles, by considering the mixed multi-factor rough Bergomi model \eqref{eq:mixedvarianceMultiFactorProcess}. We shall consider the following two simplified models for instantaneous variance
\begin{align}\label{eq:addingfactors}
v_t&=\mathcal{E}\left(\nu \int_0^t (t-s)^\alpha\D W_s+ \eta\left(\rho \int_0^t (t-s)^\alpha\D W_s+\sqrt{1-\rho^2}\int_0^t (t-s)^\alpha\D W^\perp_s\right) \right),\\\label{eq:mixingfactors}
v_t&=\frac{1}{2}\left(\mathcal{E}\left(\nu \int_0^t (t-s)^\alpha\D W_s\right)+\mathcal{E}\left( \eta\left(\rho \int_0^t (t-s)^\alpha\D W_s+\sqrt{1-\rho^2}\int_0^t (t-s)^\alpha\D W^\perp_s\right) \right)\right),
\end{align}
where $W$ and $W^\perp$ are independent standard Brownian motions and $\nu,\eta>0$.

On one hand, Figure \ref{fig:CorrelatedSmiles1} shows the implied volatility smiles corresponding to \eqref{eq:addingfactors}.
We conclude that adding up correlated factors inside the exponential does not change the behaviour of implied volatility smiles, and they still have a linear form around the money.
Moreover, in this context \cite{AGM18} results still hold and we provide the asymptotic benchmark in Figure \ref{fig:CorrelatedSmiles1} to support our numerical scheme.
On the other hand, Figure \ref{fig:CorrelatedSmiles2} shows the implied volatility smiles corresponding to \eqref{eq:mixingfactors}, which are evidently non-linear around the money in the negatively correlated cases.
Consequently, we can see that having a sum of exponentials, each one driven by a different (fractional) Brownian motion does indeed affect the behaviour of the convexity in the implied volatility around the money. We further superimpose a linear trend on top of the smiles in Figure \ref{fig:CorrelatedSmilesLinearApprox2} to clearly show the effect of correlation in the convexity of the smiles.

\begin{figure}[h!]
 \hspace*{-15mm}
\includegraphics[scale=0.55]{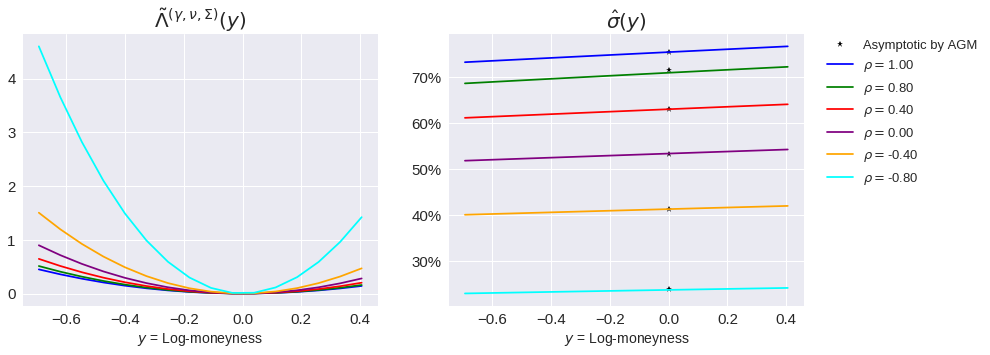}
\caption{Rate function and corresponding implied volatilities for the model \eqref{eq:addingfactors},
with $(\alpha,\nu,\eta)=(-0.4,0.75,0.75)$. }
\label{fig:CorrelatedSmiles1}
\end{figure}

\clearpage
\begin{remark}
Note here that models such as the 2 Factor Bergomi (and its mixed version) \cite{Bergomi3,Ber16} have the same small-time limiting behaviour as \eqref{eq:addingfactors} due to Proposition \ref{pp:exp_equi L},
where we set $L=L_{\text{Gamma}}$ and $\alpha=0$, meaning that their smiles follow a linear trend.
Despite the empirical evidence given in Figure \ref{VIX plots}, the commitment to linear smiles from a modelling perspective is strong.
The simplified mixed multi-factor rough Bergomi model \eqref{eq:mixingfactors}, however, is sufficiently flexible to generate both linear ($\rho \ge 0$) and nonlinear ($\rho < 0$) smiles.
\end{remark}

\begin{figure}[h!]
\hspace*{-1.0cm}
\includegraphics[scale=0.5]{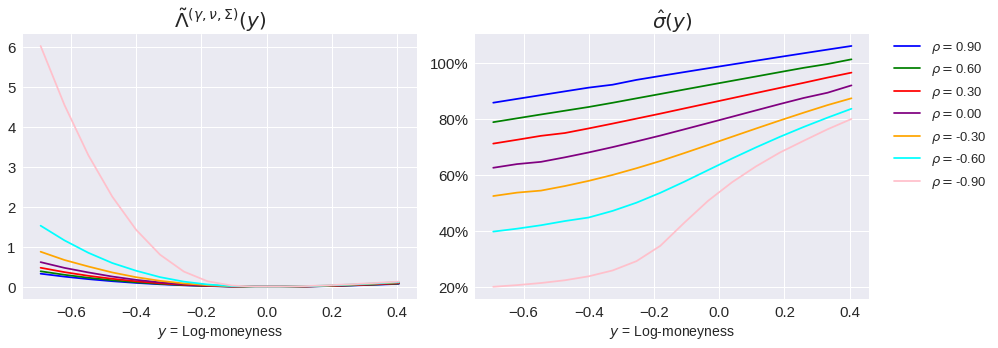}
\caption{Rate function and corresponding implied volatilities for the model \eqref{eq:mixingfactors} with
$(\alpha,\nu,\eta) =(-0.4,1.0,3.0)$.}
\label{fig:CorrelatedSmiles2}
\end{figure}
\begin{figure}[h!]
\hspace*{-1.0cm}
\includegraphics[scale=0.55]{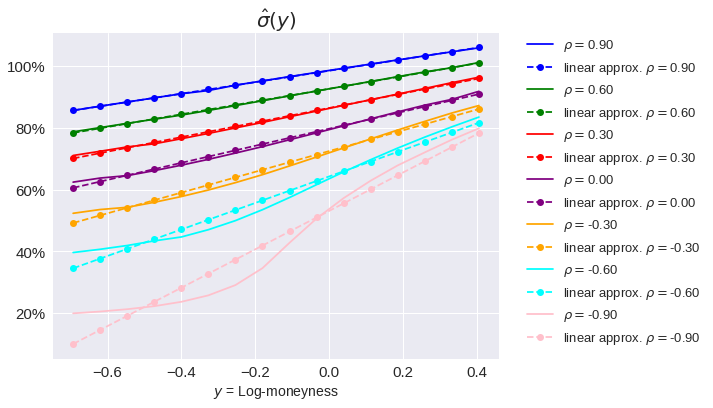}
\caption{Implied volatilities and superimposed linear smiles for the model \eqref{eq:mixingfactors},
with $(\alpha,\nu,\eta) =(-0.4,1.0,3.0)$.}
\label{fig:CorrelatedSmilesLinearApprox2}
\end{figure}

\section{Options on VIX}\label{section: VIX options}
Although options on realised variance are the most natural core modelling object for stochastic volatility models, in practice the most popular variance derivative is the VIX. In this section we therefore turn our attention to the VIX and VIX options and study their asymptotic behaviour.
For this section, we fix $\Tt := [0,T]$.
Let us now consider the following general model $(v_t)_{t\ge 0}$ for instantaneous variance:
\begin{equation}\label{eq:varianceGeneralKernel}
v_t=\xi_0(t)\mathcal{E}\left( \int_0^t g(t,s) \D W_s\right).
\end{equation}
Then, the VIX process is given by
$$\text{VIX}_T=\sqrt{\frac{1}{\Delta}\int_T^{T+\Delta}\mathbb{E}[v_t|\mathcal{F}_T] \D t}.$$
We introduce the following stochastic process $(\VgT)_{t \in[T,T+\Delta]}$, for notational convenience,  as
\begin{equation}\label{vix type definition}
V^{g,T}_t:= \int_0^T g(t,s)\D W_s,
\end{equation}
and assume that the mapping $s \mapsto g(t,s) $ is in $L^2[0,T]$ for all $t \in [T,T+\Delta]$ such that the stochastic integral in \eqref{vix type definition} is well-defined.
\begin{proposition}\label{prop:rough BergomiVIX}
The VIX  dynamics in model \eqref{eq:varianceGeneralKernel}, where the volatility of volatility is denoted by $\nu$, are given by
$$
\text{VIX}_{T, \nu}^2
  := \frac{1}{\Delta}\int_T^{T+\Delta}\xi_0(t)\exp\left(\nu V_t^{g,T}-\frac{\nu^2}{2}\mathbb{E}[(V_t^{g,T})^2] \right)\D t.
$$

\end{proposition}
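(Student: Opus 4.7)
The strategy is a direct computation of $\mathbb{E}[v_t\mid\mathcal{F}_T]$ for each $t\in[T,T+\Delta]$, followed by substitution into $\text{VIX}_T^2=\tfrac{1}{\Delta}\int_T^{T+\Delta}\mathbb{E}[v_t\mid\mathcal{F}_T]\,\D t$. The whole argument rests on the Gaussianity of the stochastic integrand $s\mapsto g(t,s)$ against $W$ together with the independent-increments property of Brownian motion, so I do not foresee a genuine analytical obstacle; the only care needed is in bookkeeping the quadratic correction coming from the Wick exponential.

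First, I would unpack $\mathcal{E}(\,\cdot\,)$. Since $X_t:=\nu\int_0^t g(t,s)\,\D W_s$ is a centred Gaussian random variable (for each fixed $t\in[T,T+\Delta]$), the Wick exponential reads $\mathcal{E}(X_t)=\exp\bigl(X_t-\tfrac{1}{2}\mathbb{E}[X_t^2]\bigr)$ with $\mathbb{E}[X_t^2]=\nu^2\int_0^t g(t,s)^2\,\D s$ by the It\^o isometry. This is where the standing hypothesis $s\mapsto g(t,s)\in L^2[0,T]$ enters, together with its analogue on $[T,t]$, which is implicit in the well-posedness of model \eqref{eq:varianceGeneralKernel}.

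Second, I would split the stochastic integral at the conditioning time: for $t\ge T$,
$$\int_0^t g(t,s)\,\D W_s \;=\; V_t^{g,T}+\int_T^t g(t,s)\,\D W_s,$$
where the first summand is $\mathcal{F}_T$-measurable and the second is independent of $\mathcal{F}_T$. Factorising the conditional expectation accordingly gives
$$\mathbb{E}\bigl[\mathcal{E}(X_t)\mid\mathcal{F}_T\bigr]=\exp\!\Bigl(\nu V_t^{g,T}-\tfrac{\nu^2}{2}\mathbb{E}[X_t^2]\Bigr)\,\mathbb{E}\!\left[\exp\!\left(\nu\int_T^t g(t,s)\,\D W_s\right)\right],$$
and the remaining unconditional expectation is the Gaussian Laplace transform $\exp\bigl(\tfrac{\nu^2}{2}\int_T^t g(t,s)^2\,\D s\bigr)$. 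The quadratic contributions over $[T,t]$ cancel, leaving precisely $-\tfrac{\nu^2}{2}\int_0^T g(t,s)^2\,\D s=-\tfrac{\nu^2}{2}\mathbb{E}[(V_t^{g,T})^2]$ by a second appeal to the It\^o isometry.

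Assembling the pieces yields $\mathbb{E}[v_t\mid\mathcal{F}_T]=\xi_0(t)\exp\bigl(\nu V_t^{g,T}-\tfrac{\nu^2}{2}\mathbb{E}[(V_t^{g,T})^2]\bigr)$; inserting into the definition of $\text{VIX}_T^2$ and dividing by $\Delta$ gives the claim. The only step worth double-checking is the use of independence when splitting the integrator, which is a one-line consequence of independent increments, so this entire proof is essentially routine once the Gaussian structure is exploited.
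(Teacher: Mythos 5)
Your computation is correct. The paper itself dispenses with the proof in one line, citing \cite[Proposition~3.1]{JMM18}; you have instead carried out the direct argument that that reference encapsulates. The key steps you use --- unpacking the Wick exponential via the It\^o isometry, splitting the stochastic integral at time $T$ into an $\mathcal{F}_T$-measurable piece $V^{g,T}_t$ and an independent remainder $\int_T^t g(t,s)\,\D W_s$, evaluating the latter's Laplace transform, and observing that the variance contributions over $[T,t]$ cancel, leaving $-\tfrac{\nu^2}{2}\int_0^T g^2(t,s)\,\D s = -\tfrac{\nu^2}{2}\mathbb{E}[(V^{g,T}_t)^2]$ --- are exactly what is needed, and the bookkeeping is right. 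One small point of presentation worth flagging: the model \eqref{eq:varianceGeneralKernel} as displayed has no explicit $\nu$; the proposition's phrasing ``where the volatility of volatility is denoted by $\nu$'' is the paper's (slightly terse) way of saying the kernel is to be read as $\nu g(t,s)$, which is precisely the reading you adopt. So there is no gap; you have simply made explicit what the paper outsourced to a citation.
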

\begin{proof}
Follows directly from \cite[Proposition 3.1]{JMM18}.
\end{proof}

We now define the following $L^2[0,T]$ operator $\IgT:L^2[0,T] \rightarrow \Cc[T,T+\Delta]$, and space $\HgT$ as
\begin{equation}
\IgT f(\cdot):= \int_0^T g(\cdot,s) f(s) \D s, \quad
\HgT:= \{\IgT f: f\in L^2[0,T] \},
\end{equation}
where the space $\HgT$ is equipped with the following inner product
$ \langle \IgT f_1, \IgT f_2 \rangle_{\HgT}:= \langle f_1, f_2, \rangle_{L^2[0,T]}. $
Note that the function $g$ must be such that the operator $\IgT $ is injective so that that inner product
$\langle \cdot, \cdot \rangle_{\HgT}$ on
$\HgT$ is well-defined.

\begin{proposition}\label{rkhs of vix type process}
Assume that there exists $ h \in L^2[0,T]$ such that $\int_0^{\eps} |h(s)|\D s< +\infty$ for some $\eps>0$  and
$g(t,\cdot)=h(t-\cdot)$ for any $t \in [T,T+\Delta]$.
Then, the space $\HgT$ is the reproducing kernel Hilbert space for the process $( V^{g,T}_t)_{t\in [T, T+\Delta]}$.
\end{proposition}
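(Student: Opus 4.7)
The plan is to directly verify that $\HgT$ satisfies the defining properties of the reproducing kernel Hilbert space associated with the covariance kernel of $\VgT$. By the It\^o isometry, the centered Gaussian process $(\VgT_t)_{t\in[T,T+\Delta]}$ has covariance
$$K(t_1,t_2) := \mathbb{E}\!\left[\VgT_{t_1}\VgT_{t_2}\right] = \int_0^T g(t_1,s)\,g(t_2,s)\,\D s,$$
so by the standard characterisation of Gaussian RKHS it suffices to check that $\HgT$, equipped with the stated inner product, is a Hilbert space of functions on $[T,T+\Delta]$ admitting $K$ as its reproducing kernel.

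First I would handle the reproducing property, which is the easy algebraic part. For each $t_0 \in [T,T+\Delta]$ the map $s\mapsto g(t_0,s) = h(t_0-s)$ lies in $L^2[0,T]$, hence $K(\cdot,t_0) = \IgT(g(t_0,\cdot)) \in \HgT$. For any $F = \IgT f \in \HgT$, the definition of $\langle\cdot,\cdot\rangle_{\HgT}$ yields
$$\langle F, K(\cdot,t_0)\rangle_{\HgT} = \langle \IgT f, \IgT g(t_0,\cdot)\rangle_{\HgT} = \langle f, g(t_0,\cdot)\rangle_{L^2[0,T]} = \int_0^T f(s)\,g(t_0,s)\,\D s = \IgT f(t_0) = F(t_0),$$
which is exactly the reproducing identity.

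The less trivial step is to certify that the inner product on $\HgT$ is well-defined and that $\HgT$ is complete. Both reduce to showing that $\IgT : L^2[0,T]\to \HgT$ is injective, for then it is by construction an isometric isomorphism and $\HgT$ inherits the Hilbert space structure of $L^2[0,T]$. This is precisely where the convolution hypothesis $g(t,s) = h(t-s)$ enters. I would extend $h$ by zero on $(-\infty,0)$ and $f \in L^2[0,T]$ by zero outside $[0,T]$, so that the identity $\IgT f \equiv 0$ on $[T,T+\Delta]$ rewrites as the vanishing of the convolution $h \ast f$ on a nondegenerate interval. The hypothesis $\int_0^{\eps}|h(s)|\,\D s < \infty$ places $h$ in $L^1_{\mathrm{loc}}$ near the origin, which is exactly the setting in which the Titchmarsh convolution theorem applies and forces $f = 0$ a.e. The Titchmarsh step is the main obstacle, since it requires the correct extension of $h$ and $f$ to $\mathbb{R}$ and an invocation of the theorem in the right form; everything else is a direct unwinding of the definitions and follows the same lines as Proposition~\ref{RKHS for v}.
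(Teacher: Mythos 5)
Your proof is correct but takes a genuinely different route from the paper's. You work directly with the Moore--Aronszajn characterisation: you compute the covariance kernel $K(t_1,t_2)=\int_0^T g(t_1,s)g(t_2,s)\,\D s$ via It\^o isometry, verify the reproducing identity $\langle \IgT f, K(\cdot,t_0)\rangle_{\HgT}=\IgT f(t_0)$, and deduce the Hilbert space structure from the injectivity of $\IgT$, which (as in the paper) rests on Titchmarsh's convolution theorem. The paper instead verifies the abstract Wiener space / RKHS-triplet formulation used in Deuschel--Stroock and in \cite{JPS18}: it establishes that $(\HgT,\langle\cdot,\cdot\rangle_{\HgT})$ is a separable Hilbert space (same Titchmarsh injectivity argument, plus a completeness and separability check), that there is a dense continuous embedding $\iota:\HgT\hookrightarrow\Cc[T,T+\Delta]$ (via \cite[Lemma~2.1]{Che08}), and that bounded linear functionals on $\Cc[T,T+\Delta]$ are Gaussian with the right variance. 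Your approach is shorter and more elementary; what the paper's approach buys is that it directly furnishes the embedding into $\Cc[T,T+\Delta]$ and the Gaussian-measure triplet that the generalised Schilder theorem invoked in Theorem~\ref{VIX ldp on C} actually requires. To use your version in the downstream LDP you would need to additionally invoke the (standard but nontrivial) equivalence between the Aronszajn RKHS of a continuous Gaussian process and the Cameron--Martin space of the induced measure on $\Cc[T,T+\Delta]$, together with the dense embedding. Finally, you candidly flag that you have not fully worked out the Titchmarsh step; note that $\IgT f=0$ is only assumed on $[T,T+\Delta]$, not on an interval starting at $0$, so you should argue via the support form of the theorem ($\inf\mathrm{supp}(h*f)=\inf\mathrm{supp}(h)+\inf\mathrm{supp}(f)$) and use that $h$ does not vanish identically near the origin; the paper's appeal to Titchmarsh is terse on the same point, so you are not behind the paper here.
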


\begin{proof}
See Appendix \ref{5.2}.
\end{proof}

\begin{theorem}\label{VIX ldp on C}
For any $\gamma>0$, the sequence of stochastic processes $(\eps^{\gamma/2} \VgT )_{\eps >0}$ satisfies a large deviations principle on
$\CCT $ with speed $\eps^{-\gamma}$ and rate function $\Lambda^V$, defined as
\begin{equation}\label{vix rate function}
\Lambda^V(\xx) :=
\begin{cases}
\displaystyle \frac{1}{2}\|\xx\|_{\HgT}^2, & \text{if }\xx\in \HgT,\\
+ \infty, & \text{otherwise.}
\end{cases}
\end{equation}
\end{theorem}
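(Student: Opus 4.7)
The plan is to recognize $V^{g,T}$ as a centered Gaussian process on $[T,T+\Delta]$ and deduce the LDP from an abstract Schilder-type theorem for Gaussian measures on Banach spaces, using the RKHS identification furnished by Proposition~\ref{rkhs of vix type process}.

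First I would verify that $V^{g,T}$ admits a continuous modification so that it induces a genuine centered Gaussian Borel measure $\mu$ on the Banach space $\CCT$ equipped with the sup norm. Since $V^{g,T}$ is Gaussian, by Kolmogorov's continuity criterion it suffices to control $\EE[(V_t^{g,T} - V_s^{g,T})^2] = \int_0^T (g(t,u) - g(s,u))^2 \D u$; under the convolution structure $g(t,\cdot)=h(t-\cdot)$ of Proposition~\ref{rkhs of vix type process}, $L^2$-continuity of translation together with the local integrability hypothesis on $h$ yields the required H\"older-type control on this Gaussian covariance. The isotropy of Gaussian laws then upgrades $L^2$ to $L^p$ estimates for any $p\ge 1$, so Kolmogorov applies.

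Once $\mu$ is realized on $\CCT$, the rescaling property of centered Gaussian measures shows that $\eps^{\gamma/2} V^{g,T}$ has law $\mu_\eps$ equal to the pushforward of $\mu$ under the dilation $x \mapsto \eps^{\gamma/2} x$, i.e., a centered Gaussian with covariance operator $\eps^{\gamma}$ times that of $\mu$. The abstract Schilder theorem for Gaussian measures on separable Banach spaces (see, e.g., \cite[Theorem 3.4.12]{DZ10} or Deuschel--Stroock) then asserts that $(\mu_\eps)_{\eps>0}$ satisfies a large deviations principle with speed $\eps^{-\gamma}$ and good rate function
$$
x \mapsto \frac{1}{2}\|x\|_{H}^2\,\mathbbm{1}_{\{x \in H\}} + \infty\,\mathbbm{1}_{\{x\notin H\}},
$$
where $H$ is the reproducing kernel Hilbert space of $\mu$. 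By Proposition~\ref{rkhs of vix type process}, $H = \HgT$ with the stated inner product, which is exactly $\Lambda^V$ as defined in \eqref{vix rate function}.

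The main obstacle is the first step: ensuring that the prescribed assumptions on $g$ (equivalently, on $h$) are strong enough to secure a continuous modification of $V^{g,T}$ on the compact interval $[T,T+\Delta]$, so that the LDP may be stated on $\CCT$ rather than on a larger path space. Once continuity is secured, the remainder is an immediate application of known Gaussian LDP machinery combined with the RKHS identification already carried out.
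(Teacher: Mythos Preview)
Your proposal is correct and takes essentially the same approach as the paper: the paper's proof is the single line ``Direct application of the generalised Schilder's Theorem \cite[Theorem 3.4.12]{DS89},'' which is precisely what you spell out, combining the RKHS identification of Proposition~\ref{rkhs of vix type process} with the abstract Gaussian LDP. Your additional care about the existence of a continuous modification is a detail the paper leaves implicit.
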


\begin{proof}
Direct application of the generalised Schilder's Theorem \cite[Theorem 3.4.12]{DS89}.
\end{proof}

\begin{remark}\label{VgT large strike behaviour}
We now introduce a Borel subset of $\CCT$, defined as $A := \{ g\in\CCT :g(x)~\ge~1 \text{ for all } x\in\RR \}$.
Then, by a simple application of Theorem~\ref{VIX ldp on C} and using that the rate function $\Lambda^V$ is continuous on $A$, we can obtain then obtain the following tail behaviour of the process $V^{g,T}$:
\begin{equation}\label{VIX large strike behaviour}
\lim_{\eps \downarrow 0} \eps^\gamma \log \mathbb{P} \left( V^{g,T}_t \ge \frac{1}{\eps^{\gamma/2}}\right)
= - \inf_{g\in A} \Lambda^V(g),
\end{equation}
for any $\gamma>0$ and $t\in[T,T+\Delta]$.
\end{remark}

\begin{remark}
Let us again fix the kernel $g$ as the rough Bergomi kernel and denote the corresponding reproducing kernel Hilbert space by $\mathscr{H}^{\eta, \alpha, T}$ and the corresponding process $\VgT$ as $V^{\eta,\alpha,T}$.
If $\xx \in \mathscr{H}^{\eta, \alpha, T}$ it follows that there exists $f\in L^2[0,T]$ such that
$\xx(t)~=~\int_0^T \eta \sqrt{2\alpha +1}(t-s)^\alpha f(s)\D s $ for all $t \in [T,T+\Delta]$.
Clearly, it follows that $\xx \in \mathscr{H}^{a \eta, \alpha, T} $ for any $a>0$, as $f\in L^2[0,T]$ implies that $\frac{1}{a} f=:f_a \in L^2[0,T]$.
We can compute the norm of $\xx$ in each of these spaces to arrive at the following isometry:
\begin{equation}\label{norm isometry for rBerg vix}
\Vert \xx \Vert^2_{\mathscr{H}^{a \eta, \alpha, T}}  = \Vert f_a \Vert^2_{L^2[0,T]} =
\frac{1}{a^2} \int_0^T f^2(s)\D s = \frac{1}{a^2} \Vert \xx \Vert^2_{\mathscr{H}^{ \eta, \alpha, T}}.
\end{equation}
We may now amalgamate \eqref{vix rate function}, \eqref{VIX large strike behaviour}, and \eqref{norm isometry for rBerg vix} to arrive at the following statement, which tells us how the large strike behaviour scales with the vol-of-vol parameter $\eta$ in the rough Bergomi model:
\begin{equation}\label{vol-of-vol effect on large strike}
\lim_{\eps \downarrow 0} \eps^\gamma \log P \left(V^{a\eta, \alpha, T}_t \ge {\frac{1}{\eps^{\gamma/2}}} \right)
=
\lim_{\eps \downarrow 0} \eps^\gamma \log \left( P \left(V^{\eta, \alpha, T}_t \ge {\frac{1}{\eps^{\gamma/2}}} \right)^{1/a^2}\right)
\end{equation}
Indeed, \eqref{vol-of-vol effect on large strike} tells us precisely how increasing the vol-of-vol parameter $\eta$ multiplicatively by a factor $a$ in the rough Bergomi model increases the probability that the associated process $\VgT$ will exceed a certain level.
\end{remark}

Before stating the main theorem of this section, we first define the following rescaled process:
\begin{equation}\label{eq: rescaled VgT processes}
V^{g,T,\eps}_t:= \eps^{\gamma /2}\VgT_t, \quad
\widetilde{V}^{g,T,\eps }_t:= V^{g,T,\eps}_t - \frac{\eps^\gamma}{2}\int_0^tg^2(t,u)\D u + \eps^{\gamma /2},
\end{equation}
for $\eps \in [0,1]$, $t \in [T,T+\Delta]$.
We also define the following $\Cc \left([T,T+\Delta] \times [0,1] \right) $ operators $\varphi_{1,\xi_0}, \varphi_2$, which map to $\Cc \left([T,T+\Delta] \times [0,1] \right) $ and $\Cc[0,1]$ respectively, as
\begin{equation}\label{varphi operators}
(\varphi_{1,\xi_0}f)(s,\eps):= \xi_0(s)\exp(f(s,\eps)), \quad
(\varphi_2g)(\eps):=\frac{1}{\Delta} \int_T^{T+\Delta} g(s,\eps)\D s.
\end{equation}
Note that in the definition of $\varphi_{1,\xi_0}$ in \eqref{varphi operators} we assume $\xi_0$ to be a continuous, single valued, and strictly positive function on $[T,T+\Delta]$.
This then implies that for every $s\in [T,T+\Delta]$, the map $\eps \mapsto (\varphi_{1,\xi_0} f)(s,\eps)$ is a bijection and hence has an inverse, denoted by $\varphi^{-1}_{1,\xi_0}$, which is defined as
$ (\varphi^{-1}_{1,\xi_0}f)(s,\eps):=\log \left( \frac{f(s,\eps)}{\xi_0(s)} \right)
$.

\begin{theorem}\label{them: rescaled VIX ldp}
For any $\gamma>0$,
the sequence of rescaled VIX processes
$(e^{\eps^{\gamma/2}}\text{VIX}_{T,\eps^{\gamma/2}})_{\eps \in [0,1]}
$
satisfies a pathwise large deviations principle on $\Cc[0,1]$ with speed $\eps^{-\gamma}$ and rate function
$$\Lambda^{\text{VIX}}(\xx):= \inf_{s\in[T,T+\Delta]}
\left\{
\Lambda^V\left(\log\left(\frac{\yy(s,\cdot)}{\xi_0(s)} \right)  \right)
:
\xx(\cdot) = (\varphi_2\yy)(\cdot)
\right\}.
$$
\end{theorem}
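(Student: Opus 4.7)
The plan is to obtain the LDP by chaining Theorem~\ref{VIX ldp on C} with an exponential-equivalence step and two applications of the contraction principle (see Appendix~\ref{appendix: contraction princ}). First I would lift the process $V^{g,T,\eps}$ to the two-parameter space $\Cc([T,T+\Delta]\times[0,1])$: since the map $\phi \mapsto ((s,\eps)\mapsto \eps^{\gamma/2}\phi(s))$ is a continuous embedding of $\CCT$ into $\Cc([T,T+\Delta]\times[0,1])$, the contraction principle applied to Theorem~\ref{VIX ldp on C} furnishes an LDP with speed $\eps^{-\gamma}$ for the lifted family, whose rate function is supported on separable paths of the form $(s,\eps)\mapsto \eps^{\gamma/2}\phi(s)$ with $\phi \in \HgT$.

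Next I would replace $V^{g,T,\eps}$ by $\tilde V^{g,T,\eps}$ via exponential equivalence. The difference $-\tfrac{\eps^\gamma}{2}\int_0^{\cdot}g^2(\cdot,u)\D u + \eps^{\gamma/2}$ is deterministic and uniformly of order $\eps^{\gamma/2}$ on $[T,T+\Delta]\times[0,1]$, so for every $\delta>0$ the event $\{\|V^{g,T,\eps}-\tilde V^{g,T,\eps}\|_\infty>\delta\}$ is empty for all sufficiently small $\eps$; the Dembo--Zeitouni criterion \cite[Theorem 4.2.13]{DZ10} is therefore trivially satisfied and the LDP transfers to $\tilde V^{g,T,\eps}$ without modifying the rate function.

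For the heart of the argument I would apply the contraction principle twice. The operator $\varphi_{1,\xi_0}$ is continuous from $\Cc([T,T+\Delta]\times[0,1])$ to itself because $\xi_0$ is continuous and strictly positive and $\exp$ is locally Lipschitz on bounded sets, while $\varphi_2$ is bounded linear into $\Cc[0,1]$. Using the It\^o isometry $\EE[(V^{g,T}_t)^2]=\int_0^T g^2(t,u)\D u$, a direct computation yields
\[
(\varphi_2\circ\varphi_{1,\xi_0}\circ\tilde V^{g,T,\cdot})(\eps)\;=\;e^{\eps^{\gamma/2}}\,\text{VIX}_{T,\eps^{\gamma/2}}^{\,2},
\]
so two successive contractions, followed by a final continuous square-root on $\RR_+$, yield the announced LDP for $(e^{\eps^{\gamma/2}}\text{VIX}_{T,\eps^{\gamma/2}})_{\eps\in[0,1]}$. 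The rate function $\Lambda^{\text{VIX}}$ then emerges as the infimum of $\Lambda^V$ over preimages under this composite map: inverting $\varphi_{1,\xi_0}$ slice-wise in $s$ accounts for the logarithmic term $\log(\yy(s,\cdot)/\xi_0(s))$ (using the bijectivity of $\eps \mapsto (\varphi_{1,\xi_0}f)(s,\eps)$ noted in the paper), and inverting $\varphi_2$ contributes only the equality constraint $\xx=\varphi_2\yy$.

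The main obstacle is the first step. Because the $\eps$-coordinate enters only through deterministic scaling, the lifted family is degenerate in the sense that its rate function is concentrated on the separable-path manifold $\{(s,\eps)\mapsto \eps^{\gamma/2}\phi(s) : \phi \in \HgT\}$; one must verify that this concentration is preserved by the subsequent compositions with $\varphi_{1,\xi_0}$ and $\varphi_2$ (which it is, since both operators act pointwise in $\eps$), and that the rate-function identification gives precisely the composite-infimum displayed in the theorem. Once the lift is justified, the exponential equivalence and two contraction steps are routine, and tracking the preimages through the chain produces the claimed formula for $\Lambda^{\text{VIX}}$.
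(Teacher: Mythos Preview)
Your approach is essentially the paper's: an exponential-equivalence step for the deterministic shift $-\tfrac{\eps^\gamma}{2}\int_0^{\cdot}g^2(\cdot,u)\D u+\eps^{\gamma/2}$, followed by two applications of the contraction principle through $\varphi_{1,\xi_0}$ and $\varphi_2$, whose sup-norm continuity the paper verifies exactly as you indicate.

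The one place where you diverge is the opening lift, and this step does not work as written. The contraction principle requires a \emph{fixed} continuous map $F:\CCT\to\mathcal Y$ independent of the LDP index; applying your embedding $\phi\mapsto((s,\eps')\mapsto(\eps')^{\gamma/2}\phi(s))$ to the family $\eps^{\gamma/2}V^{g,T}$ from Theorem~\ref{VIX ldp on C} produces $(s,\eps')\mapsto(\eps')^{\gamma/2}\eps^{\gamma/2}V^{g,T}_s$, not the intended two-parameter object. The paper sidesteps this entirely: it keeps the LDP on $\CCT$ and treats $\varphi_{1,\xi_0}$, $\varphi_2$ as continuous maps $\CCT\to\CCT$ and $\CCT\to\RR$ acting only on the $s$-variable, so no lift is needed and the contraction principle applies directly. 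Once you drop the lift and proceed that way, the rest of your outline matches the paper line for line. Your square-root remark is in fact more careful than the paper's proof, which asserts that $\varphi_2\circ\varphi_{1,\xi_0}\circ\widetilde V^{g,T,\eps}$ coincides with the rescaled VIX itself; as you compute, it yields $e^{\eps^{\gamma/2}}\text{VIX}^2_{T,\eps^{\gamma/2}}$, so a final square-root contraction is indeed required.
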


\begin{proof}
See Appendix \ref{5.6}.
\end{proof}

\begin{remark}
Using Theorem \ref{them: rescaled VIX ldp}, we can deduce the small-noise, large strike behaviour of VIX options.
Indeed, for the Borel subset $A$ of $\CCT]$ introduced in Remark \ref{VgT large strike behaviour} we have that
$$ \lim_{\eps \downarrow 0} \eps^\gamma \log
\PP\left(VIX_{T, \eps^{\gamma/2}} \ge e^{-\eps^{\gamma/2}} \right) = - \inf_{g \in A} \Lambda^{\text{VIX}}(g),
$$
for any $\gamma>0$.
\end{remark}
\section{conclusions}
In this paper we have characterised, for the first time, the small-time behaviour of options on integrated variance in rough volatility models, using large deviations theory.
Our approach has a solid theoretical basis, with very convincing corresponding numerics, which agree with observed market phenomenon and the theoretical results attained by Al\`os, Garc\'ia-Lorite and Muguruza \cite{AGM18}. Both the theoretical and the numerical results hold for each of the three rough volatility models presented, whose complexity increases. Any of the three, with our corresponding results, would be suitable for practical use; the user would simply chose the level of complexity needed to satisfy their individual needs.
Note also that the theoretical results are widely applicable, and one could very easily adapt results presented in this paper to other models where the volatility process also satisfies a large deviations principle, and whose rate function can be computed easily and accurately.

Perhaps surprisingly, we have discovered that lognormal models such as rough Bergomi \cite{BFG16}, 2 Factor Bergomi \cite{Bergomi3,Ber16} and mixed versions thereof, generate linear smiles around the money for options on realised variance in log-space. This is, at the very least, a property to be taken into account when modelling volatility derivatives and, to our knowledge, has never been addressed or commented on in previous works. Whether such an assumption is realistic or not, we have in addition provided an explicit way to construct a model that generates non-linear smiles should this be required or desired.

We have also proved a pathwise large deviations principle for rescaled VIX processes, in a fairly general setting with minimal assumptions on the kernel of the stochastic integral used to define the instantaneous variance; these results are then used to establish the small-noise, large strike asymptotic behaviour of the VIX.
The current set up does not allow us to deduce the small-time VIX behaviour from the pathwise large deviations principle, but this would be a very interesting area for future research. Our numerical scheme would most likely give a good approximation for the rate function and corresponding small-time VIX smiles.

\appendix

\section{Approximating the density of realised variance in the mixed rough Bergomi model}\label{app:linear approximation}
In light of the numerical results shown in Section \ref{sec: numerics} (see Figures \ref{fig:MC vs. LDP}-\ref{fig:CorrelatedSmiles1}) we identify a clear linear trend in the implied volatility smiles generated by both the rough Bergomi and mixed rough Bergomi models. Therefore, it is natural to postulate the following conjecture/approximation of log-linear smiles.

\begin{assumption}\label{linearImpliedVol}
The implied volatility of realised variance options in the mixed rough Bergomi \eqref{eq:mixedvarianceProcess} model is linear in log-moneyness, and takes the following form:

$$\hat{\sigma}(K,T)=\left(T^\beta \left(a(\alpha,\gamma,\nu)+b(\alpha,\gamma,\nu)\log\left(\frac{K}{RV(v)(0)}\right)\right)\right)^+$$
where
\begin{align*}
a(\alpha,\gamma,\nu)&=\frac{\sqrt{2\alpha +1}\sum_{i=1}^n \gamma_i \nu_i}{(\alpha +1)\sqrt{2\alpha + 3}},\\
b(\alpha,\gamma,\nu)&= \sqrt{2\alpha+1}\left(\frac{\sum_{i=1}^n \gamma_i \nu_i^2}{\sum_{i=1}^n \gamma_i \nu_i}\mathcal{I}(\alpha)(2\alpha+3)^{3/2}(\alpha+1)-\frac{\sum_{i=1}^n \gamma_i \nu_i}{(2\alpha+2)\sqrt{(2\alpha+3)}}\right),
\end{align*}
with
$$\mathcal{I}(\alpha)=\frac{\bigg(\displaystyle\sum_{n=0}^{\infty}\frac{(\alpha)_n}{(\alpha+2)_n}\frac{1-2^{-2\alpha-3-n}}{2\alpha+3+n}+\sum_{n=0}^{\infty}(-1)^n
\frac{(-\alpha)_n (\alpha+1)}{(\alpha+2+n)n!}
\frac{\hat{F}(n,1)-2^{n-1/2-n}\hat{F}(n,1/2)}{\alpha+1-n}\bigg)}{(\alpha+1)(4\alpha+5)}$$
such that $\hat{F}(n,x)=\text{ }_2F_1(-n-2\alpha-2,\alpha+1-n,\alpha+2-n,x)$ and $(x)_n=\displaystyle\prod_{i=0}^{n-1}(x+i)$ represents the rising Pochhammer factorial.
\end{assumption}

\begin{remark}
The values of the constants $a(\alpha,\gamma,\nu)$ and $b(\alpha,\gamma,\nu)$ in Assumption \ref{linearImpliedVol}, which give the level and slope of the implied volatility respectively,  are given in \cite[Example 24 and Example 27]{AGM18} respectively; we generalise to $n$ factors.
These results are given in terms of the Hurst parameter $H$; to avoid any confusion we will continue with our use of $\alpha$.
Recall that, by Remark \ref{remark: H and alpha}, $\alpha=H-1/2$.
\end{remark}

\begin{proposition}\label{prop:density}
Under Assumption \ref{linearImpliedVol} , the density of $RV(v^{(\gamma,\nu)})(T)$ is given by

$$\psi_{RV}(x,T)=-\phi(d_2(x))\frac{\partial d_1(x)}{\partial x}\left(a(\alpha,\gamma,\nu) T^{\alpha+1/2}d_1(x)+1\right),\quad x\geq 0$$
where $d_1(x)=\frac{\log(v_0)-\log(x)}{\hat{\sigma}(x,T)\sqrt{T}}+\frac{1}{2}\hat{\sigma}(x,T)\sqrt{T}$, $d_2(x)=d_1(x)- \hat{\sigma}(x,T)\sqrt{T}$ for $x\ge0$ and $\phi(\cdot)$ is the standard Gaussian probability density function.
\end{proposition}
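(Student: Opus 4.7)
The natural route is Breeden--Litzenberger: the risk-neutral density of $RV(v^{(\gamma,\nu)})(T)$ is the second strike-derivative of the undiscounted call price
$C(K,T):=\mathbb{E}[(RV(v^{(\gamma,\nu)})(T)-K)^{+}]$, so that
$\psi_{RV}(x,T)=\partial^{2}_{KK}C(K,T)\big|_{K=x}$.
Under Assumption~\ref{linearImpliedVol} the call price factorises through the Black--Scholes formula, $C(K,T) = C_{BS}(v_0,K,T,\hat\sigma(K,T))$, with an implied volatility that is affine in log-moneyness. The task therefore reduces to a careful strike-wise differentiation of a Black--Scholes call whose implied volatility depends on $K$.

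Differentiating once and exploiting the standard Black--Scholes identity $v_0\phi(d_1)=K\phi(d_2)$ together with $d_1-d_2=\hat\sigma\sqrt{T}$, the contributions from the implicit $K$-dependence of $d_1$ and of $d_2$ collapse pairwise and yield the concise
$$\partial_K C(K,T) \;=\; K\,\phi(d_2(K))\sqrt{T}\,\partial_K\hat\sigma(K,T) \;-\; \Phi(d_2(K)),$$
which splits the density calculation into a smile-skew piece and a standard digital piece.

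Differentiating a second time, and using $\phi'(z)=-z\phi(z)$, produces a bulky expression involving $\partial_K\hat\sigma$, $\partial^{2}_{KK}\hat\sigma$, and the full strike-derivative $d_2'(K)$ of $d_2$ along the smile. The crucial simplification is structural: writing $\hat\sigma(K,T) = T^{\alpha+1/2}\bigl(a + b\log(K/v_0)\bigr)$ gives $K\,\partial_K\hat\sigma \equiv b\,T^{\alpha+1/2}$, a constant, and therefore $\partial_K\hat\sigma + K\,\partial^{2}_{KK}\hat\sigma = 0$. This annihilates the term carrying $\partial^{2}_{KK}\hat\sigma$ and leaves only
$$\partial^{2}_{KK}C(K,T) \;=\; -\phi(d_2(K))\,d_2'(K)\,\bigl[\,1 + K\,d_2(K)\sqrt{T}\,\partial_K\hat\sigma(K,T)\,\bigr].$$
To identify this with the announced expression one inverts the affine smile via $\log(K/v_0)=(\hat\sigma - aT^{\alpha+1/2})/(bT^{\alpha+1/2})$, substitutes into the definitions of $d_1$ and $d_2$, uses $d_2'=d_1'-\sqrt{T}\,\partial_K\hat\sigma$, and regroups; the bracket collapses to $1 + aT^{\alpha+1/2}d_1$ and the prefactor to $-\phi(d_2)\,\partial_K d_1$.

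The hard part is precisely this last regrouping: because $\hat\sigma$ is affine in $\log K$, the log-moneyness is eliminable in favour of $\hat\sigma$ and of the at-the-money level $aT^{\alpha+1/2}$, and it is this identification that makes $aT^{\alpha+1/2}$ emerge as the coefficient of $d_1$ inside the bracket. Two minor technicalities to address along the way are that the positive-part truncation in Assumption~\ref{linearImpliedVol} is inactive on the strike range carrying the density (so the smooth calculation is legitimate there), and that $K\mapsto C(K,T)$ has enough regularity for Breeden--Litzenberger to yield a pointwise density, both of which follow from the smoothness and strict positivity of the smile on that range.
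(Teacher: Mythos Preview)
Your route is essentially the paper's: Breeden--Litzenberger, then two strike-derivatives of $C_{BS}(v_0,K,T,\hat\sigma(K,T))$ using $v_0\phi(d_1)=K\phi(d_2)$ and the constancy of $K\partial_K\hat\sigma$. The only organisational difference is that the paper keeps the first derivative as $v_0\phi(d_1)\cdot(\mathrm{const}/K)-\Phi(d_2)$ and differentiates that, landing directly in $d_1,d_1'$; you keep the equivalent $K\phi(d_2)$ form, obtain $-\phi(d_2)\,d_2'\bigl[1+s\,d_2\bigr]$ with $s:=K\sqrt{T}\,\partial_K\hat\sigma$, and then must convert. The conversion identity $d_2'\bigl[1+s\,d_2\bigr]=d_1'\bigl[1+s\,d_1\bigr]$ is valid (it reduces to $K\hat\sigma\sqrt{T}\,d_1'=-1-s\,d_2$, which is immediate from the explicit formula for $d_1'$), so your argument closes.

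There is, however, a slip in your identification of the constant in the bracket. What multiplies $d_1$ is the \emph{skew} $s=K\sqrt{T}\,\partial_K\hat\sigma$, not the at-the-money level; with your parametrisation $\hat\sigma=T^{\alpha+1/2}(a+b\log(K/v_0))$ this is $b\,T^{\alpha+1}$, not $a\,T^{\alpha+1/2}$. Your heuristic that ``the log-moneyness is eliminable in favour of $\hat\sigma$ and of the at-the-money level, and it is this identification that makes $aT^{\alpha+1/2}$ emerge'' is therefore wrong as stated. The Proposition reads ``$a$'' only because the paper's own proof uses $a$ for the slope (cf.\ its $\partial_x d_2=\partial_x d_1-aT^{\alpha+1/2}/x$), inconsistently with the Remark calling $a$ the level; and the exponent on $T$ in the Assumption must be $T^\alpha$ (not $T^\beta$, nor your $T^{\alpha+1/2}$) for the Proposition's $T^{\alpha+1/2}$ to come out.
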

\begin{proof}
Let us denote
$$C(K,T):=\mathbb{E}[(RV(v^{(\gamma,\nu)})(T)-K)^+].$$
The well-known Breeden-Litzenberger formula \cite{BL78} tells us that
$$\frac{\partial^2 C(x,T)}{\partial x^2}\bigg|_{x=K}=\psi_{RV}(K,T).$$
Under Assumption \ref{linearImpliedVol}, we have that
$$C(K,T)=BS(v_0,\hat{\sigma}(K,T),K,T)$$
where $BS(v_0,\sigma,K,T)=v_0\Phi(d_1)-K\Phi(d_2)$ is the Black-Scholes Call pricing formula with $\Phi$ the standard Gaussian cumulative distribution function. Then, differentiating $C$ with respect to the strike gives

$$\frac{\partial C(x,T)}{\partial x}\bigg|_{x=K}=v_0\phi(d_1(K)) \frac{\partial d_1(x)}{\partial x}\bigg|_{x=K}-x\phi(d_2(K)) \frac{\partial d_2(x)}{\partial x}\bigg|_{x=K}-\Phi(d_2(K))$$
where
\begin{eqnarray*}\frac{\partial d_1(x)}{\partial x}&&=\frac{-\hat{\sigma}(x,T)+\log(x/v_0)a(\alpha,\gamma,\nu) T^{\alpha}}{x\hat{\sigma}(x,T)^2\sqrt{T}}+\frac{1}{2}\frac{a(\alpha,\gamma,\nu) T^{\alpha+1/2}}{x}\\&&=\frac{-b(\alpha,\gamma,\nu) T^{\alpha}}{x\hat{\sigma}(x,T)^2\sqrt{T}}+\frac{1}{2}\frac{a(\alpha,\gamma,\nu) T^{\alpha+1/2}}{x}\end{eqnarray*}
and $$\frac{\partial d_2(x)}{\partial x}=\frac{\partial d_1(x)}{\partial x}-\frac{a(\alpha,\gamma,\nu) T^{\alpha+1/2}}{x}.$$
Using the well known identity $v_0\phi(d_1(x))=x \phi(d_2(x))$, proved in Appendix~\ref{appendix: Proof_A1},
we further simplify
$$\frac{\partial C(K,T)}{\partial K}=v_0\phi(d_1(K))\left(\frac{a(\alpha,\gamma,\nu) T^{\alpha+1/2}}{K}\right)-\Phi(d_2(K)).$$
Differentiating again we obtain,

$$\psi_{RV}(K,T)=-v_0\phi(d_1(K))\frac{a(\alpha,\gamma,\nu) T^{\alpha+1/2}}{K}\left(d_1(K)\frac{\partial d_1(x)}{\partial x}\bigg|_{x=K}+\frac{1}{K}\right)-\phi(d_2(K))\frac{\partial d_2(x)}{\partial x}\bigg|_{x=K}.$$
Then, by using $v_0\phi(d_1(x))=x \phi(d_2(x))$, we find that
$$\psi_{RV}(K,T)=-\phi(d_2(K))\left(a(\alpha,\gamma,\nu) T^{\alpha+1/2}\left(d_1(x)\frac{\partial d_1(x)}{\partial x}\bigg|_{x=K}+\frac{1}{K}\right)+\frac{\partial d_2(x)}{\partial x}\bigg|_{x=K}\right),$$
which we further simplify to
$$\psi_{RV}(K,T)=-\phi(d_2(K))\frac{\partial d_1(x)}{\partial x}\bigg|_{x=K}\left(a(\alpha,\gamma,\nu) T^{\alpha+1/2}d_1(K)+1\right),$$
and the result then follows.
Note that the density $\psi_{RV}(\cdot,T)$ is indeed continuous for all $T>0$.
\end{proof}
\begin{remark}
Note that Proposition \ref{prop:density} gives the density of $RV(v^{(\gamma,\nu)})(T)$ in closed-form. In addition, Proposition \ref{prop:density} can be easily used to get the density of the Arithmetic Asian option under the Black-Scholes model. This would correspond to the case $\alpha=0$ and $\nu=\sigma>0$ as the Black-Scholes constant volatility.
\end{remark}

\begin{remark}
Assuming the density $\psi_{RV}$ exists, we have the following volatility swap price:
$$\mathbb{E}[\sqrt{RV(v^{(\gamma,\nu)})(T)}] =\int_0^\infty \sqrt{x}\psi_{RV}(x,T)\D x.$$
\end{remark}
In Figure \ref{fig:VolSwap}, we provide numerical results for the volatility swap approximation, which performs best for short maturities, due to the nature of the approximation being motivated by small-time smile behaviour.
Interestingly, it captures rather accurately the short time decay of the Volatility Swap price for maturities less than 3 months; for larger maturities the absolute error does not exceed 20 basis points.

\begin{figure}[h]
\centering
\includegraphics[scale=0.48]{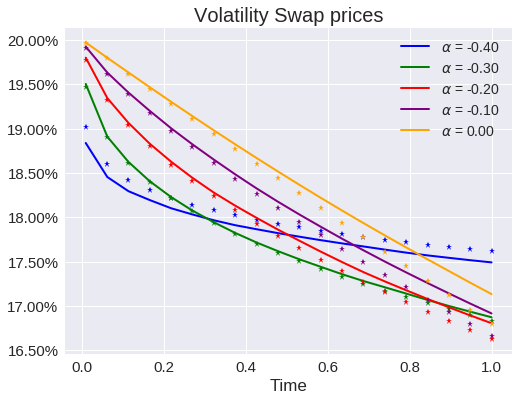}
\caption{Volatility Swap Monte Carlo price estimates (straight lines) and LDP based approximation (stars) for $\eta=1.5$ and $v_0=0.04$; for Monte Carlo we use $200,000$ simulations and $\Delta t =\frac{1}{1008}$.}
\label{fig:VolSwap}
\end{figure}

\clearpage
\section{Proof of Main Results}\label{app: proof of main results}
\subsection{Proof of Theorem \ref{LDP for vol process}} \label{3.3}
\begin{proof}
For $t\in\Tt$, $\eps>0$, we first define the rescaled processes
\begin{equation}\label{def: rescaled rBerg processes}
\begin{aligned}
Z^{\eps}_t &:=  \eps^{\beta/2}Z_t \overset{\textrm{d}}= Z_{\eps  t},  \\
v^{\eps}_t & :=   v_0 \exp \left( Z^{\eps}_t -\frac{\eta^2}{2} (\eps t)^{\beta} \right), &
\end{aligned}
\end{equation}
where $\beta := 2 \alpha + 1 \in (0,1)$.
From Schilder's Theorem \cite[Theorem  3.4.12 ]{DS89} and Proposition \ref{RKHS for v}, we have that the sequence of processes
$(Z^{\eps})_{\eps>0}$ satisfies a large deviations principle on $\Cc(\Tt)$ with speed $\eps^{-\beta}$ and rate function $\Lambda^Z$.
We now prove that the two sequences of stochastic processes $Z^\eps$ and
$\widetilde{Z}^\eps~:~=~ Z^\eps - \frac{\eta^2}{2}(\eps \cdot)^\beta$ are exponentially equivalent \cite[Definition 4.2.10]{DZ10}.
For each $\delta>0$ and $t \in \Tt$, there exists $\eps_* ~:~=~ \frac{1}{t} \left(\frac{2\delta}{\eta^2}\right)^{1/\beta}>0$ such that
$$ \sup_{t\in\Tt} \vert Z^\eps_t - \widetilde{Z}_t^\eps| = \sup_{t\in\Tt} \vert \frac{\eta^2}{2} (\eps t )^\beta \vert \le \delta,
$$
for all $0<\eps<\eps_*$.
Therefore, for all $\delta >0$, $\limsup_{\eps \downarrow 0} \eps^\beta \log \PP(\Vert Z^\eps - \widetilde{Z}^\eps\Vert_{\infty}>\delta) = - \infty$, and the two processes are indeed exponentially equivalent.
Then, using~\cite[Theorem 4.2.13]{DZ10}, the sequence of stochastic processes $(\widetilde{Z}^\eps)_{\eps >0 } $ also satisfies a large deviations principle on $\Cc(\Tt)$, with speed $\eps^{-\beta}$ and rate function $\Lambda^Z$.
Moreover, for all $\eps,t$, we have that $v^{\eps}_t = v_0 \exp(\widetilde{Z}^\eps_t)$, where the bijective transformation
$x~\mapsto~v_0\exp(x) $ is clearly continuous with respect to the sup norm metric.
Therefore we can apply the Contraction Principle \cite[Theorem 4.2.1]{DZ10}, which is stated in Appendix \ref{appendix: contraction princ}, concluding that the sequence of processes $(v^{\eps})_{\eps >0}$ satisfies a large deviations principle on $\Cc(\Tt)$ with speed
$\eps^{-\beta}$ and rate function $\Lambda^Z\left( \log\left(\frac{x}{v_0} \right)  \right) $.
Here we have used that, for each $\eps>0$, $t\in\Tt$ and $\xx \in \Cc(\Tt)$, the inverse mapping of the bijection transformation $\xx(t,\eps)\mapsto v_0\exp(x) $ is given by $\log\left(\frac{x}{v_0} \right)$.
Since, For all $\eps > 0 $ $\sup_{t \in \Tt} \left\vert v^\eps_t - v_{\eps t} \right\vert = 0 $, which implies that for any $\delta > 0$, 
$\limsup_{\eps \downarrow 0} \eps^{\beta} \log \mathbb{P} ( \Vert v^\eps_\cdot - v_{(\eps \cdot)} \Vert_{\infty} >   \delta) = -\infty$ trivially.
Thus the sequence of processes $(v^\eps_t)_{t \in \Tt}$ and $(v_{\eps t })_{t \in \Tt})$ are exponentially equivalent and therefore satisfy the same LDP.
Notice also that $\Lambda^v(v_0)=\Lambda^Z(0)=\Vert 0 \Vert^2_{\mathscr{H}^{K_\alpha} }=0.$
\end{proof}

\subsection{Proof of Corollary \ref{cor:ImpliedVolrBergomi}} \label{3.7}
\begin{proof}
The proof of Equation~\eqref{eq:OTM_IV} is similar to the proof of~\cite[Corollary 4.9]{FZ17}, and we shall prove  the lower and upper bound separately, which turn out to be equal.
Firstly, as the rate function $\hat{\Lambda}^v$ is continuous on $\Cc (\Tt)$, we have that, for all $k>0$,
$$
\lim_{t \downarrow 0} t^\beta \log \PP (\log [RV(v)(t)]>k) = -\mathrm{I}(k),
$$
as an application of Corollary~\ref{integrated vol LDP}.

\begin{itemize}
\item[(1)] The proof of the lower bound is exactly the same as presented in~\cite[Appendix C]{FZ17} and will be omitted here; we arrive at $\liminf_{t \downarrow 0} t^{\beta}\log \mathbb{E} \left[ (RV(v)(t)-\E^k)^+\right] \ge -\mathrm{I}(k). $
\item[(2)] To establish the upper bound, We closely follow \cite{PZ16} : \\
First we prove that 
\begin{equation}\label{eq:limequality}\lim_{t \downarrow 0} t^{\beta} \log \mathbb{E} [(RV(v)(t)-\E^k)]^+ = \lim_{t \downarrow 0} t^{\beta} \log \PP(RV(v)(t)\ge \E^k).  \end{equation}
We apply H\"older's inequality: 
$$
\mathbb{E} [(RV(v)(t)-\E^k)]^+ =
\mathbb{E} \left[ (RV(v)(t)-\E^k) \ind_{\{RV(v)(t)\ge \E^k\}} \right]
\le \mathbb{E} \left[ (RV(v)(t)-\E^k)^q \right]^{1/q} {\PP(RV(v)(t)\ge \E^k)}^{1-1/q},
$$
which holds for all $q>1$. We note that for all $q\geq 2$,  the mapping $x\mapsto x^q$ for any $x\geq 0$ is convex, therefore by H\"older's inequality one obtains $\left(x+y\right)^q\leq \frac{x^q +y^q}{2^{1-q}}$ for any $x,y \geq 0$, which in turn implies
\begin{equation}
\label{eq:3.7.1}
\mathbb{E} \left[ |RV(v)(t)-\E^k|^q \right]\leq 2^{q-1}\left[\mathbb{E} \left[ (RV(v)(t))^q\right]+(\E^k)^q \right].
\end{equation}
We further obtain the following inequality by applying Jensen's inequality and the fact that all moments exist for $(RV(v)(t))^q$
\begin{equation}\label{qth moment of RV}\mathbb{E} \left[ (RV(v)(t))^q\right] \leq \frac{1}{t^q} \int_0^t \mathbb{E}[v^q_s] \D s\leq  \frac{v^q_0}{t^q} \int_0^t \exp \left(\left(\frac{q^2\eta^2}{2} - \frac{q\eta^2}{2}\right) s^{2\alpha+1}\right) \D s\leq \frac{v^q_0}{t^{q-1}} \exp \left(\left(\frac{q^2\eta^2}{2} - \frac{q\eta^2}{2}\right)t^{2\alpha+1}\right)  \end{equation}
Therefore, using \eqref{eq:3.7.1} and \eqref{qth moment of RV} we obtain an upper bound
$$\limsup_{t \downarrow 0} t^{\beta} \log \mathbb{E} [(RV(v)(t)-\E^k)^+ ] \leq \limsup_{t \downarrow 0} (1-1/q) t^{\beta} \log \PP(RV(v)(t)\ge \E^k), $$
since it holds for $(1-1/q)< 1$.
To obtain a lower bound, we have for any $\varepsilon >0$ 
$$\mathbb{E} [(RV(v)(t)-\E^k)]^+\ge \mathbb{E} \left[(RV(v)(t)-\E^k)\ind_{\{RV(v)(t)\ge \E^k+\varepsilon\}}\right]\geq \varepsilon \PP\left(RV(v)(t)\ge \E^k+\varepsilon\right),$$
which implies
$$\liminf_{t \downarrow 0} t^{\beta} \log \mathbb{E} [(RV(v)(t)-\E^k)^+]  \geq  \liminf_{t \downarrow 0} t^{\beta} \log \PP\left(RV(v)(t)\ge \E^k+\varepsilon\right). $$
Now, using \eqref{eq:limequality} leads to $\limsup_{t \downarrow 0} t^\beta \log \mathbb{E} \left[ (RV(v)(t)-\E^k)^+\right] \le -\mathrm{I}(k)$.
The conclusion for Equation~\eqref{eq:OTM_IV} then follows directly.

\end{itemize}
The proof of Equation~\eqref{eq:OTM_sqIV} follows the same steps, after proving that the process $\sqrt{RV(v)}$ satisfies a large deviations principle on $\RR_+$. Indeed, as the function $x \mapsto x^2$ is a continuous bijection on $\RR_+$, we have that the square root of the integrated variance process $\sqrt{RV(v)}$ satisfies a large deviations principle on $\RR_+$ as $t$ tends to zero, with speed $t^{-\beta}$ and rate function $\hat{\Lambda}^v((\cdot)^2)$, using~\cite[Theorem 4.2.4]{DZ10}.
\end{proof}
\subsection{Proof of Theorem \ref{truncated var ldp theorem}}\label{3.10}
\begin{proof}
For brevity we set $n=2$, but for larger $n$, identical arguments can be applied.
 From  Schilder's Theorem \cite[Theorem  3.4.12 ]{DS89} and Proposition \ref{RKHS for v}, we have that the sequence of processes
$(Z^{\eps})_{\eps>0}$ satisfies a large deviations principle on $\Cc(\Tt)$ with speed $\eps^{-\beta}$ and rate function $\Lambda^Z$.
Define the operator $f:\Cc(\Tt)\rightarrow \Cc((\Tt),\RR^2)$ by $f(\xx):=(\frac{\nu_1}{\eta} \xx, \frac{\nu_2}{\eta}\xx)$, which is clearly continuous with respect to the sup-norm $\Vert \cdot \Vert_{\infty}$ on $\Cc(\Tt,\RR^2)$.
Applying the Contraction Principle then yields that the sequence of two-dimensional processes $((\frac{\nu_1}{\eta}Z^\eps, \frac{\nu_2}{\eta}Z^\eps))_{\eps>0}$ satisfies a large deviations principle on $\Cc(\Tt,\RR^2)$ as $\eps$ tends to zero with speed $\eps^{-\beta}$ and rate function
$$\tilde{\Lambda}(\yy,\zz):=\inf\{\Lambda^Z(\xx): f(\xx)=(\yy,\zz) \}=\inf\{\Lambda^Z(\frac{\eta}{\nu_1}\yy): \zz=\frac{\nu_2}{\nu_1}\yy \}. $$
Identical arguments to the proof of Theorem \ref{LDP for vol process} give that the sequences of processes
$((\frac{\nu_1}{\eta}Z^\eps, \frac{\nu_2}{\eta}Z^\eps))_{\eps>0}$
and
$((\frac{\nu_1}{\eta}Z^\eps-\frac{\nu_1^2}{2}(\eps\cdot)^\beta, \frac{\nu_2}{\eta}Z^\eps-\frac{\nu^2}{2}(\eps\cdot)^\beta ) )_{\eps>0}$
are exponentially equivalent, thus satisfy the same large deviations principle, with the same rate function and the same speed.

We now define the operator $g^\gamma:\Cc(\Tt,\RR^2)\rightarrow\Cc(\Tt)$ as
$g^\gamma(\xx,\yy)=v_0(\gamma e^\xx + (1-\gamma)e^\yy) $. For small perturbations $\delta^\xx, \delta^\yy \in \Cc(\Tt)$ we have that
$$ \sup_{t\in\Tt} \vert
g^\gamma(\xx+\delta^\xx, \yy+\delta^\yy)-g^\gamma(\xx,\yy)
\vert
\le
\vert v_0 \vert
\left(
\sup_{t\in\Tt} \vert
\gamma e^{\xx(t)}(e^{\delta^\xx(t)}-1)
\vert
+
\sup_{t\in\Tt} \vert
(1-\gamma) e^{\yy(t)}(e^{\delta^\yy(t)}-1)
\vert
\right).
$$
Clearly the right hand side tends to zero as $\delta^\xx,\delta^\yy$ tends to zero; thus the operator $g^\gamma$ is continuous with respect to the sup-norm $\Vert\cdot\Vert_{\infty}$ on $\Cc(\Tt)$.
Applying the Contraction Principle then yields that the sequence of processes
$ (v^{(\eps,\gamma,\nu)})_{\eps>0}:=
\left(v_0 \left(
\gamma\exp(\frac{\nu_1}{\eta}Z^\eps-\frac{\nu_1^2}{2}(\eps\cdot)^\beta)
+(1-\gamma)\exp(\frac{\nu_2}{\eta}Z^\eps-\frac{\nu_2^2}{2}(\eps\cdot)^\beta )
\right)\right)_{\eps>0}
$
satisfies a large deviations principle on $\Cc(\Tt)$ as $\eps$ tends to zero, with speed $\eps^{-\beta}$ and rate function
$$\xx\mapsto \inf\{\tilde{\Lambda}(\yy,\zz):\xx=g^\gamma(\yy,\zz)  \}
=\inf\{\Lambda^Z(\frac{\eta}{\nu_1}\yy):\xx=g^\gamma(\yy,\frac{\nu_2}{\nu_1}\yy)  \}
= \inf\{\Lambda^Z(\frac{\eta}{\nu_1}\yy):\xx=v_0 (\gamma e^\yy+(1-\gamma)e^{\frac{\nu_2}{\nu_1}\yy})  \}.
$$
Since, for all $\eps>0$ and $t\in\Tt$, $v^{(\gamma,\nu)}_{\eps t}$ and $v^{(\eps,\gamma,\nu)}_t$ are equal, the theorem follows immediately.
Identical arguments to the proof of Theorem \ref{LDP for vol process} then yield that
$\Lambda^\gamma(v_0)=0$.
\end{proof}
\subsection{Proof of Theorem~\ref{Th:LDP_mixedOU}}\label{3.14}
\begin{proof}
We begin by introducing a small-time rescaling of~\eqref{eq:mixedvarianceMultiFactorProcess} for $\eps >0$, so that the system becomes
\begin{equation}\label{eq:LDP_OU}
v^{(\gamma, \nu, \Sigma,\eps)}_t := v^{(\gamma, \nu, \Sigma)}_{\eps t}
= v_0 \sum_{i=1}^n \gamma_i \mathcal{E}\left( \frac{\nu^i}{\eta} \cdot  \mathrm{L}_i \mathcal{Z}^{\eps}_t \right),
\end{equation}
with the rescaled process $\mathcal{Z}^{\eps}_t$ defined as $\mathcal{Z}^{\eps}_t := \mathcal{Z}_{\eps t} = \eps^{\alpha+\frac{1}{2}} \left(\int_0^t K_\alpha(s,t)\D W^1_s, ..., \int_0^t K_\alpha(s,t)\D W^m_s \right)$.

The $m$-dimensional sequence of processes $(\eps^{\beta/2}(W^1,\cdots,W^m))_{\eps>0}$ satisfies a large deviations principle on $\mathcal{C}(\Tt,\RR^m)$ as $\eps$ goes to zero with speed $\eps^{-\beta}$ and rate function $\Lambda^m$ defined by $\Lambda^m(\mathrm{x}) := \frac{1}{2} \left\| \mathrm{x} \right\|^2_2$ for $\xx\in \mathscr{H}_m$ and $+\infty$ otherwise, by Schilder's Theorem \cite[Theorem  3.4.12 ]{DS89}.  
 $\mathscr{H}_m$ is the reproducing kernel Hilbert space of the measure induced by $(W_1,\cdots,W_m)$ on
 $\mathcal{C}(\Tt,\RR^m)$,
defined as
$$\mathscr{H}_m := \left\{ (g^1,\cdots, g^m)\in\mathcal{C}(\Tt,\RR^m): g^i (t) = \int_0^t f^i(s) \D s, f^i\in L^2(\Tt) \text{ for all } i\in 1\cdots m \right\}.$$
Then, using an extension of the proof of~\cite[Theorem 3.6]{HJL18}, for $i=1,\cdots, n$, the sequence of $m$-dimensional processes $\left(\mathrm{L}_i \mathcal{Z}^\eps_\cdot \right)_{\eps >0}$ satisfies a large deviations principle on $\mathcal{C}(\Tt,\RR^m)$ as $\eps$ tends to zero with speed $\eps^{-\beta}$ and rate function
$ \yy \mapsto \inf \left\{ \Lambda^m(\mathrm{x}): \mathrm{x}\in \mathscr{H}_m, \mathrm{y} = \mathrm{L}_i \mathrm{x}\right\}$ with $\mathrm{L}_i$ the lower triangular matrix introduced in Model~\eqref{eq:mixedvarianceMultiFactorProcess}.
Consequently, for $i=1,\cdots, n$ each (one-dimensional) sequence of processes
$\left( \frac{\nu^i}{\eta}\cdot\mathrm{L}_i \mathcal{Z}^\eps_\cdot \right)_{\eps >0}$ also satisfies a large deviations principle as $\eps$ tends to zero, with speed $\eps^{-\beta}$ and rate function
$ \Lambda_{\Sigma_i} (\mathrm{y}) := \inf \left\{ \Lambda^m(\mathrm{x}): \mathrm{x}\in \mathscr{H}_m, \mathrm{y} = \frac{\nu^i}{\eta}\cdot \mathrm{L}_i \mathrm{x}\right\}$.

Analogously to Theorem \ref{LDP for vol process}, each sequence of processes
$\left( \frac{\nu^i}{\eta} \cdot \mathrm{L}_i \mathcal{Z}^\eps_\cdot \right)_{\eps >0}$
and
$\left(\frac{\nu^i}{\eta}\cdot \mathrm{L}_i \mathcal{Z}^\eps_\cdot - \frac{1}{2} \nu^i \Sigma_i \nu^i (\eps\cdot)^\beta\right)_{\eps >0}
$
are exponential equivalent for $i=1\cdots n$ ; therefore they satisfy the same large deviations principle with the same speed $\eps^{-\beta}$ and the same rate function $\Lambda_{\Sigma_i}$.

We now define the operator $g^\gamma: \mathcal{C}(\Tt, \RR^n)  \rightarrow \mathcal{C}(\Tt)$
as
$$g^\gamma (\xx )(\cdot) := v_0 \sum_{i=1}^n \gamma_i \exp
\left( \frac{\nu^i}{\eta}\cdot \xx(\cdot) \right),$$
with $\xx :=(\xx_1,\cdots,\xx_n)$.
For small perturbations $\delta^1,\cdots, \delta^n \in \mathcal{C}(\Tt)$ with $\delta := (\delta^1,\cdots, \delta^n)$, we have that
\begin{align*}
\sup_{t\in \Tt} \left| g^\gamma (\xx+\delta)(t)-g^\gamma (\xx)(t) \right|
& =
\sup_{t\in \Tt} \left| v_0 \sum_{i=1}^n \gamma_i
\exp(\frac{\nu^i}{\eta}\cdot(\xx(t)+\delta(t)) ) -\exp(\frac{\nu^i}{\eta}\cdot\xx(t))
 \right| \\
& \le
\sup_{t\in \Tt}
\vert v_0 \vert
\sum_{i=1}^n
\left| \exp( \frac{\nu^i}{\eta}\cdot \xx(t)) (\exp(\delta(t))-1)  \right|
\end{align*}

The right-hand side tends to zero as $\delta^1,\cdots,\delta^n$ tends to zero; thus the operator $g^\gamma$ is continuous with respect to the sup-norm $\left\|\cdot \right\|_{\infty}$ on $\mathcal{C}(\Tt)$.
Using that
$v^{(\gamma, \nu, \Sigma,\eps)}_t= g^\gamma (\frac{\nu^i}{\eta}\cdot  \mathrm{L}_i \mathcal{Z}^\eps_\cdot - \frac{1}{2} \nu^i \Sigma_i \nu^i (\eps\cdot)^\beta)(t) $
for each $\eps >0$ and $t \in\Tt$,
we can apply  the Contraction Principle then yields that the sequence of processes $(v^{(\gamma, \nu, \Sigma,\eps)})_{\eps>0}$ satisfies a large deviations principle on $\mathcal{C}(\Tt)$ as $\eps$ tends to zero, with speed $\eps^{-\beta}$ and rate function
$$\mathrm{y}\mapsto \inf \left\{ \Lambda_{\Sigma_i} (\mathrm{x}) : \mathrm{y} = v_0 \sum_{i=1}^n \gamma_i \exp \left(\frac{\nu^i}{\eta}\cdot \mathrm{x} \right) \right\}
= \inf \left\{ \Lambda^m(\mathrm{x}) : \mathrm{x} \in \mathcal{H}_m, \mathrm{y} = v_0 \sum_{i=1}^n \gamma_i \exp \left( \frac{\nu^i}{\eta} \cdot \mathrm{L}_i \mathrm{x}\right) \right\}.$$
As with the previous two models we have that, for all $\eps>0$ and $t\in\Tt$,
$v^{(\gamma, \nu, \Sigma,\eps)}_t$ and $ v^{(\gamma, \nu, \Sigma)}_{\eps t}$ are equal in law and so the result follows directly.
\end{proof}
\subsection{Proof of Proposition \ref{rkhs of vix type process}}\label{5.2}
\begin{proof}
We reall that that the inner product
$ \langle \IgT f_1, \IgT f_2 \rangle_{\HgT}:= \langle f_1, f_2, \rangle_{L^2[0,T]}, $ where $$
\IgT f(\cdot)= \int_0^T g(\cdot,s) f(s) \D s, \quad
\HgT= \{\IgT f: f\in L^2[0,T] \}.$$The proof of Proposition \ref{rkhs of vix type process}, which is similar to the proofs given in \cite{JPS18}, is made up of three parts.
The first part is to prove that $ \left(\HgT,  \langle \cdot, \cdot \rangle_{\HgT} \right)$ is a separable Hilbert space.
Clearly $\IgT $ is surjective on $\HgT$.
Now take $f_1,f_2 \in L^2[0,T] $ such that $\IgT f_1 = \IgT f_2$. For any $t\in [T,T+\Delta]$ it follows that
$\int_0^T g(t,s)[f_1(s)-f_2(s)]\D s = 0 $; applying the Titchmarsh convolution theorem then implies that
$f_1=f_2$ almost everywhere and so $\IgT :L^2[0,T]\to \HgT$ is a bijection.
$\IgT $ is a linear operator, and therefore $ \langle \cdot, \cdot \rangle_{\HgT} $ is indeed an inner product; hence $ \left(\HgT,  \langle \cdot, \cdot \rangle_{\HgT} \right)$ is a real inner product space.
Since~$L^2[0,T]$ is a complete (Hilbert) space,
for every Cauchy sequence of functions $\{ f_n \}_{n \in \mathbb{N} }\in L^2[0,T] $,
there exists a function $\widetilde{f} \in  L^2[0,T]$
such that
$\{ f_n \}_{n \in \mathbb{N} }$ converges
to $\widetilde{f} \in  L^2[0,T]$; note also that
$\{\IgT f_n\}_{n\in\mathbb{N}}$ converges to $\IgT \widetilde{f}$ in $\HgT$.
Assume for a contradiction that there exists $f \in L^2[0,T]$ such that $f \neq \widetilde{f}$ and $\{\IgT f_n\}_{n\in\mathbb{N}}$ converges to $\IgT f$ in $\HgT$, then, since~$\IgT $ is a bijection,
the triangle inequality yields
$$
0 < \left\|\IgT f - \IgT \widetilde{f} \right\|_{\HgT}
\le\left\|\IgT f - \IgT f_n \right\|_{\HgT}
 +\left\|\IgT \widetilde{f} - \IgT f_n\right\|_{\HgT},
$$
which converges to zero as~$n$ tends to infinity.
Therefore $f=\widetilde{f}$, $\IgT f \in\HgT$ and $\HgT$ is complete,
hence a real Hilbert space.
Since~$L^2[0,T]$ is separable with countable orthonormal basis $\{\phi_n\}_{n \in \mathbb{N}}$,
then $\{\IgT  \phi_n\}_{n \in \mathbb{N}}$ is an orthonormal basis for~$\HgT$,
which is then separable.\\
The second part of the proof is to show that there exists a dense embedding
$\iota : \HgT \to \Cc[T,T+\Delta]$.
Since there exists $ h \in L^2[0,T]$ such that $\int_0^{\eps} |h(s)|\D s$ for all $\eps>0$  and
$g(t,\cdot)=h(t-\cdot)$ for any $t \in [T,T+\Delta]$, we can apply by \cite[Lemma 2.1]{Che08}, which tells us that
$\HgT$ is dense in $ \Cc[T,T+\Delta]$ and so we choose the embedding to be the inclusion map.\\
Finally we must prove that every $f^* \in \Cc[T,T+\Delta]^*$, where $f^*$ is defined in \cite[Definition 3.1]{JPS18}, is Gaussian on $ \Cc[T,T+\Delta]$, with variance
$\Vert \iota^* f^* \Vert_{{\HgT}^*} $, where $\iota^*$ is the dual of $\iota$.
Take $f^* \in \Cc[T,T+\Delta]^*$, then by the fact that $(\Cc[T,T+\Delta],\mathcal{H}^{g,T},\mu)$ is a RKHS triplet by similar arguments to \cite[Lemma 3.1]{FZ17}, we obtain using \cite[Remark 3.6]{JPS18} that $\iota^*$ admits an isometric embedding $ \overline{\iota}^*$ such that
$$ \Vert \overline{\iota}^* f^* \Vert_{{\HgT}^*}
=
\Vert f^* \Vert_{L^2(\Cc[T,T+\Delta] ,\mu)}
=
\int_{\Cc[T,T+\Delta]} \!\!\!\!\!\!\!\!\!\!\!\!\!\!\!\!\! (f^*)^2\D \mu
= \textrm{VAR}(f^*),
$$
where $\mu$ is the Gaussian measure induced by the  process $\int_0^T g(\cdot,s) \D s $ on
$\left( \Cc[T,T+\Delta], \mathscr{B}(\Cc[T,T+\Delta]) \right).$
\end{proof}
\subsection{Proof of Theorem \ref{them: rescaled VIX ldp}}\label{5.6}
\begin{proof}
First we recall $\widetilde{V}^{g,T,\eps }_t:= V^{g,T,\eps}_t - \frac{\eps^\gamma}{2}\int_0^tg^2(t,u)\D u + \eps^{\gamma /2}$.
We begin the proof by showing that the sequence of processes $(V^{g,T,\eps })_{\eps \in [0,1]}$ and
$(\widetilde{V}^{g,T,\eps})_{\eps \in [0,1]} $ are exponentially equivalent \cite[Definition 4.2.10]{DZ10}.
As $g(t,\cdot) \in L^2[0,T]$ for all $t\in[T,T+\Delta]$, for each $\delta>0$ there exists $\eps_*>0$ such that
$ \sup_{t\in[T,T+\Delta]} \left\vert \eps_*^{\gamma/2} - \frac{\eps_*^\gamma}{2}\int_0^T g^2(t,u)\D u \right\vert \le \delta.
$
Therefore, for the $\Cc[T,T+\Delta]$ norm $\Vert \cdot \Vert_{\infty}$ we have that for all $\eps_*>\eps>0$,
$$ \PP\left( \left\Vert V^{g,T,\eps } -\widetilde{V}^{g,T,\eps} \right\Vert_{\infty} > \delta \right) = \PP \left( \sup_{t\in[T,T+\Delta]} \left\vert \eps^{\gamma/2} - \frac{\eps^\gamma}{2}\int_0^T g^2(t,u)\D u \right\vert > \delta \right)=0.
$$

Therefore $\limsup_{\eps \downarrow 0} \eps^\gamma \log \PP\left( \left\Vert V^{g,T,\eps } -\widetilde{V}^{g,T,\eps} \right\Vert_{\infty} > \delta \right) = - \infty$, and so the two sequences of processes $(V^{g,T,\eps })_{\eps \in [0,1]}$ and
$(\widetilde{V}^{g,T,\eps})_{\eps \in [0,1]} $ are exponentially equivalent; applying \cite[Theorem 4.2.13]{DZ10} then yields that $(\widetilde{V}^{g,T,\eps})_{\eps \in [0,1]} $ satisfies a large deviations principle on $\Cc[T,T+\Delta]$ with speed $\eps^{-\gamma}$ and rate function $\Lambda^V$.

We now prove that the operators $\varphi_{1,\xi_0}$ and $\varphi_2$ are continuous with respect to the
$\Cc([T,T+\Delta]\times [0,1]) $ and $\Cc[0,1]$ $\Vert \cdot \Vert_{\infty} $ norms respectively. The proofs are very simple, and are included for completeness.
First let us take a small perturbation $\delta^f \in \Cc([T,T+\Delta]\times [0,1])$:
\begin{align*}
\left\Vert \varphi_{1,\xi_0}(f+\delta^f) - \varphi_{1,\xi_0}(f) \right\Vert_{\infty}
 & = \sup_{\substack{ \eps \in [0,1] \\ s \in [T,T+\Delta]}} \left\vert \xi_0(s)e^{f(s,\eps)}\left( e^{\delta^f(s,\eps)}-1\right)  \right\vert \\
 & \le \sup_{\substack{ \eps \in [0,1] \\ s \in [T,T+\Delta]}} \vert \xi_0(s) \vert
\sup_{\substack{ \eps \in [0,1] \\ s \in [T,T+\Delta]}}  \vert e^{f(s,\eps)} \vert
\sup_{\substack{ \eps \in [0,1] \\ s \in [T,T+\Delta]}}  \vert e^{\delta^f(s,\eps)}-1 \vert.
\end{align*}
Since $\xi_0$ is continuous on $[T,T+\Delta]$ and $f$ is continuous on $[T,T+\Delta]\times[0,1]$, they are both bounded.
Clearly $e^{\delta^f(s,\eps)}-1$ tends to zero as $\delta^f$ tends to zero and hence the operator $\varphi_{1,\xi_0}$ is continuous.
Now take a small perturbation $\delta^f \in \Cc([T,T+\Delta]\times [0,1])$:
$$ \left\Vert \varphi_2(f+\delta^f) -\varphi_2(f) \right\Vert_{\infty}
= \sup_{\eps\in[0,1]} \left\vert \frac{1}{\Delta} \int_T^{T+\Delta} \delta^f(s,\eps)\D s \right\vert
\le M,
$$
where $M:=\sup_{\eps\in[0,1]}\delta^f(s,\eps)$. Clearly $M$ tends to zero as $\delta^f$ tends to zero, thus the operator $\varphi_2$ is also continuous.

For every $s \in [T,T+\Delta]$ we have the following: by an application of the Contraction Principle \cite[Theorem 4.2.1]{DZ10} and using the fact that $\eps \mapsto (\varphi_{1,\xi_0}f)(s,\eps)$ is a bijection for all $f\in\Cc[T,T+\Delta]$
it follows that the sequence of stochastic processes
$
\left( \left(\varphi_{1,\xi_0} \widetilde{V}_s^{g,T,\eps}\right)(s,\eps)
\right)_{\eps\in[0,1]}
$
satisfies a large deviations principle on $\Cc[0,1]$ as $\eps$ tends to zero with speed $\eps^{-\gamma}$ and rate function
$$
\hat{\Lambda}^V_s(\yy):=\Lambda^V\left( (\varphi_{1,\xi_0} \yy)^{-1}(s,\cdot) \right)
=\Lambda^V\left(\log\left(\frac{\yy(s,\cdot)}{\xi_0(s)} \right)  \right).
$$
A second application of the Contraction Principle then yields that the sequence of stochastic processes
$\left( \
(\varphi_2 (\varphi_{1,\xi_0} \widetilde{V}_s^{g,T,\eps}))
(\eps)
\right)_{\eps\in[0,1]}
$
satisfies a large deviations principle on $\Cc[0,1]$ with speed $\eps^{-\gamma}$ and rate function
$\Lambda^{\text{VIX}}(\xx)=
\inf_{s\in[T,T+\Delta]}
\{ \Lambda^V\left( (\varphi_{1,\xi_0} \yy)^{-1}(s,\cdot) \right): \xx(\cdot) = (\varphi_2\yy)(\cdot)
\}.
$
By definition, the sequence of processes
$\left( \
(\varphi_2 (\varphi_{1,\xi_0} \widetilde{V}_s^{g,T,\eps}))
(\eps)
\right)_{\eps\in[0,1]}
$
is almost surely equal to the rescaled VIX processes
$(e^{\eps^{\gamma/2}}\text{VIX}_{T,\eps^{\gamma/2}})_{\eps \in [0,1]}
$ and hence the satisfies the same large deviations principle.

\end{proof}

\section{Numerical recipes}\label{appendix: numerics}
We first consider the simple rough Bergomi \eqref{rough Bergomi} model for sake of simplicity and further develop the mixed multi-factor rough Bergomi \eqref{eq:mixedvarianceMultiFactorProcess} model in Appendix \ref{App: MutiCase} (which also includes \eqref{eq:mixedvarianceProcess}).
Therefore, we tackle the numerical computation of the rate function $$\hat{\Lambda}^v(\yy) := \inf \left\{ \Lambda^v(\xx) : \yy = RV(\xx)(1) \right\}.$$
This problem, in turn,  is equivalent to the following optimisation:
\begin{equation}\label{eq:minimisation1}
\hat{\Lambda}^v(\yy):=\inf_{f\in L^2[0,1]} \left\{ \frac{1}{2}||f||^2 : \yy = RV\left(\exp\left(\int_0^\cdot K_{\alpha}(u,\cdot)f(u)\D u\right)\right)(1) \right\}.
\end{equation}
A natural approach is to consider a class of functions that is dense in $L^2[0,1]$. The Stone-Weierstrass theorem states that any continuous function on a closed interval can be uniformly approximated by a polynomial function. Consequently, we consider a polynomial basis,
$$\hat{f}^{(n)}(s)=\sum_{i=0}^n a_i s^i$$
such that $\{ \hat{f}^{(n)} \}_{a_i \in \RR}$ is dense in $L^2[0,1]$ as $n$ tends to $+\infty$. Problem \eqref{eq:minimisation1} may then be approximated via
$$\hat{\Lambda}^v_n(\yy):=\inf_{a\in\mathbb{R}^{n+1}} \left\{ \frac{1}{2}||\hat{f}^{(n)}||^2 : \yy = RV\left(\exp\left(\int_0^\cdot K_{\alpha}(u,\cdot)\hat{f}^{(n)}\D u(u)\right)\right)(1) \right\},$$
where $a=(a_0,...,a_n)$. In order to obtain the solution, first the constraint
$\yy = RV\left(\exp\left(\int_0^\cdot K_{\alpha}(u,\cdot)\hat{f}^{(n)}(u)\D u\right)\right)(1)$
needs to be satisfied. To accomplish this, we consider anchoring one of the coefficients in $\hat{f}^{(n)}$ such that

\begin{equation} \label{eq:coefanchoring}
a_i^*=\argmin_{a_i\in\mathbb{R}} \left\{\left(\yy-RV\left(\exp\left(\int_0^\cdot K_{\alpha}(u,\cdot)\hat{f}^{(n)}(u)\D u\right)\right)(1)\right)^2 \right\}
\end{equation}
and the constraint will be satisfied for all combinations of the vector $a^*=(a_0,...,a_{i-1},a_i^*,a_{i+1},...,a_n)$. Numerically, \eqref{eq:coefanchoring} is easily solved using a few iterations of the Newton-Raphson algorithm. Then we may easily solve
$$\inf_{a^*\in\mathbb{R}^{n+1}} \left\{ \frac{1}{2}||\hat{f}^{(n)}||^2\right\}$$
which will converge to the original problem \eqref{eq:minimisation1} as $n\to+\infty$. The polynomial basis is particularly convenient since we have that

\begin{equation}\label{eq:IVpolynomial}
RV\left(\exp\left(\int_0^\cdot K_{\alpha}(u,\cdot)\hat{f}^{(n)}(u)\D u\right)\right)(1)=\int_0^1 \exp\left(\eta \sqrt{2\alpha+1} \sum_{i=0}^n\frac{a_i s^{\alpha+1+i} \text{ }_2F_1(i+1,-\alpha,i+2,1)}{i+1}\right)\D s,
\end{equation}
where ${}_2F_1$ denotes the Gaussian hypergeometric function.
In particular one may store the values $\{  {}_2F_1 \left(i+1,-\alpha,i+2,1)\right)\}_{i=0}^n$ in the computer memory and reuse them through different iterations. In addition, the outer integral in \eqref{eq:IVpolynomial} is efficiently computed using Gauss-Legendre quadrature i.e.
$$RV\left(\left(\int_0^\cdot K_{\alpha}(u,\cdot)\hat{f}^{(n)}(u)\D u\right)\right)(1)\approx \frac{1}{2}\sum_{k=1}^m\exp\left(\eta\sqrt{2\alpha+1} \sum_{i=1}^n\frac{a_i \left(\frac{1}{2}(1+p_k)\right)^{\alpha+1+i}
\text{ }_2F_1(i+1,-\alpha,i+2,1)}{i+1}\right)w_k,$$
where $\{p_k,w_k\}_{k=1}^m$ are $m$-th order Legendre points and weights respectively.
\subsection{Convergence Analysis of the numerical scheme}
In Figure \ref{fig:ConvergenceAnalysis} we report the absolute differences $|\hat{\Lambda}^v_{n+1}(\yy)-\hat{\Lambda}^v_n(\yy)|$, for which we observe that the distance decreases as we increase $n$, the degree of the approximating polynomial. We must enphasize that both numerical routines performing the minimisation steps have a default tolerance of $10^{-8}$, hence we cannot expect to obtain higher accuracies than the tolerance and that is why Figure \ref{fig:ConvergenceAnalysis}  becomes noisy once this acuracy is obtained. We note that in Figure \ref{fig:ConvergenceAnalysis} we observe a fast convergence, and with just $n=5$  we usually obtain accuracy of $10^{-5}$.
\begin{figure}[h]
\centering
\includegraphics[scale=0.45]{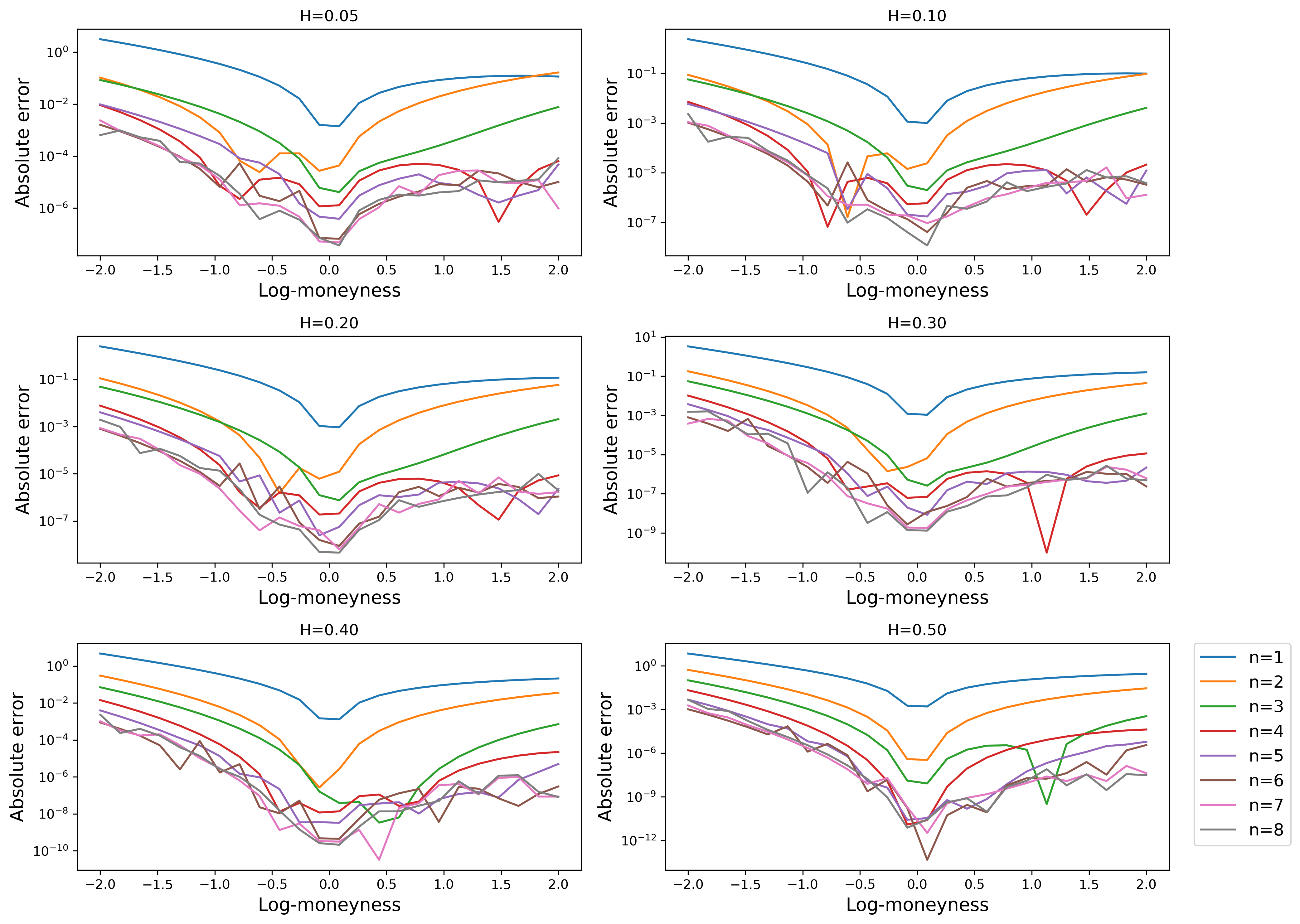}
\caption{Absolute difference of subsequent approximating polynomials $|\hat{\Lambda}^v_{n+1}(\yy)-\hat{\Lambda}^v_n(\yy)|$}
\label{fig:ConvergenceAnalysis}
\end{figure}
\subsection{A tailor-made polynomial basis for rough volatility}
We may improve the computation time of the previous approach by considering a tailor-made polynomial basis. In particular, recall the following relation
$$\int_0^s K_\alpha(u,s) u^k\D u=\frac{u^{\alpha+1+k} {}_2F_1(k+1,-\alpha,k+2)}{k+1},$$
then, for $k=-\alpha-1$ we obtain
$$\int_0^s K_\alpha(u,s) u^{-\alpha-1}\D u=\frac{ {}_2F_1(-\alpha,-\alpha,1-\alpha,1)}{-\alpha},$$
which in turn is a constant that does not depend on the upper integral bound $s$.
\begin{proposition}\label{prop:anchorCloseForm}
Consider the basis $\hat{g}^{(n)}(s)=c s^{-\alpha-1}+\sum_{i=0}^n a_i s^i$, where $c\in\mathbb{R}$. Then, for $c=c^*$ with
$$c^*=\frac{-\alpha}{\eta\sqrt{2\alpha +1} \text{}_2F_1(-\alpha,-\alpha,1-\alpha,1) }\log\left(\frac{y}{\int_0^1 \exp\left(\eta\sqrt{2\alpha + 1} (\sum_{i=0}^n\frac{a_i s^{\alpha+1+i} \text{ }_2F_1(i+1,-\alpha,i+2,1)}{i+1}\right) \D s}\right),$$
$\hat{g}^{(n)}(s)$ solves \eqref{eq:coefanchoring}.
\end{proposition}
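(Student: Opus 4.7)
The plan is to exploit the kernel identity recalled immediately before the proposition: the choice of exponent $-\alpha-1$ is engineered precisely so that $\int_0^s K_\alpha(u,s) u^{-\alpha-1}\D u$ is independent of $s$. This means the extra term $c s^{-\alpha-1}$ in $\hat{g}^{(n)}$ contributes only a constant shift inside the exponent, which factors multiplicatively out of $RV(\exp(\cdot))(1)$. The constraint in \eqref{eq:coefanchoring} then becomes linear in $c$ after a logarithm, so $c^*$ can be read off in closed form.

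First I would split by linearity and evaluate each piece via the identity quoted in the excerpt (carrying along the $\eta\sqrt{2\alpha+1}$ normalisation, which the paper's shorthand suppresses). Writing
\begin{equation*}
\int_0^s K_\alpha(u,s)\,\hat{g}^{(n)}(u)\,\D u \;=\; A(c)\;+\;P(s),
\end{equation*}
where
\begin{equation*}
A(c) \;:=\; c\,\eta\sqrt{2\alpha+1}\,\frac{{}_2F_1(-\alpha,-\alpha,1-\alpha,1)}{-\alpha}
\end{equation*}
is genuinely $s$-independent because the exponent $\alpha+1+(-\alpha-1)=0$, and
\begin{equation*}
P(s) \;:=\; \eta\sqrt{2\alpha+1}\sum_{i=0}^n \frac{a_i\, s^{\alpha+1+i}\;{}_2F_1(i+1,-\alpha,i+2,1)}{i+1}
\end{equation*}
comes from the polynomial part with $k=i$.

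Second, I would substitute this decomposition into the realised variance operator. Because the exponential factorises, $\exp(A(c))$ pulls out of the outer integral to give
\begin{equation*}
RV\!\left(\exp\!\left(\int_0^\cdot K_\alpha(u,\cdot)\,\hat{g}^{(n)}(u)\,\D u\right)\right)\!(1) \;=\; \exp(A(c))\int_0^1 \exp(P(s))\,\D s.
\end{equation*}
Setting this equal to $\yy$ and taking logarithms converts \eqref{eq:coefanchoring} into the linear equation
\begin{equation*}
A(c) \;=\; \log\!\left(\frac{\yy}{\int_0^1 \exp(P(s))\,\D s}\right),
\end{equation*}
whose unique solution in $c$ is exactly the stated $c^*$.

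Finally, the objective in \eqref{eq:coefanchoring} is a non-negative square, and at $c=c^*$ it vanishes by construction; hence $c^*$ is the (unique, as $c\mapsto A(c)$ is affine and injective) minimiser, so $\hat{g}^{(n)}$ satisfies the anchoring constraint. There is no serious obstacle here: the only delicate point is the integrability of $u^{-\alpha-1}$ against $K_\alpha(\cdot,s)$ at the origin, which holds because $\alpha\in(-1/2,0)$ gives $-\alpha-1\in(-1,-1/2)$, so the underlying Beta integral converges and the hypergeometric identity applies. The proof is essentially a direct computation driven by the algebraic coincidence that $k=-\alpha-1$ cancels the $s$-dependence in the kernel convolution.
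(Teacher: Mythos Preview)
Your proof is correct and follows exactly the same route as the paper's: split $\int_0^s K_\alpha(u,s)\hat g^{(n)}(u)\,\D u$ into the $s$-independent piece coming from $c\,s^{-\alpha-1}$ and the polynomial piece $P(s)$, factor $\exp(A(c))$ out of $RV(\exp(\cdot))(1)$, and solve the resulting linear equation in $c$. Your write-up is in fact more explicit than the paper's (which omits the $c$ in the displayed exponential and the $\exp$ inside $RV$, and simply says ``the proof trivially follows''), and your remarks on uniqueness and on integrability at the origin anticipate the paper's subsequent Remark.
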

\begin{proof}
We have that
\begin{align*}
RV\left(\int_0^\cdot K_{\alpha}(u,\cdot)\hat{g}^{(n)}(u)\D u\right)(1)&=\exp\left(\frac{\eta\sqrt{2\alpha+1}}{-\alpha}\text{ }_2F_1\left(-\alpha,-\alpha,1-\alpha,1\right)\right)\\
&\times\int_0^1 \exp\left(\eta\sqrt{2\alpha+1} \sum_{i=0}^n\frac{a_i s^{\alpha+1+i} \text{ }_2F_1\left(i+1,-\alpha,i+2,1\right)}{i+1}\right)\D s
\end{align*}
and the proof trivially follows by solving $y=RV\left(\int_0^\cdot K_{\alpha}(u,\cdot)\hat{g}^{(n)}(u)\D u\right)(1)$.
\end{proof}

\begin{remark}
Notice that Proposition \ref{prop:anchorCloseForm} gives a semi-closed form solution to \eqref{eq:coefanchoring}. Then, we only need to solve
$$\inf_{(a_0,...,a_n)\in\mathbb{R}^{n+1}} \left\{ \frac{1}{2}||\hat{g}^{(n)}||^2: c=c^*\right\}$$
in order to recover a solution for \eqref{eq:minimisation1}.
\end{remark}

\begin{remark}
Notice that $u^{-\alpha-1}\notin L^2[0,1]$, however $u^{-\alpha-1}\ind_{\{u>\eps\}}\in L^2[0,1]$ for all $\eps>0$. Moreover,
\begin{align*}\int_0^s K_\alpha(s,u) u^{-\alpha-1}\ind_{\{u>\eps\}}\D u&=\frac{\text{ }_2F_1(-\alpha,-\alpha,1-\alpha,1)}{-\alpha}-\frac{\varepsilon^{-\alpha}s^{\alpha}\text{ }_2F_1(-\alpha,-\alpha,1-\alpha,\frac{\varepsilon}{t})}{-\alpha}\\&=\frac{\text{ }_2F_1(-\alpha,-\alpha,1-\alpha,1)}{-\alpha}+\mathcal{O}(\eps^{-\alpha}),
\end{align*}
hence for $\eps$ sufficiently small the error is bounded as long as $\alpha\neq 0$. In our applications we find that this method behaves nicely for $\alpha \in(-0.5,-0.05]$. In Figure \ref{fig:TruncatedBasisErrors} we provide precise errors and we observe that the convergence is better for small $\alpha$ (which is rather surprising behaviour, as the converse is true of other approximation schemes when the volatility trajectories become more rough) as well as strikes around the money. Moreover, the truncated basis approach constitutes a 30-fold speed improvement in our numerical tests. As benchmark we consider the standard numerical algorithm introduced in  \eqref{eq:coefanchoring}, with accuracy measured by absolute error.
\end{remark}
\begin{figure}[h]
\centering
\includegraphics[scale=0.6]{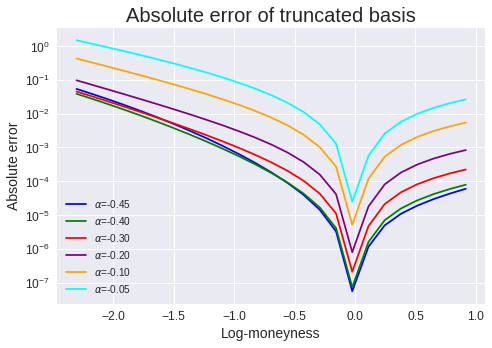}
\caption{Absolute error of the rate function. We consider the truncated basis approach against the standard polynomial basis with $\eta=1.5$, $v_0=0.04$ and different values of $\alpha$.}
\label{fig:TruncatedBasisErrors}
\end{figure}
\subsection{Multi-Factor case}\label{App: MutiCase}

The correlated mixed multi-factor rough Bergomi \eqref{eq:mixedvarianceMultiFactorProcess} model requires a slightly more complex setting. By Corollary \ref{integrated trunc var LDP} the rate function that we aim at is given by following multidimensional optimisation problem:
\begin{equation}\label{eq:minimisationMultiDim}
\hat{\Lambda}^{(v,\Sigma)}(\yy):=\inf_{(f_1,...,f_n)\in L^2[0,1]} \left\{ \frac{1}{2}\sum_{i=1}^n||f_i||^2 : \yy = RV\left(\sum_{i=1}^m \gamma_i \exp\left(\frac{\nu^i}{\eta}\cdot\Sigma_i \mathfrak{f}^{K_{\alpha}}_{.}\right)u\right)(1) \right\},
\end{equation}
where $\mathfrak{f}^{K_{\alpha}}_{.}=\left(\int_0^{\cdot} K_{\alpha}(u,\cdot)f_1(u)\D u,...,\int_0^{\cdot} K_{\alpha}(u,\cdot)f_n(u)\D u\right)$. The approach to solve this problem is similar to that of \eqref{eq:minimisation1}. Nevertheless, in order to solve \eqref{eq:minimisationMultiDim} we shall use a multi-dimensional polynomial basis
$$\left(\hat{f}^{(p)}_1(s),...,\hat{f}^{(p)}_n(s)\right)=\left(\sum_{i=0}^p a^1_i s^i,...,\sum_{i=0}^p a^n_i s^i\right)$$
such that each $\hat{f}^{(p)}_i(s)$ for $i\in\{1,...,n\}$ is dense as $p$ tends to $+\infty$ in  $L^2[0,1]$ by Stone-Weierstrass Theorem. Then we may equivalently solve
\begin{equation}\label{eq:minimisationMultiDim2}
\inf_{(a_0^1,...,a_p^1,...,a_0^n,...,a_p^n)\in \mathbb{R}^{(p+1)n}} \left\{ \frac{1}{2}\sum_{i=1}^n||\hat{f}^{(p)}_i||^2 : \yy = RV\left(\sum_{i=1}^m \gamma_i \exp\left(\frac{\nu^i}{\eta}\cdot\Sigma_i \hat{\mathfrak{f}}^{(K_{\alpha},p)}_{.}\right)u\right)(1) \right\},
\end{equation}
where $\hat{\mathfrak{f}}^{(K_{\alpha},p)}_{.}=\left(\int_0^{\cdot} K_{\alpha}(u,\cdot)\hat{f}^{(p)}_1(u)\D u,...,\int_0^{\cdot} K_{\alpha}(u,\cdot)\hat{f}^{(p)}_n(u)\D u\right)$. Then as $p$ tends to $+\infty$, \eqref{eq:minimisationMultiDim2} will converge to the original problem \eqref{eq:minimisationMultiDim}. In order to numerically accelerate the optimisation problem in \eqref{eq:minimisationMultiDim2}, we anchor coefficients $(a_0^1,....,a_0^n)$ to satisfy the constraint $y=RV(\cdot)(1)$ (same way we did in the one dimensional case), that is
$$\mathfrak{a}^*:=\inf_{(a_0^1,....,a_0^n)\in \mathbb{R}^n} \left\{\left( \yy - RV\left(\sum_{i=1}^m \gamma_i \exp\left(\frac{\nu^i}{\eta}\cdot\Sigma_i \mathfrak{f}^{K_{\alpha}}_{.}\right)u\right)(1)\right)^2 \right\}$$
where $\mathfrak{a}^*=(\mathfrak{a}^{1*}_0,...,\mathfrak{a}^{n*}_0)$ and one may use \eqref{eq:IVpolynomial} and Gauss-Legendre quadrature to efficiently compute $RV(\cdot)(1)$. Then, the constraint will always be satisfied by construction and instead we may solve
\begin{equation}\label{eq:minimisationMultiDim3}
\inf_{(\mathfrak{a}_0^{1*},a_1^1...,a_p^1,...,\mathfrak{a}_0^{n*},a_1^n,...,a_p^n)\in \mathbb{R}^{(p+1)n}} \left\{ \frac{1}{2}\sum_{i=1}^n||\hat{f}^{(p)}||^2 \right\}.
\end{equation}

\section{Exponential Equivalence and Contraction Principle}\label{appendix: contraction princ}

\begin{definition}\label{def:ExpEquiv}
On a metric space $(\mathcal{Y},d)$, two $\mathcal{Y}$-valued sequences ${(X^{\eps})}_{\eps > 0}$ and
${(\widetilde{X}^{\eps})}_{\eps > 0}$ are called exponentially equivalent
(with speed~$h_\eps$) if there exist probability spaces $(\Omega, \mathcal{B}_\eps, \PP_\eps)_{\eps>0}$ such that
for any $\eps>0$, $\PP^{\eps}$ is the joint law of $(\widetilde{X}^{\eps},X^{\varepsilon})$, $\varepsilon>0$ and,
for each $\delta >0$, the set $\left\{ \omega: (\widetilde{X}^{\eps},X^{\eps}) \in \Gamma_{\delta} \right\}$ is $\mathcal{B}_{\eps}$-measurable, and
\begin{equation*}
\limsup_{\eps \downarrow 0} h_\eps \log \PP^{\eps}\left(\Gamma_\delta\right)=- \infty,
\end{equation*}
where $\Gamma_\delta := \left\{ (\tilde{y},y): d(\tilde{y},y) > \delta \right\} \subset \mathcal{Y}\times\mathcal{Y}$.
\end{definition}

\begin{theorem}
Let $\mathcal{X}$ and $\mathcal{Y}$ be topological spaces and $f: \mathcal{X} \rightarrow \mathcal{Y}$ a continuous function. Consider a good rate function $I: \mathcal{X} \rightarrow [0,\infty]$. For each $y\in\mathcal{Y}$, define $I'(y) := \inf \{I(x): x\in\mathcal{X}, y=f(x)\}$. Then, if $I$ controls the LDP associated with a family of probability measures $\{\mu_\eps\}$ on $\mathcal{X}$, the $I'$ controls the LDP associated with the family of probability measures $\{\mu_\eps \circ f^{-1} \}$ on $\mathcal{Y}$ and $I'$ is a good rate function on $\mathcal{Y}$.
\end{theorem}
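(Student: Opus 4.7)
The plan is to prove this by the standard route: verify that $I'$ is a good rate function, then establish the LDP upper and lower bounds by pulling sets back through $f$, using continuity to preserve open/closed sets and using the infimum identity $\inf_{y\in A}I'(y)=\inf_{x\in f^{-1}(A)}I(x)$.

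First, I would verify that $I'$ is a good rate function on $\mathcal{Y}$. The level set $\{y\in\mathcal{Y}: I'(y)\le\alpha\}$ equals $f(\{x\in\mathcal{X}:I(x)\le\alpha\})$: if $I'(y)\le\alpha$ then the infimum is attained (by compactness of $\{I\le\alpha\}$ in $\mathcal{X}$ and lower semicontinuity of $I$) so there is some $x$ with $f(x)=y$ and $I(x)\le\alpha$; conversely the image direction is immediate from the definition of $I'$. Since $\{I\le\alpha\}$ is compact in $\mathcal{X}$ and $f$ is continuous, its image is compact in $\mathcal{Y}$. Hence $I'$ has compact level sets, i.e. is a good rate function (and in particular lower semicontinuous). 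Along the way I would also note that $I'\ge 0$ trivially.

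Next I would record the key set-level identity: for any $A\subseteq\mathcal{Y}$,
\begin{equation*}
\inf_{y\in A} I'(y) \;=\; \inf_{y\in A}\,\inf_{x\in f^{-1}(\{y\})} I(x) \;=\; \inf_{x\in f^{-1}(A)} I(x).
\end{equation*}
Then I would prove the two LDP bounds for the pushforward measures $\nu_\eps:=\mu_\eps\circ f^{-1}$. For a closed set $F\subseteq\mathcal{Y}$, continuity of $f$ gives that $f^{-1}(F)$ is closed in $\mathcal{X}$, and the LDP upper bound for $\{\mu_\eps\}$ yields
\begin{equation*}
\limsup_{\eps\downarrow 0} h_\eps \log \nu_\eps(F) \;=\; \limsup_{\eps\downarrow 0} h_\eps \log \mu_\eps(f^{-1}(F)) \;\le\; -\inf_{x\in f^{-1}(F)} I(x) \;=\; -\inf_{y\in F} I'(y).
\end{equation*}
Similarly, for an open set $G\subseteq\mathcal{Y}$, $f^{-1}(G)$ is open in $\mathcal{X}$, and the LDP lower bound for $\{\mu_\eps\}$ gives the matching liminf bound with $-\inf_{y\in G}I'(y)$ on the right.

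The only step requiring any real care is confirming that the infimum defining $I'(y)$ is attained whenever $I'(y)<\infty$, which is what allows the level-set compactness argument in the first paragraph. This uses that $f^{-1}(\{y\})$ is closed (by continuity of $f$ and the $T_1$ assumption that is implicit in the standard statement) intersected with the compact level set $\{I\le\alpha\}$, on which the lower semicontinuous $I$ attains its infimum. Everything else is a direct transfer of the LDP bounds across $f$; no delicate estimates are needed since the continuity of $f$ already does all the topological work.
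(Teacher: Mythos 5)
Your proof is correct and is essentially the canonical argument for the contraction principle, which the paper does not prove at all but simply cites as \cite[Theorem 4.2.1]{DZ10}. Your structure — establishing goodness of $I'$ via the level-set identity $\{I'\le\alpha\}=f(\{I\le\alpha\})$ and attainment of the infimum, then pulling closed/open sets back through $f$ and using the infimum identity $\inf_{y\in A}I'(y)=\inf_{x\in f^{-1}(A)}I(x)$ — matches the proof in Dembo and Zeitouni line for line. One small remark: in the step showing the infimum defining $I'(y)$ is attained, you invoke $T_1$ so that $\{y\}$ is closed; note that Dembo--Zeitouni actually work in Hausdorff spaces throughout, so this hypothesis is available, and your closed-intersected-with-compact argument then goes through without issue.
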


\section{Proof of $v_0\phi(d_1(x))=x \phi(d_2(x))$}\label{appendix: Proof_A1}
In order to prove
$v_0\phi(d_1(x))=x \phi(d_2(x))$,
we will prove the following equivalent result
$$
\left(d_1 (x) \right)^2 - \left(d_2 (x) \right)^2
= 2 \log \left(\frac{v_0}{x} \right).
$$

\begin{proof}
Recall that $\phi(\cdot)$ is the standard Gaussian probability density function. Using that for $x\ge 0$, $d_1(x)=\frac{\log(v_0)-\log(x)}{\hat{\sigma}(x,T)\sqrt{T}}+\frac{1}{2}\hat{\sigma}(x,T)\sqrt{T}$ and $d_2(x)=d_1(x)- \hat{\sigma}(x,T)\sqrt{T}$, we obtain
$$
\begin{aligned}\displaystyle
\left(d_1 (x) \right)^2 - \left(d_2 (x) \right)^2
&= \left(d_1 (x) \right)^2 - \left( d_1 (x) - \hat{\sigma}(x,T) \sqrt{T} \right)^2, \\
&= 2 d_1 (x) \hat{\sigma}(x,T) \sqrt{T} - T\hat{\sigma}^2 (x,T), \\
&= 2 \left[\log(v_0)-\log(x)+\frac{1}{2}\hat{\sigma}^2(x,T) T \right] - T\hat{\sigma}^2 (x,T), \\
&= 2 \log \left(\frac{v_0}{x} \right).
\end{aligned}
$$
\end{proof}

\end{document}